\setlist[enumerate]{itemsep=0.5pt, topsep=4pt}
\newcommand{\lv}{\vspace{0.3cm}}
\newcommand{\vv}{\vspace{0.5cm}}
\newcommand{\vvv}{\vspace{1cm}}
\newcommand{\ddt}{\left. \frac{d}{dt} \right|_{t=0}}
\newcommand{\R}{\mathbb{R}}
\newcommand{\Z}{\mathbb{Z}}
\newcommand{\dd}[1]{\frac{\partial}{\partial #1}}
\newcommand{\norm}[1]{\left\lVert#1\right\rVert}
\newcommand{\normmC}[4]{ \left\lVert#1\right\rVert_{C, (#2 \to  #3),#4}}
\newcommand{\normmH}[4]{ \left\lVert#1\right\rVert_{H, (#2 \to  #3),#4}}
\newcommand{\normmmC}[3]{ \left\lVert#1\right\rVert_{C, #2,#3}}
\newcommand{\normmmH}[3]{ \left\lVert#1\right\rVert_{H, #2,#3}}
\newcommand{\gsc}{\mathfrak{g_{sc}}}
\newcommand{\scc}{\mathfrak{sc}}
\newcommand{\trksc}{trK_{\mathfrak{g_{sc}}}}
\newcommand{\trko}{trK_{\mathfrak{B}}}
\newcommand{\gammasc}{\gamma_{\mathfrak{g_{sc}}}}
\newcommand{\gammao}{\gamma_{\mathfrak{B}}}
\newcommand{\trkpio}{ \mathrm{tr}_{\partial M} \Pi_{\mathfrak{B}}}
\newcommand{\omegao}{\omega_{\mathfrak{B}}}
\newcommand{\gsptime}{\mathfrak{g}^{(4)}}
\newcommand{\gsp}{\mathfrak{g}}
\newcommand{\ngf}{\mathbf{n}_{\mathfrak{g}}}
\newcommand{\ngg}{\mathbf{n}_{g}}
\newcommand{\bartnik}{\gamma_{\mathfrak{B}}, \frac{1}{2} trK_{\mathfrak{B}}, \mathrm{tr}_{\partial M} \Pi_{\mathfrak{B}}, \omega_{\mathfrak{B}}} 
\newcommand{\bartnikf}{{\gamma_{\mathfrak{B}}}_{f}, \frac{1}{2} {trK_{\mathfrak{B}}}_f, {\mathrm{tr}_{\partial M} \Pi_{\mathfrak{B}}}_f, {\omega_{\mathfrak{B}}}_f} 
\newcommand{\HtoH}[3]{H^{#1}_{#2}\left([r_0,\infty); H^{#3}(S^2)\right)}
\newcommand{\LtoH}[2]{L^{2}_{#1}\left([r_0,\infty); H^{#2}(S^2)\right)}
\newcommand{\CtoH}[3]{C^{#1}_{#2}\left([r_0,\infty); H^{#3}(S^2)\right)}
\newcommand{\LtoHr}[2]{L^{2}_{#1}\left([r_0,\infty); H^{#2}(S^2)\right)}
\newcommand{\CtoHr}[3]{C^{#1}_{#2}\left([r_0,\infty); H^{#3}(S^2)\right)}
\newcommand{\HtoHH}[3]{H^{#1}_{#2}\left([r_0,\infty); \mathcal{H}^{#3}(S^2)\right)}
\newcommand{\AH}[3]{{\mathcal{A}_{H}}^{(#1,#2)}_{#3}}
\newcommand{\AC}[3]{{\mathcal{A}_{C}}^{(#1,#2)}_{#3}}
\newcommand{\AHC}[3]{\mathcal{A}^{(#1,#2)}_{#3}}
\newcommand{\sgsc}{\star_{g_{sc}}}
\newcommand{\dy}{\slashed dY_{m\ell}}
\newcommand{\sdy}{\slashed \star \slashed d Y_{m\ell}}
\newcommand{\dsig}{d\sigma_{\mathbb{S}^2}}
\newcommand{\summ}{\sum_{\ell=0}^{\infty} \sum_{m=-\ell}^{\ell}}
\newcommand{\summm}{\sum_{\ell=1}^{\infty} \sum_{m=-\ell}^{\ell}}
\numberwithin{equation}{section}
\newtheorem{thm}{Theorem}[section]
\newtheorem{lem}[thm]{Lemma}
\newtheorem{prop}[thm]{Proposition} 
\newtheorem*{prop*}{Proposition}
\newtheorem {cor}[thm]{Corollary}  
\newtheorem{conj}[thm]{Conjecture}
\newtheorem*{question*}{Question}
\newtheorem*{mainthm}{Main Theorem}{\bf}{\it}
\newtheorem*{redthm}{Reduction Theorem}{\bf}{\it}
\theoremstyle{definition}
\newtheorem{defn}[thm]{Definition}
\theoremstyle{remark}
\newtheorem{remark}[thm]{Remark}
\title{ Local Well-Posedness for the Bartnik Stationary Extension Problem near Schwarzschild Spheres}
\author{Ahmed Ellithy\thanks{Department of Mathematics, Uppsala University. Email: \texttt{ahmed.ellithy@math.uu.se}}}
\date{}
\begin{document}

\maketitle
\begin{abstract}

We investigate the Bartnik stationary extension conjecture, which arises from the definition of the spacetime Bartnik mass for a compact region in a general initial data set satisfying the dominant energy condition. This conjecture posits the existence and uniqueness (up to isometry) of an asymptotically flat stationary vacuum spacetime containing an initial data set $(M, \gsp, \Pi)$ that realizes prescribed Bartnik boundary data on $\partial M$, consisting of the induced metric, mean curvature, and appropriate components of the spacetime extrinsic curvature $\Pi$.

Building on the analytic framework developed in \cite{ahmed} for the static case, we show that, in a double geodesic gauge, the stationary vacuum Einstein equations reduce to a coupled system comprising elliptic and transport-type equations, with the genuinely stationary contributions encoded in an additional boundary value problem for a $1$-form $\theta$. 

We establish local well-posedness for the Bartnik stationary metric extension problem for Bartnik data sufficiently close to that of any coordinate sphere in any initial data set (possibly non time-symmetric) in Schwarzschild spacetime. This includes spheres arbitrary close to the apparent horizon in the initial data set. A key feature of our framework is that the linearized equations decouple: the equations for the metric and potential reduce to the previously solved static case, while the boundary value problem for $\theta$ is treated independently. We prove solvability of this boundary value problem in the Bochner-measurable function spaces adapted to the coupled system developed in \cite{ahmed}, establishing uniform estimates for the vector spherical harmonic decomposition of $\theta$.
\end{abstract}
\tableofcontents

\section{Introduction}

The concept of quasi-local mass in general relativity has been a central problem in mathematical relativity for several decades (see \cite{penrose-problem}), where one aims to capture the total mass contained within a bounded region of a time-slice of a given spacetime in a way that is both physically meaningful and mathematically rigorous. Among the various proposals is the Bartnik mass, originally introduced by Bartnik in \cite{Bartnik2} in the case of time-symmetric initial data sets (see also \cite{mccormick, anderson2023} for a survey on this topic). In this case, the Bartnik mass is defined for a compact Riemannian 3-manifold $(\Omega, g)$ with boundary and nonnegative scalar curvature. The Bartnik mass of $(\Omega, \gsp)$ is then the infimum of the ADM masses among all admissible asymptotically flat extensions $(M_{\mathrm{ext}}, \gsp_{\mathrm{ext}})$ of $(\Omega,\gsp)$ such that the glued Riemannian manifold $M = M_{\mathrm{ext}} \cup \Omega$ satisfies suitable geometric conditions (the metrics match so that the glued manifold has nonnegative scalar curvature and contains no horizons enclosing $\Omega$). Importantly, Bartnik demonstrated in \cite{Bartnik3} that this infimum depends only on the geometry of the boundary $\partial \Omega$ --- namely, the Bartnik data $(\gamma, H)$, where $\gamma := \gsp|_{\partial \Omega}$ and $H$ is the mean curvature.

A fundamental conjecture due to Bartnik in \cite{Bartnik3}, now called the Bartnik static minimization conjecture, asserts that this infimum is attained by a (geometrically) unique asymptotically flat static vacuum extension --- that is, a time-symmetric initial data set embedded as a hypersurface in a static vacuum spacetime. This conjecture leads naturally to the Bartnik static metric extension conjecture: Given a closed Riemannian surface $(\Sigma, \gamma)$ and a function $H$ on $\Sigma$, there exists a unique asymptotically flat Riemannian manifold $(M,\gsp)$ with boundary $\partial M = \Sigma$ and a positive function $f$ (the static potential), such that $g|_{\partial M} = \gamma$ and the mean curvature is $H$ at the boundary, and such that the pair $(g,f)$ solves the static vacuum Einstein equations:
\begin{equation}
\mathrm{Ric}_{\gsp} = f^{-1} \mathrm{Hess}_{\gsp} f, \qquad \Delta_{\gsp} f = 0
\end{equation}
These equations are equivalent to the statement that $(M,g)$ arises as a time-symmetric initial data set in the static spacetime $(\mathbb{R} \times M, \mathbf{g}^{(4)} = -f^2 dt^2 + \gsp)$, which satisfies the Einstein vacuum equations $\mathrm{Ric}_{\gsptime} = 0$. Thus, the conjecture reduces the Bartnik minimization problem to a boundary value problem for the static vacuum equations with prescribed Bartnik data. The extension conjecture then serves as a test for the Bartnik minimization conjecture.

There has been substantial progress on these conjectures in recent years. Corvino, and independently Anderson–Jauregui (\cite{corvino1}, \cite{anderson-jauregui}), proved that any minimizer of the Bartnik mass must indeed be a static vacuum extension, should it exist. Anderson and Khuri (\cite{anderson-khuri}) showed, via counterexamples, that the most optimistic version of global well-posedness for the boundary value problem fails. Much research has since focused on the local well-posedness of the static extension problem near model data --- particularly, near coordinate spheres in Euclidean space or Schwarzschild manifolds (see, e.g., \cite{anderson-local-exist,AhmedAn,an-huang,an-huang2,an-huang3}). In recent work by the present author (\cite{ahmed}), a new framework was established for analyzing the local extension problem near arbitrary Schwarzschild spheres, including those with mean curvature arbitrarily close to zero. In this approach, the vanishing of the Weyl tensor in three dimensions is exploited to reduce the problem, in a geodesic gauge, to a coupled elliptic–transport system in suitable function spaces. A key innovation was the use of certain Bochner-measurable function spaces, traditionally employed in the study of evolution equations, to handle the coupled nature of the problem (see \cite{ahmed} for a discussion of the differences between this framework and the approaches developed in \cite{an-huang,an-huang2,anderson-khuri}).  More specifically, the spaces we used for the function $u:= \ln f$ are  $\AHC{2}{k}{\delta}(M) := \AC{2}{k}{\delta}(M) \cap \AH{2}{k}{\delta}(M)$ defined by (see definition \ref{spaces for u})

\begin{equation*} u \in \AH{2}{k}{\delta}(M) \quad \Longleftrightarrow \quad \begin{cases} \begin{aligned}  &u \in \LtoHr{\delta}{k}\\&\partial_r u \in \LtoHr{\delta-1}{k-1} \\&\partial_r^2 u \in \LtoHr{\delta-2}{k-2} \end{aligned} \end{cases} \end{equation*}
\begin{equation*} u \in \AC{2}{k}{\delta}(M) \quad \Longleftrightarrow \quad \begin{cases} \begin{aligned}  &u \in \CtoHr{0}{\delta}{k}\\&\partial_r u \in \CtoHr{0}{\delta-1}{k-1} \\&\partial_r^2 u \in \CtoHr{0}{\delta-2}{k-2} \end{aligned} \end{cases} \end{equation*}
where $r_0>0$, $k\geq 2$, and $\delta \in (-1,-\frac{1}{2})$ is a weight introduced appropriately in the norms of the above spaces to control the decay at infinity. In chapter 3 in  \cite{ahmed}, we established the solvability of a certain elliptic problem in the above spaces, which was needed to prove the solvability of the linearized problem. More specifically, defining the operator $Q: u \mapsto (\Delta_g u, u|_{\partial M})$ with respect to a certain asymptotically flat metric $g$ on $M = \R^3 \setminus B_{r_0}$, we demonstrated that

\[Q:\AH{2}{k}{\delta}(M) \to \LtoHr{\delta-2}{k-2}  \times H^{k-1/2}(\partial M) \qquad \text{is an isomorphism}\]
\[Q:\AC{2}{k}{\delta}(M) \to \CtoHr{0}{\delta-2}{k-2}  \times H^{k}(\partial M) \qquad \text{is an isomorphism} \]

For the reader's convenience, we recall the main result of \cite{ahmed} (see the reference for the definition of the spaces and a more precise formulation): 

\begin{thm}
Let $M:= \R^3\setminus B_{r_0}$ where $r_0>2m_0$ and $m_0>0$. Let $\delta \in (-1,-\frac{1}{2}]$ and $k\geq 5$. Let $(M, \gsc)$ be the Riemannian Schwarzschild manifold of mass $m_0$. There exists a neighbourhood $\mathcal{U}$ of the Bartnik data $( \gammasc, \frac{1}{2} \trksc)$ on $\partial M$ of $(M, \gsc)$ in $$ \mathcal{M}^{k+1}(S^2) \times H^k(S^2) $$ such that for every $(\gamma, H) \in \mathcal{U}$, there exists a vacuum static spacetime $(\R\times M, \gsptime)$, unique up to isometry, close to the Schwarzschild spacetime metric on $\R \times M$ in a certain Banach space that contains a time-symmetric (totally geodesic) spacelike hypersurface with Bartnik data $(\gamma, H$) on $\partial M$. More precisely, there exists a unique metric $g$ and a function $u$ on $M$ such that 
\begin{itemize}
\item $g$ can be written globally in the form $g = dr^2+g(r)$, where $r = \mathrm{dist}(\cdot, \partial M)+r_0$ and $g(r)$ is the induced metric on the level sets of $r$. 
\item The spacetime metric $\gsptime := -e^{2u} dt^2 + e^{-2u} g$ on $\mathcal{M}:= \R \times M$ satisfies Einstein's vacuum equations, i.e. $\mathrm{Ric}_{\gsptime} = 0$. 
\item The $\{t=0\}$ hypersurface  $(M, e^{-2u} g)$  in $(\mathcal{M}, \gsptime)$ satisfies the desired Bartnik boundary conditions, i.e. the Bartnik data of $\partial M$ in $(M, e^{-2u} g)$ is $(\gamma, H)$. 
\end{itemize}

\end{thm}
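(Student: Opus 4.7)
The plan is to fix the diffeomorphism freedom by working in the double geodesic gauge $g = dr^2 + g(r)$ with $r(x) = \mathrm{dist}_g(x,\partial M) + r_0$, so that $\partial_r$ is the unit normal to the level sets $\Sigma_r$ of $r$ and $g(r)$ is the induced metric. Setting $u = \ln f$, the static vacuum equations become $\mathrm{Ric}_g = \nabla^2 u + du\otimes du$ together with $\Delta_g u + |du|_g^2 = 0$. In three dimensions the vanishing of the Weyl tensor lets one trade the full Ricci information for the Gauss–Codazzi data of $\Sigma_r$; in the geodesic gauge this reorganizes the system into a scalar elliptic equation for $u$ on $M$ and a first-order radial transport system in $r$ for the pair $(g(r), K(r))$, where $K$ is the second fundamental form of $\Sigma_r$. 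The Bartnik data at $\partial M$ provide the initial conditions $g(r_0) = \gamma$ and $\mathrm{tr}_\gamma K(r_0) = H$, while the trace-free part of $K(r_0)$ is determined by the remaining constraints.

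Next I would cast the problem as the vanishing of a nonlinear map
\[
F : (u, g(\cdot)) \;\longmapsto\; \bigl(\Delta_g u + |du|_g^2,\; \text{evolution residual},\; g(r_0) - \gamma,\; \mathrm{tr}_\gamma K(r_0) - H\bigr),
\]
with $u \in \AHC{2}{k}{\delta}(M)$ and $g - g_{sc}$ in a matching Bochner-measurable space on $[r_0,\infty)$, and a target of the same anisotropic scale $(k, k-1, k-2)$ in $(u, \partial_r u, \partial_r^2 u)$ with weight $\delta \in (-1,-1/2]$. At the Schwarzschild datum $F = 0$, and its linearization $dF$ decouples: the $u$-component reduces, up to lower-order terms that can be absorbed, to the operator $Q : u \mapsto (\Delta_g u, u|_{\partial M})$ whose isomorphism property on both $\AH{2}{k}{\delta}$ and $\AC{2}{k}{\delta}$ is the elliptic result recalled in the introduction; the metric component is a linear transport system in $r$ around Schwarzschild that one solves explicitly by integration in $r$ and controls in the Bochner norms via a Grönwall argument, the weight $\delta > -1$ ensuring that the integrals converge and $\delta \leq -1/2$ guaranteeing sufficient decay for the nonlinear estimates.

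With $dF$ shown to be a Banach-space isomorphism, the inverse/implicit function theorem produces, for every $(\gamma, H)$ in a small neighborhood $\mathcal{U}$ of $(\gammasc, \tfrac{1}{2}\trksc)$ inside $\mathcal{M}^{k+1}(S^2)\times H^k(S^2)$, a unique nearby solution $(u, g)$. The spacetime metric $\gsptime = -e^{2u}dt^2 + e^{-2u} g$ is then automatically a vacuum static metric, and a brief computation in the conformally rescaled slice $(M, e^{-2u}g)$ verifies that its induced metric and mean curvature at $\partial M$ are exactly the prescribed $(\gamma, H)$. Uniqueness up to isometry follows from uniqueness in the implicit function theorem combined with the gauge rigidity of the geodesic normalization.

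The main obstacle, and the technical heart of the argument, is establishing that the nonlinear map $F$ is continuously Fréchet differentiable between the Bochner-measurable spaces $\AHC{2}{k}{\delta}$ and their $g$-analogues. The quadratic terms $|du|_g^2$, $du\otimes du$, and the nonlinear Christoffel expressions in $\mathrm{Ric}_g$ must all be controlled by multiplication estimates that respect both the $r$-weight $\delta$ and the anisotropic regularity drop across $(u, \partial_r u, \partial_r^2 u)$. Obtaining these product estimates requires carefully combining Sobolev embeddings on $S^2$ (available since $k \geq 5$), the precise asymptotics of the Schwarzschild background, and the embedding of $\AC{2}{k}{\delta}$ into pointwise spaces controlling $L^\infty_r$ behavior; once these nonlinear estimates are in place, the rest of the argument is formal.
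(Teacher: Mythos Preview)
This theorem is stated in the present paper as the main result of the author's earlier work \cite{ahmed}; the paper does not reprove it here but rather builds on its framework. Your outline correctly captures the broad strategy used there --- geodesic gauge, reduction via the vanishing of the three-dimensional Weyl tensor to a coupled elliptic--transport system, the isomorphism property of $Q$, and the implicit function theorem --- so at the architectural level you are on the right track.

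However, there is a genuine gap. Your map $F$ and your claim that $dF$ is an isomorphism omit the contracted Codazzi equation on $\partial M$ (equation \eqref{redeq5} in the reduction theorem), which is part of the reduced system and provides the initial data $\hat K(r_0)$ for the transport equation. As the paper explains (section 4.1 here, and section 5 of \cite{ahmed}), this boundary constraint has a finite-dimensional cokernel corresponding to the conformal Killing vector fields on $(S^2,\gamma_{\mathbb{S}^2})$: one cannot in general solve $\cancel{\mathrm{div}}\,\widetilde{\hat K} = \omega$ for arbitrary right-hand sides $\omega$. The actual proof handles this by enlarging the unknowns to include an artificial vector field $X \in \widehat{\mathcal X}^2_\delta(M)\oplus \mathcal X_\infty(M)$, adding a term $\omega(g,X_\infty)$ to the Codazzi component and a conformal Killing operator $\Delta_{g,\mathrm{conf}}(F(X)X)$ to the map, so that the \emph{modified} linearization becomes an isomorphism; one then proves a posteriori that $X=0$ for the nonlinear solution. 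Without this device your linearization is not surjective, and the implicit function theorem does not apply as stated. Your sketch also does not record the Gauss constraint on $\partial M$, which is needed to close the system, though that one carries no obstruction.
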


\vv

\subsection*{The Spacetime Bartnik Mass}

Bartnik later in \cite{Bartnik3} extended his definition of quasi-local mass to the more general case outside time-symmetry, where one now considers compact regions in a general initial data set $(M,\gsp,\Pi)$ satisfying the dominant energy condition (the conditions of the positive mass theorem).

Note that in the time-symmetric case (when $\Pi = 0$), this reduces to the condition that $(M,\gsp)$ has nonnegative scalar curvature. In this broader setting, we consider a region $(\Omega, \gsp, \Pi)$ in an initial data set $(M,\gsp,\Pi)$ satisfying the above equations. The Bartnik mass of $(\Omega,\gsp,\Pi)$ is defined as the infimum of the ADM masses over all initial data sets $(M_{\mathrm{ext}}, \gsp_{\mathrm{ext}}, \Pi_{\mathrm{ext}})$ extending $(\Omega, \gsp, \Pi)$, such that the glued initial data set satisfies the conditions of the positive mass theorem, and contains no apparent horizons enclosing $\Omega$. Analogously to the time-symmetric case, Bartnik showed that this infimum depends only on certain geometric properties at the boundary $\partial \Omega$—the so-called Bartnik data $( \gamma, H, \mathrm{tr}_{\partial \Omega} \Pi, \omega_{\partial \Omega})$, where $\gamma := g|_{\partial \Omega}$, $H$ is the mean curvature, $\mathrm{tr}_{\partial \Omega} \Pi$ is the boundary trace of $\Pi$, and $\omega_{\partial \Omega}$ is the restriction of $\Pi$ to the normal bundle of $\Omega$ at the boundary, defined by the 1-form
\[
\omega_{\partial \Omega} (v) := \Pi(\nu, v),
\]
where $\nu$ is the unit normal vector field on $\partial \Omega$ and $v$ is any vector tangent to $\partial \Omega$.

The natural analog of the Bartnik static minimization conjecture in this more general context is the Bartnik stationary minimization conjecture, which asserts that the infimum is attained by a (geometrically) unique stationary vacuum extension: an asymptotically flat initial data set $(M, \gsp, \Pi)$ embedded as a spacelike hypersurface in a globally hyperbolic stationary vacuum spacetime $(\mathcal{M}, \gsptime)$. A spacetime is called stationary if it admits a timelike Killing vector field. In the so-called quotient formalism (see \cite{anderson-stationary,beig-simon}), the stationary vacuum spacetime metric can be written as
\begin{equation}
\mathbf{g}^{(4)} = -f^2 (dt+\theta)^2 + f^{-2}g,
\end{equation}
where $\dd{t}$ is the Killing vector field, $f > 0$ is a function, $\theta$ is a 1-form, and $g$ is a Riemannian metric on $M$, all independent of $t$. We can assume without loss of generality that the stationary vacuum extension $(M, \gsp,\Pi)$ is simply the $\{t=0\}$ hypersurface in $(\mathcal{M}, \gsptime)$. In particular, $\gsp$ and $\Pi$ can be written in terms of $g$, $f$ and $\theta$ as follows (see \cite{chrusciel}): 

\[ \gsp = f^{-2} g - f^2 \theta \otimes \theta, \qquad \Pi = \frac{N}{2}  \mathcal{L}_{\bar \theta^{\sharp_{\mathfrak{g}}}} \mathfrak{g}\]

where 

\[N = \frac{\sqrt{1-f^4|\theta|_g^2}}{f}, \quad \bar \theta = f^2 \theta\]

The stationary vacuum equations $\mathrm{Ric}_{\mathbf{g}^{(4)}} = 0$ are equivalent to the following system on $M$ (see \cite{anderson-stationary,beig-simon}):
\begin{equation} \label{ric=0}
\begin{aligned}
&\mathrm{Ric}_g = \frac{1}{f}\mathrm{Hess}_g (f) + 2 f^{-4} ( \eta \otimes \eta - |\eta|_g^2 \cdot g), \\
&\Delta_g f = -2 f^{-3} |\eta|_g^2, \\
&d\eta = 0,
\end{aligned}
\end{equation}
where $\eta = -\frac{1}{2} f^4 \star_g d\theta$ is the twist 1-form measuring the failure of the spacetime to be static.

In analogy with the static case, the stationary extension conjecture leads to the following boundary value problem:

\begin{question*}[Bartnik stationary metric extension problem]
\textit{Given Bartnik boundary data $(\gamma, H, \mathrm{tr}_\Sigma \Pi, \omega_\Sigma)$ on a 2-sphere $\Sigma$, does there exist a unique (up to isometry) asymptotically flat stationary vacuum spacetime $(\mathcal{M}, \gsptime)$ containing an initial data set $(M, \gsp, \Pi)$, such that the induced Bartnik data on $\partial M = \Sigma$ matches the prescribed data?}
\end{question*}

Three important examples of stationary vacuum spacetimes are as follows:
\begin{itemize}
\item The Minkowski spacetime $(\mathbb{R}^4, \eta)$, with $\eta = -dt^2 + dx_1^2 + dx_2^2 + dx_3^2$, is the simplest stationary (in fact, static) vacuum solution.
\item The Schwarzschild spacetime, with metric
\[
\gsptime_{sc} = -\left(1 - \frac{2m}{r}\right)dt^2 + \left(1 - \frac{2m}{r}\right)^{-1}dr^2 + r^2(d\theta^2 + \sin^2\theta\, d\phi^2),
\]
is a static vacuum solution and provides the main model for the Bartnik extension problem near Schwarzschild spheres.
\item The Kerr spacetime, with metric
\[
\gsptime_{Kerr} = -\left(1 - \frac{2mr}{\Sigma}\right)dt^2 - \frac{4mar \sin^2\theta}{\Sigma} dt d\phi + \frac{\Sigma}{\Delta} dr^2 + \Sigma d\theta^2 + \left(r^2 + a^2 + \frac{2ma^2 r \sin^2\theta}{\Sigma}\right)\sin^2\theta d\phi^2,
\]
where $\Sigma = r^2 + a^2\cos^2\theta$ and $\Delta = r^2 - 2mr + a^2$, is a non-static (truly stationary) vacuum solution.
\end{itemize}

The stationary extension problem is significantly less understood than its static counterpart. An, in \cite{an-elliptic,an-elliptic2}, established the ellipticity of the stationary vacuum equations with natural boundary data in harmonic gauge. Recent work by Huang and Lee ( see \cite{huang-lee}) addresses conditions under which the Bartnik mass infimum is attained by a stationary extension. The stationary problem is substantially more intricate, both analytically (due to the nontrivial coupling between the metric, the potential, and the twist 1-form) and geometrically (due to the gauge freedom in choosing $\theta$ and the increased complexity of the boundary data).

\subsection*{Summary of the Present Work}

In this paper, we extend the analysis of the Bartnik extension problem beyond time symmetry, establishing the first local well-posedness result for the Bartnik stationary metric extension problem near \emph{every} coordinate Schwarzschild sphere in a time-symmetric slice of Schwarzschild spacetime. Moreover, by identifying how the Bartnik data transform under Lorentz boosts of the spacetime normal bundle of $\partial M$, we deduce an immediate corollary giving the same local well-posedness for Bartnik data prescribed on any spacelike graph hypersurface $\{t=f(x)\}$ in Schwarzschild with $f|_{\partial M} = 0$. Our approach builds on the framework developed in \cite{ahmed} for the static case, adapting it to handle the additional complexities introduced by the twist 1-form that characterizes genuinely stationary solutions.

The stationary vacuum Einstein equations are significantly more intricate than their static counterpart due to the coupling between the metric $g$, the function $u$, and the 1-form $\theta$ appearing in the spacetime metric: 
\[\gsptime = -e^{2u} (dt+\theta)^2 + e^{-2u} g\]

A crucial insight is that, despite this coupling, the linearized equations in a certain gauge choice exhibit a decoupling property that we exploit in our analysis. 

\subsubsection*{Choice of Gauges}

The stationary extension problem has inherent gauge freedom that must be fixed to ensure well-posedness. Given Bartnik data on the boundary, there are multiple ways to represent a stationary extension:

\begin{itemize}
\item \textbf{Spatial gauge freedom}: The choice of coordinates on $M$, which affects the form of the metric $g$.
\item \textbf{Time-slice gauge freedom}: The choice of initial data set within the stationary spacetime, which corresponds to adding an exact 1-form $df$ to $\theta$ (see Lemmas \ref{f-prop} and \ref{f=0-lem}).
\end{itemize}

\noindent We fix these gauge freedoms as follows:

\vv

\textbf{The $g$-geodesic gauge:} Following the static case addressed in \cite{ahmed},  the metric $g$ is globally written in geodesic coordinates:
\[g = dr^2 + g(r) \]
where $r = \mathrm{dist}(\cdot, \partial M) + r_0$ and $g(r)$ is a one-parameter family of metrics on $S^2$. This gauge together with the vanishing of the Weyl tensor in 3 dimensions reduces the vacuum equations to a coupled elliptic-transport system, where the evolution of the geometry is governed by ODEs in the radial direction. For the Schwarzschild solution, the metric $g$, denoted by $g_{sc}$ becomes 
\[g_{sc} = f_{sc}^2 \gsc = dr^2 + r(r-2m_0) \gamma_{\mathbb{S}^2}\]

\textbf{The $\theta$-geodesic gauge:} The gauge for $\theta$ is a key new ingredient in the stationary setting. Unlike the metric, the 1-form $\theta$ retains a residual gauge freedom corresponding to redefining the time coordinate by $t \mapsto t+f(x)$ for functions $f$ vanishing (together with $df$) on $\partial M$ to preserve the Bartnik data. To uniquely fix $\theta$, we impose the $\theta$-geodesic gauge:

\[\Delta_{sc} (\theta(\dd{r})) = 0\]
where $\Delta_{sc}$ is the Laplacian with respect to the conformal Schwarzschild metric. This gauge condition is motivated by the structure of the linearized problem; it ensures that the linearized equations give a well-posed elliptic boundary value problem for $\theta$ in the function spaces used. Geometrically, this condition selects a canonical time slice within the family of initial data sets having the same Bartnik data.

\subsubsection*{Main Result and Key Insights}

\vv

The main theorem of this paper is as follows (refer to section \ref{functionspaces-sec} for the definition of the spaces and section \ref{mainthm-sec} for the precise formulation of the main theorem):

\begin{mainthm}
Let $M:= \R^3\setminus B_{r_0}$ where $r_0>2m_0$ and $m_0>0$. Let $\delta \in (-1,-\frac{1}{2}]$ and $k\geq 5$. Let $(M, \gsc, \Pi_{\scc})$ be any $H^k$ initial data set (possibly non time-symmetric) in the Schwarzschild spacetime $(\R \times M, \gsptime_{sc})$ of mass $m_0$. Denote by $( \gammasc, \frac{1}{2} \trksc, tr_{\partial M}\Pi_{\gsc}, {\omega_{\partial M}}_{\gsc})$ the Bartnik data of $\partial M$ in $(M, \gsc, \Pi_{\scc})$. There exists a neighbourhood $\mathcal{U}$ of $( \gammasc, \frac{1}{2} \trksc, tr_{\partial M}\Pi_{\gsc}, {\omega_{\partial M}}_{\gsc})$ in $$ \mathcal{M}^{k+1}(S^2) \times H^k(S^2) \times H^k(S^2) \times \Omega^k(S^2) $$ such that for every $(\gamma, H, \mathrm{tr}_{\partial M} \Pi, \omega_{\partial M}) \in \mathcal{U}$, there exists a vacuum stationary spacetime $(\R\times M, \gsptime)$, unique up to isometry, close to the Schwarzschild spacetime metric on $\R \times M$ in a certain Banach space that contains an initial data set $(M, \gsp, \Pi)$ with Bartnik data $(\gamma, H, \mathrm{tr}_{\partial M} \Pi, \omega_{\partial M})$ on $\partial M$. More precisely, there exists a unique metric $g$, 1-form $\theta$, and a function $u$ on $M$ such that 
\begin{itemize}
\item $g$ can be written globally in the form $g = dr^2+g(r)$, where $r = \mathrm{dist}(\cdot, \partial M)+r_0$ and $g(r)$ is the induced metric on the level sets of $r$. 
\item $\theta$ satisfies $\Delta_{sc} (\theta(\dd{r})) = 0$ on $M$. 
\item The spacetime metric $\gsptime := -e^{2u} (dt+\theta)^2 + e^{-2u} g$ on $\mathcal{M}:= \R \times M$ satisfies Einstein's vacuum equations, i.e. $\mathrm{Ric}_{\gsptime} = 0$. 
\item The initial data set $(M, \gsp, \Pi)$ formed by the $\{t=0\}$ hypersurface in $(\mathcal{M}, \gsptime)$ satisfies the desired Bartnik boundary conditions, i.e. the Bartnik data of $\partial M$ in $(M, \gsp, \Pi)$ is $(\gamma, H, \mathrm{tr}_{\partial M} \Pi, \omega_{\partial M})$. 
\end{itemize}

\end{mainthm}

A key observation is that the linearized equations decouple; letting $(\tilde g, \tilde u, \tilde X, \tilde \theta)$ be the linearized quantities of $(g,u,X, \theta)$ (where $X$ is an artificial vector field introduced to handle apparent obstructions as in \cite{ahmed}), the equations for $(\tilde g, \tilde u, \tilde X)$ are identical to those in the static case and decouple from the equations for $\tilde \theta$, which satisfies an elliptic boundary value problem. The $\tilde \theta$ problem takes the form

\begin{equation} 
\begin{cases}
4 \partial_r u_{sc}  \left(  dr \wedge \sgsc d\tilde \theta \right) + d\sgsc d\tilde \theta = d\sigma, & \text{in $M$}\\
 \cancel{div} (\tilde \theta^T) -2(r_0-2m_0) \tilde \theta_r = h, & \text{on $\partial M$}\\
\left( \mathcal{L}_{\dd{r}} \tilde \theta \right)^T +\slashed d \tilde \theta_r - \frac{2(r_0-3m_0)}{r_0(r_0-2m_0)} \tilde \theta^T = \Lambda,& \text{on $\partial M$}
\end{cases}
\end{equation}

together with the gauge condition 

\begin{equation} 
\begin{cases}
\Delta_{sc}( \tilde \theta(\dd{r})) = 0, & \text{in $M$}
\end{cases}
\end{equation}

for a 1-form $\sigma$ on $M$, a function $h$ and a 1-form $\Lambda$  on $\partial M$.

 We prove that this boundary value problem for $\tilde \theta$ is well-posed in the spaces $\mathcal{A}^{(t,k)}_\delta(M)$ introduced in \cite{ahmed}. The analysis requires careful treatment of the spherical harmonic decomposition and estimates uniform in the angular momentum $\ell$.

\vv

This work represents a step toward understanding the Bartnik conjecture outside time symmetry and opens the door to further investigations of the stationary extension problem and its implications for quasi-local mass in general relativity.

\subsection*{Organization of The Paper}

In section \ref{stationary-sec}, we provide the necessary background on stationary vacuum spacetimes, the quotient formalism, and the precise formulation of Bartnik boundary data in the stationary setting. We also establish notation and review the relevant analytic framework. In section \ref{functionspaces-sec}, we introduce the weighted Sobolev and Bochner-measurable function spaces that will be used throughout the analysis, and recall some key properties and embedding results. In section \ref{mainthm-sec}, we state the main local well-posedness theorem near coordinate Schwarzschild spheres in $\{ t=0\}$ Schwarzschild slice, together with the precise function space setting. We also prove a universal Lorentz-boost transformation law for the Bartnik data under a change of hypersurface through $\partial M$, and deduce as a corollary the corresponding local well-poseness statement near Schwarzschild spheres on any spacelike graph hypersurface in Schwarzschild spacetime. In section \ref{reduction-sec}, we reformulate the stationary vacuum Einstein equations in a double geodesic gauge, reducing the system to a coupled elliptic–transport–boundary value problem suitable for analysis in our chosen function spaces. In sections \ref{proof-sec}, we present the proof of the main theorem, which proceeds via the implicit function theorem on Banach manifolds. This section includes the analysis of the linearized system, the decoupling of the stationary contributions, and uniform estimates for the associated boundary value problems. In particular, we analyze in \ref{bvp-sec} the new boundary value problem for the 1-form $\tilde \theta$, establish solvability in the adapted function spaces, and derive the necessary $\ell$-uniform estimates for the spherical harmonic components.

\subsubsection*{Acknowledgements} The author is grateful to Spyros Alexakis for his invaluable insights during many discussions, assistance in verifying the mathematics and thoughtful advice throughout the project. The author is also grateful to Stephen McCormick for his helpful discussions and recommendations. Last but not least, the author acknowledges support from Knut and Alice Wallenberg Foundation under grant KAW 2022.0285.

\section{Preliminaries} \label{pre-sec}

\subsection{Stationary Vacuum Solutions} \label{stationary-sec}

\vv

Let $(\mathcal{M}, \gsptime)$ be a globally hyperbolic spacetime with boundary, with Cauchy hypersurface $M=\R^3\setminus B_{r_0}$, where $r_0 >2m_0$, $m_0>0$. The spacetime $\mathcal{M}$ is then diffeomorphic to $\R \times M$ and the spacetime metric can be written in the coordinates $(t,x^1, x^2, x^3)$ as follows: 

\begin{equation}
\gsptime = -e^{2u} dt^2 + Y^idx_i dt + \gsp_{ij} dx^i dx^j 
\end{equation}
where $\gsp$ is the induced Riemannian metric on the $\{t=0\}$ Cauchy hypersurface, $u$ is a function on $\mathcal{M}$ and $Y$ is a vector field on $\mathcal{M}$ tangent to the level sets of $t$. The function $e^{2u}$ and the vector field $Y$ are called the lapse function and shift vector, respectively, in the $3+1$ formalism of general relativity. 

\vv

Suppose in addition that the spacetime $\mathcal{M}$ is stationary, i.e. it admits a timelike Killing vector field. Assuming without loss of generality that $\dd{t}$ is the timelike Killing vector field, we have that 

\begin{equation}
\mathcal{L}_{\dd{t}} u = 0, \quad \mathcal{L}_{\dd{t}} Y = 0, \quad \mathcal{L}_{\dd{t}} \gsp = 0
\end{equation} 

That is, $u(t,x) = u(x)$, $Y(t,x) = Y(x)$ and $\gsp_{ij}(t,x) = \gsp_{ij}(x)$, for $t\in \R$ and $x\in M$. Note that $\gsptime$ is static if and only if $Y = 0$ on $M$. 

\vv
\begin{defn}
We say that a stationary spacetime $(\mathcal{M}, \gsptime)$ is asymptotically flat of order $\eta>0$ if there exist coordinates $(x^1,x^2,x^3)$ for $M$ near infinity such that
  \begin{itemize}
 \item $\mathfrak{g}_{ij} -\delta_{ij} = \mathcal{O}_2(|x|^{-\eta})$
\item $Y^i = \mathcal{O}_2(|x|^{-\eta})$
\item $u = \mathcal{O}_2(|x|^{-\eta})$
 \end{itemize}
 where $|x|:=\sqrt{|x^1|^2+|x^2|^2+|x^3|^2}$
\end{defn}
Here we use the notation $f = \mathcal{O}_k(|x|^{-\eta})$ to mean that $|D^j f| \leq C_j |x|^{-\eta-j}$ for all $0 \leq j \leq k$ and some constants $C_j > 0$, where $D$ is the covariant derivative with respect to the Euclidean metric on $M$.
\vv
\begin{defn}
An initial data set on $M$ is a triple $(M, \gsp, \Pi)$, where $\gsp$ is a Riemannian metric on $M$ and $\Pi$ is a $(0,2)$ symmetric tensor on $M$. \\
We say that an initial data set lives in a spacetime $(\mathcal{M}, \gsptime)$ if there exists an isometric embedding of $(M,\gsp)$ in $(\mathcal{M}, \gsptime)$ such that $\Pi$ coincides with the second fundamental form of $M$ in $\mathcal{M}$. \\
We say that an initial data set $(M,\gsp,\Pi)$ is asymptotically flat of order $\eta>0$ if there exists coordinates $(x^1,x^2,x^3)$ for $M$ near infinity such that
  \begin{itemize}
 \item $\mathfrak{g}_{ij} -\delta_{ij} = \mathcal{O}_2(|x|^{-\eta})$
\item $\Pi_{ij} = \mathcal{O}_1(|x|^{-\eta-1})$
\end{itemize}
 where $|x|:=\sqrt{|x^1|^2+|x^2|^2+|x^3|^2}$
\end{defn}

It is straightforward to check that if a stationary spacetime $\mathcal{M}$ is asymptotically flat, then $(M, \gsp, \Pi)$, where $\gsp$ and $\Pi$ are the induced metric and second fundamental form of the $\{t = 0\}$ hypersurface, is an asymptotically flat initial data set of the same order.

\vv

We now introduce the quotient formalism for stationary spacetimes, which provides a more canonical description by factoring out the time direction. This perspective is discussed in \cite{beig-simon, anderson-stationary, kramer}.

Let $(\mathcal{M}, \gsptime)$ be a stationary spacetime with timelike Killing vector field $\dd{t}$. Define the following equivalence relation on $\mathcal{M}$: 

\[ \text{For $p,q\in \mathcal{M}$,  $p\sim q$ if there exists an integral curve of $\dd{t}$ connecting $p$ and $q$ } \]

The quotient manifold $\mathcal{M} / \sim$, also called the orbit space, can be identified with $M$ via the natural diffeomorphism from $M$ to $\mathcal{M} / \sim$ defined by
\[ (x_1,x_2,x_3) \mapsto [(0,x_1,x_2,x_3)]\]

For any vector field $X$ on $M$, we can define the unique vector field $\bar X$ on $\mathcal{M}$ orthogonal to $\dd{t}$ and satisfying $\mathcal{L}_{\dd{t}}\bar X = 0$. In other words, $\bar X$ is the unique time-independent lift of $X$ to $\mathcal{M}$ that is orthogonal to the Killing direction $\dd{t}$; in particular, it is given by
\[\bar X = X - \theta(X)\dd{t} \]

 We will call $\bar X$ the horizontal lift of $X$ to $\mathcal{M}$. Denote by $\mathcal{X}_{\dd{t}}(\mathcal{M})$ the space of all vector fields $\bar X$ on $\mathcal{M}$ orthogonal to $\dd{t}$ and satisfying $\mathcal{L}_{\dd{t}}(\bar X) = 0$. Denoting the quotient map by $\pi : \mathcal{M} \to M$, we have 

\begin{equation}
\pi_{*} (\bar X) = X
\end{equation}

In fact, $\pi_{*}$ induces a module-isomorphism from $\mathcal{X}_{\dd{t}}(\mathcal{M})$ to the space $\mathcal{X}(M)$ of vector fields on $M$.  

 \begin{defn} The quotient metric $g^{\mathcal{Q}}$ on $M$ is defined by 

\[\text{For vector fields $X,Y$ on $M$,  } \qquad  g^{\mathcal{Q}}(X,Y) := \gsptime(\bar X, \bar Y) \] 
where $\bar X, \bar Y$ are the horizontal lifts of $X,Y$ to $\mathcal{M}$. In other words, $g^{\mathcal{Q}}$ is the induced Riemannian metric on the horizontal distribution orthogonal to the orbits of the Killing field $\dd{t}$.
\end{defn}

 For the simplicity of future computations, we will write the quotient metric in the following way 
\[ g^{\mathcal{Q}} = e^{-2u} g\]

for some Riemannian metric $g$.

\vv

The spacetime metric $\gsptime$ in the quotient formalism can then be expressed in terms of the quotient metric in the following way (see \cite{anderson-stationary, beig-simon, kramer}): 

\begin{equation} \label{quotient metric}
\gsptime = -e^{2u} (dt+\theta)^2 + e^{-2u} g
\end{equation}

where $\theta$ is a 1-form on $M$ that is related to $u$ and $Y$ by 

\begin{equation}
\theta = -e^{-2u} Y^{\flat_{\gsp}}
\end{equation}
Note that $\theta$ and $g$ in equation \eqref{quotient metric} are extended to $\mathcal{M}$ to be independent of $t$ and trivial on the normal bundle of $\{0\} \times M$. 

\vv

The induced metric $\gsp$ on the $\{t=0\}$ hypersurface and the metric $g$ are related by 

\begin{equation} \label{gsp-g}
\gsp = e^{-2u}g - e^{2u} \theta \otimes \theta
\end{equation}
Note that for the hypersurface to be spacelike, one requires that 
\[1-e^{4u} |\theta|_g^2 >0\]
It follows that a spacetime $\mathcal{M}$ is asymptotically flat of order $\eta>0$ if and only if there exists a coordinate system $(x_1,x_2,x_3)$ of $M$ near infinity such that 
\begin{itemize}
 \item $g_{ij} -\delta_{ij} = \mathcal{O}_2(|x|^{-\eta})$
\item $\theta_i = \mathcal{O}_2(|x|^{-\eta})$
\item $u = \mathcal{O}_2(|x|^{-\eta})$
\end{itemize}

This formalism has two advantages: 
\begin{itemize}
\item The stationary vacuum equations in terms of $u$, $g$ and $\theta$ take a simpler form compared to the formulation using $u$, $\gsp$ and $Y$. (compare \eqref{ric=0} with \eqref{ric=0-1})
\item While a given stationary spacetime contains many distinct initial data sets corresponding to different time slices, they all share the same quotient metric $g$, up to isometry. Hence, the quotient formalism is more canonical. 
\end{itemize}

\begin{defn} \label{stationary-def}
Let $g$, $u$, $\theta$ be a Riemannian metric, a function, and a 1-form on $M$, respectively. We will refer to the quadruple $(M,g,u,\theta)^{(4)}$ as the stationary spacetime $\mathcal{M} := \R \times M$ with the spacetime metric $\gsptime$ given by 
\begin{equation}
\gsptime = -e^{2u} (dt+\theta)^2 + e^{-2u} g
\end{equation} 
where $u$, $\theta$ and $g$ are extended to $\mathcal{M}$ to be independent of $t$ and trivial on the normal bundle of $\{0\} \times M$. 
\end{defn}

Given a stationary spacetime $\mathcal{M}$ defined by $(M,g,u,\theta)^{(4)}$, the induced metric $\gsp$ and the second fundamental form $\Pi$ of the $\{t=0\}$ hypersurface in $\mathcal{M}$ define an initial data set $(M,\gsp, \Pi)$ living in $\mathcal{M}$. The second fundamental form $\Pi$ can be computed to be 

\begin{equation}
\Pi = \frac{N}{2}  \mathcal{L}_{\bar \theta^{\sharp_{\mathfrak{g}}}} \mathfrak{g}
\end{equation}

where 
\begin{equation}
N = \frac{\sqrt{1-e^{4u}|\theta|^2_g}}{e^u}, \quad \bar \theta = e^{2u} \theta
\end{equation}

Different initial data sets embedded in the same stationary spacetime $\mathcal{M}$ correspond to, up to isometry, different choices of spacelike hypersurfaces in the given spacetime, which can be described by graphs of functions $f : M \to \mathbb{R}$. Given a function $f:M \to \R$, we will denote the graph of $f$ in $\mathcal{M}$ by $M_f := \{ (f(x), x) \in \mathcal{M} : x\in M \}$. Similarly, denote by $\gsp_f$ and $\Pi_f$ the induced metric and second fundamental form of $M_f$ in $\mathcal{M}$. We identify $M_f$ with $M$ via the map $(f(x), x_1,x_2,x_3) \mapsto (x_1, x_2,x_3)$; in particular, after pushing forward $\gsp_f$ and $\Pi_f$ along that map, we have that $(M, \gsp_f, \Pi_f)$ is an initial data set on $M$. 

\vv

 The following lemma characterizes all initial date sets living in a given stationary spacetime $\mathcal{M}$.

\begin{lem}\label{f-prop}
Let $\mathcal{M}$ be a stationary spacetime defined by $(M,g,u,\theta)^{(4)}$ that is asymptotically flat of order $\eta>0$. Then:

\begin{enumerate}[label=(\alph*)]
\item Every asymptotically flat initial data set $(M, \gsp, \Pi)$ of order $\eta>0$ living in $\mathcal{M}$ can be realized as $(M_f, \gsp_f, \Pi_f)$  in $\mathcal{M}$ for some function $f \in \mathcal{C}^2(M)$ satisfying $df = \mathcal{O}_1(|x|^{-\eta})$. 
\item Given $f \in \mathcal{C}^2(M)$ satisfying $df = \mathcal{O}_1(|x|^{-\eta})$ and the initial data set $(M, \gsp_f, \Pi_f)$ in $\mathcal{M}$, define the function $t':= t - f(x)$ on $\mathcal{M}$. Then the spacetime metric $\gsptime$ can be written in the coordinates $(t',x_1,x_2,x_3)$ in the following form: 
\[ \gsptime = -e^{2u} (dt'+\theta +df)^2 + e^{-2u} g \]
In particular, the initial data set $(M, \gsp_f, \Pi_f)$ can be realized as the $\{t=0\}$ initial data set in a stationary spacetime isometric to $(\mathcal{M}, \gsptime)$ --- namely the one  defined by $(M,g,u,\theta+df)^{(4)}$.

\end{enumerate}
Furthermore, the above defines the following one-to-one correspondences: 

\[ \left\{ \begin{aligned} & \text{AF Initial data sets $(M, \gsp, \Pi)$}\\ & \text{of order $\eta>0$}  \\ &\text{living in $(\mathcal{M}, \gsptime)$}\\ &\text{and preserving $\partial M$} \end{aligned}  \right\}    \longleftrightarrow  \left\{ f\in \mathcal{C}^2(M) \,\middle\vert \,\,\, \begin{aligned} & df=\mathcal{O}_1(|x|^{-\eta}),\\ & f|_{\partial M} = 0 \end{aligned} \right\}  \longleftrightarrow \left\{ \theta+ df \mid f \in \mathcal{C}^2(M), df=\mathcal{O}_1(|x|^{-\eta})\}  \right\}  \]

\end{lem}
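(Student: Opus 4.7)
The plan is to exploit the product structure $\mathcal{M} \cong \R \times M$ with the Killing field $\dd{t}$ tangent to the $\R$-factor, so that every spacelike hypersurface transverse to $\dd{t}$ arises as the graph of a function $f : M \to \R$. The entire lemma is then essentially the bookkeeping of how the initial data set $(M, \gsp_f, \Pi_f)$, the twist 1-form, and the asymptotic-flatness decay transform under the vertical reparametrisation $t \mapsto t - f(x)$.

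For part (a), I would take the isometric embedding $\iota : (M, \gsp) \hookrightarrow (\mathcal{M}, \gsptime)$ realising the initial data set and decompose it as $\iota = (\tau, \varphi)$ relative to the product structure. Since $\iota(M)$ is spacelike and $\dd{t}$ is timelike, $d\varphi$ is a fibrewise isomorphism, so $\varphi$ is a local diffeomorphism; asymptotic flatness forces $\varphi$ to be asymptotic to the identity near infinity and hence proper, which upgrades it to a global diffeomorphism of $M$. Setting $f := \tau \circ \varphi^{-1}$ identifies $\iota(M)$ with the graph $M_f$, giving the desired realisation as $(M_f, \gsp_f, \Pi_f)$. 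The decay of $df$ follows by pulling back $\gsptime$ along the parametrisation $x \mapsto (f(x), x)$ to obtain
\[
\gsp_f = e^{-2u} g - e^{2u}(df + \theta) \otimes (df + \theta),
\]
together with the analogous Lie-derivative expression for $\Pi_f$ in terms of $\theta + df$. The AF decay of $g$, $u$, $\theta$ together with $\gsp_{f,ij} - \delta_{ij} = \mathcal{O}_2(|x|^{-\eta})$ gives $|df + \theta| = \mathcal{O}(|x|^{-\eta/2})$ as a first estimate; this is then bootstrapped to $df + \theta = \mathcal{O}_1(|x|^{-\eta})$ using the faster decay of $\Pi_f$, which controls derivatives of $\theta + df$. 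Subtracting $\theta = \mathcal{O}_2(|x|^{-\eta})$ yields $df = \mathcal{O}_1(|x|^{-\eta})$.

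Part (b) is a direct substitution: $t' = t - f(x)$ gives $dt = dt' + df$, and plugging this into $\gsptime = -e^{2u}(dt+\theta)^2 + e^{-2u}g$ produces the claimed formula with twist 1-form $\theta + df$. The hypersurface $\{t' = 0\}$ in the new coordinates is exactly the graph $M_f$ in the original coordinates, so $(M,\gsp_f,\Pi_f)$ is realised as the canonical $\{t=0\}$ slice of the stationary spacetime $(M, g, u, \theta+df)^{(4)}$, and the coordinate change $(t,x) \mapsto (t-f(x), x)$ is itself the required isometry between the two spacetimes. The one-to-one correspondences then drop out: part (a) supplies the first map, part (b) supplies its inverse, and $f \mapsto \theta + df$ is translation by the fixed 1-form $\theta$. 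The only step with non-trivial content is the asymptotic bootstrap of $df$ in part (a); all the remaining manipulations are formal.
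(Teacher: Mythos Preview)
Your proposal is correct and follows essentially the same approach as the paper: the paper's proof is a one-line sketch invoking the diffeomorphism $\Phi_f(t,x) = (t-f(x),x)$ and deferring all details to a reference, while you supply those details explicitly, correctly identifying the decay bootstrap for $df$ in part (a) as the only substantive step. One small point to tighten: your bootstrap can be made cleaner by noting that $d(df+\theta) = d\theta = \mathcal{O}_1(|x|^{-\eta-1})$ is already known, so combining this antisymmetric control with the symmetric control from $\Pi_f$ gives full control of $\nabla(df+\theta)$, after which radial integration from infinity yields $df+\theta = \mathcal{O}(|x|^{-\eta})$ directly.
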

\begin{proof}
The proof follows from direct computation using the transformation properties of the metric under the diffeomorphism $\Phi_f: \mathcal{M} \to \mathcal{M}$ defined by 
\[\Phi_f(t,x_1,x_2,x_3) = (t-f(x), x_1, x_2, x_3)\]. 
\end{proof}

\vv

\begin{figure}[H]
    \centering
    \includegraphics[scale=0.13]{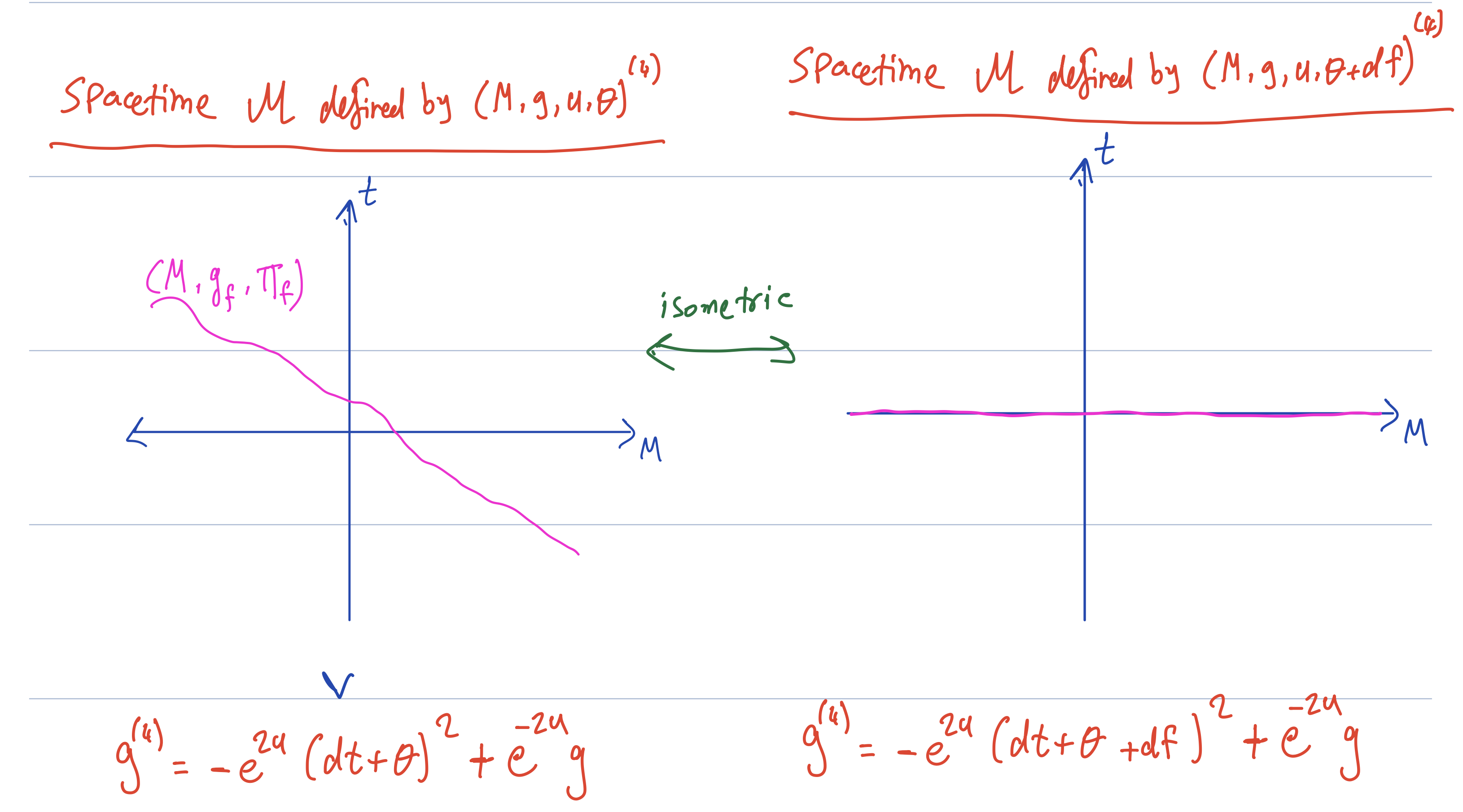}
    \caption{An illustration of an initial data set $(M,\gsp_f, \Pi_f)$ in the spacetime defined by $(M,g,u,\theta)^{(4)}$ is equivalent to the $\{t=0\}$ initial data set in the spacetime defined by $(M,g,u,\theta+df)^{(4)}$.   }
    \label{fig:domain}
\end{figure}

\vv

We will now define the Bartnik data of a given initial data set. 

\begin{defn} \label{bartnik-data-def}
Let $(M,\gsp, \Pi)$ be an initial data set on $M$. The Bartnik boundary data on $\partial M$ is $(\bartnik)$ where 
\begin{itemize}
\item $\gammao$ is the induced metric on $\partial M$ in $(M, \gsp)$. 
\item $\trko$ is the trace of the second fundamental form $K_{\mathfrak{g}}$ of $\partial M$ in $(M, \gsp)$. In particular, $\frac{1}{2} \trko$ is the mean curvature of $\partial M$ in $(M, \gsp)$. 
\item $\trkpio$ is the trace of the restriction $\Pi|_{\partial M}$ of the spacetime second fundamental form $\Pi$. 
\item $\omegao$ is the connection 1-form in the spacetime normal bundle of $\partial M$, defined by $\omegao(\cdot) = \Pi(\ngf, \cdot)$, where $\ngf$ is the inward unit normal on $\partial M$ in $(M,\gsp)$. 
\end{itemize}
\end{defn}

We can compute $\trkpio$ and $\omegao$ to be

\begin{equation}
\mathrm{tr}_{\partial M} \Pi = N \mathrm{div}_{\mathfrak{\gamma}} (\bar {\theta^{\sharp}}^T) - N\Big(\mathrm{tr}_{\partial M} K_{\mathfrak{g}}\Big) \bar \theta(\ngf) 
\end{equation}
\begin{equation}
\omegao(\cdot) =  \frac{N}{2}  \mathcal{L}_{\bar \theta^{\sharp_{\mathfrak{g}}}} \mathfrak{g} (\ngf, \cdot)
\end{equation}

\vv

Note that given a function $f$ on $M$, the initial data sets $(M, \gsp_f, \Pi_f)$ and $(M,\gsp, \Pi)$ living in the same stationary spacetime $\mathcal{M}$, might not necessarily have the same Bartnik boundary data on $\partial M$. The next lemma gives a necessary and sufficient condition on $f$ for the Bartnik boundary data to be preserved. 

\begin{lem} \label{f=0-lem}
Let $\mathcal{M}$ be a stationary spacetime. Let $f$ be any function on $M$. Then the Bartnik boundary data of the initial data sets $(M,\gsp, \Pi)$ and $(M, \gsp_f, \Pi_f)$ is preserved if and only if 
\begin{equation}
f_{\partial M} = 0, \quad df|_{\partial M} = 0
\end{equation}
\end{lem}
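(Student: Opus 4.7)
The proof proceeds from Lemma \ref{f-prop}(b), which identifies the initial data set $(M,\gsp_f,\Pi_f)$ with the $\{t'=0\}$ slice of the stationary spacetime $(M,g,u,\theta+df)^{(4)}$. Its Bartnik data at $\partial M$ is therefore computed by the same formulas defining $(\bartnik)$, but with $\theta$ replaced by $\theta+df$. The strategy is to verify, component by component, that this replacement leaves each of the four pieces invariant at $\partial M$ precisely under the hypotheses of the lemma.

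For the $(\Longleftarrow)$ direction, assume $f|_{\partial M}=0$ and $df|_{\partial M}=0$ as a 1-form on $M$ at the boundary, so $df$ vanishes on both tangential and normal vectors. From
\[\gsp_f - \gsp = -e^{2u}\bigl(df\otimes\theta+\theta\otimes df+df\otimes df\bigr),\]
we get $\gsp_f=\gsp$ pointwise on $\partial M$, whence $\bar\theta_f=\bar\theta$, $N_f=N$, and the inward unit normals coincide there. Invariance of $\gammao$ is immediate. Geometrically, $M_0$ and $M_f$ share $\partial M$ and have the same tangent space at each boundary point, so the second fundamental form $K(X,Y)=-\gsptime(\nabla^{\gsptime}_X Y,\ngf)$ for $X,Y\in T\partial M$ is preserved, yielding $\tfrac12\trko$. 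Invariance of $\trkpio=N\operatorname{div}_{\gammao}(\bar\theta^{\flat T})+N\trko\,\bar\theta(\ngf)$ follows because $\iota^*df=0$ kills the correction in the divergence term, while $df(\ngf)=0$ kills it in the second term. For $\omegao(v)=\tfrac{N}{2}(\mathcal{L}_{\bar\theta^\sharp}\gsp)(\ngf,v)$, the leading change under $\bar\theta\to\bar\theta+e^{2u}df$ equals $N e^{2u}\operatorname{Hess}f(\ngf,v)$, which vanishes at $\partial M$ for $v\in T\partial M$ by
\[\operatorname{Hess}f(\ngf,v)=v(\partial_{\ngf}f)-df(\nabla_v\ngf)=0\]
(both terms vanish since $\partial_{\ngf}f\equiv 0$ on $\partial M$ and $df|_{\partial M}=0$). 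Subleading contributions come from the Christoffel difference $\Delta\Gamma=\nabla^{\gsp_f}-\nabla^{\gsp}$ (one computes $\Delta\Gamma^{k}_{ij}|_{\partial M}=-e^{2u}\operatorname{Hess}f_{ij}\theta^{k}$) and from the derivative of the vector-field difference $\bar\theta_f^{\sharp_{\gsp_f}}-\bar\theta^{\sharp_\gsp}$; each produces a term $\pm e^{2u}\operatorname{Hess}f(\ngf,\ngf)\bar\theta(\ngf)\theta(v)$ with opposite signs, yielding exact cancellation.

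For the $(\Longrightarrow)$ direction, suppose the Bartnik data agrees. The identity $\iota^*[(\theta+df)\otimes(\theta+df)-\theta\otimes\theta]=0$ from preservation of $\gammao$ gives $\iota^*df=0$ on $\partial M$ (the degenerate locus $\iota^*df=-2\iota^*\theta$ is ruled out by agreement of the remaining pieces, which require $\gsp_f=\gsp$ pointwise on $TM|_{\partial M}$ to preserve mean curvature normal derivatives). Hence $f|_{\partial M}$ is locally constant; by the $t$-translation symmetry of $\gsptime$, which leaves $\gsp_f,\Pi_f$ invariant under $f\mapsto f+c$, we normalize this constant to $0$. Preservation of $\trkpio$ then forces the pointwise equality $\bar\theta(\ngf)=\bar\theta_f(\ngf)$ on $\partial M$, giving $df(\ngf)=0$. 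Combined with $\iota^*df=0$, we conclude $df|_{\partial M}=0$.

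The main obstacle is the $\omegao$-preservation step: the exact cancellation between the derivative of the vector-field difference and the Christoffel difference is non-trivial, relying on the structural fact that $\operatorname{Hess}f$ at $\partial M$ is concentrated purely in the $(\ngf,\ngf)$ direction under the hypothesis $df|_{\partial M}=0$. The cancellation is direct but requires careful index-bookkeeping; it parallels the computation referenced in \cite{an-elliptic2}.
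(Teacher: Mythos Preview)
The paper's own proof is a one-line citation to \cite{an-elliptic2}, so your attempt is considerably more detailed. Your $(\Leftarrow)$ direction is essentially correct. One simplification for the $\omegao$ invariance: rather than decomposing into a ``leading'' Hessian term plus ``subleading'' cancelling pairs involving $\operatorname{Hess}f(\ngf,\ngf)$, use directly that $(\mathcal{L}_{\bar\theta^{\sharp_\gsp}}\gsp)_{ij}=\nabla^\gsp_i\bar\theta_j+\nabla^\gsp_j\bar\theta_i$ with $\bar\theta$ the \emph{1-form}. A short computation at $\partial M$ (where $df=0$) gives
\[
(\nabla^{\gsp_f}_i\bar\theta_f)_j-(\nabla^\gsp_i\bar\theta)_j
= e^{2u}\operatorname{Hess}f_{ij}\bigl(1+e^{2u}|\theta|^2_\gsp\bigr),
\]
so the entire change in $\omegao(v)$ is proportional to $\operatorname{Hess}f(\ngf,v)$, which you have already shown vanishes. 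No separate cancellation argument is needed.

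Your $(\Rightarrow)$ direction has two genuine gaps. First, dismissing the branch $\iota^*df=-2\iota^*\theta$ by asserting that ``the remaining pieces require $\gsp_f=\gsp$ on $TM|_{\partial M}$'' is not justified: equality of $\trko$ does not by itself force pointwise equality of the metrics on the normal direction, since the mean curvature depends on $\gsp_f$ in a full neighbourhood and on the (possibly different) unit normal $\ngf^{(f)}$. Second, even once $\iota^*df=0$ is established, your deduction of $df(\ngf)=0$ from preservation of $\trkpio$ is circular: with $df(\ngf)$ still unknown one has $\gsp_f\neq\gsp$ on $TM|_{\partial M}$, hence $N_f\neq N$, $\ngf^{(f)}\neq\ngf$, and $\trko^{(f)}$ is not yet known to equal $\trko$. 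The identity $\trkpio^{(f)}=\trkpio$ is therefore a nonlinear relation in $df(\ngf)$ involving all of these quantities simultaneously, not the simple pointwise identification $\bar\theta_f(\ngf)=\bar\theta(\ngf)$ you claim. Closing this requires analyzing the variation of $\trko$ (or of $\trko$ and $\trkpio$ jointly) to isolate $df(\ngf)$; this is the computation that the reference \cite{an-elliptic2} presumably supplies.
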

\begin{proof}
This follows from the transformation formulas for the induced metric and second fundamental form under the isometry $\Phi_f: \mathcal{M} \to \mathcal{M}$ defined by 
\[\Phi_f(t,x_1,x_2,x_3) = (t-f(x), x_1, x_2, x_3)\] See \cite{an-elliptic2} for the detailed calculation.
\end{proof}

\begin{defn}
Let $(\bartnik)$ be Bartnik boundary data on $\partial M$. Suppose that $\mathcal{M}$ be an asymptotically flat stationary spacetime of order $\eta>0$, defined by $(M, g, u ,\theta)^{(4)}$ satisfying Einstein's vacuum equations 
\begin{equation}
\mathrm{Ric}_{\gsptime} = 0
\end{equation}
such that $(\bartnik)$ coincides with the Bartnik data of the initial data set $(M, \gsp,\Pi)$ of the $\{t=0\}$ hypersurface in $\mathcal{M}$. If so, we say that the spacetime $(M, g, u ,\theta)^{(4)}$ is a stationary vacuum extension with Bartnik data $(\bartnik)$.  
\end{defn}

The Bartnik stationary metric extension conjecture is as follows. 

\begin{conj}
Let $(\bartnik)$ be Bartnik boundary data on $\partial M$. There exists a stationary vacuum extension with Bartnik data $(\bartnik)$ that is unique up to isometry.
\end{conj}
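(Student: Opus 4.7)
The conjecture as stated is the full global existence-and-uniqueness claim, and by the Anderson--Khuri counterexamples \cite{anderson-khuri} in the static setting, the fully general assertion is expected to fail without some restriction on the admissible Bartnik data. The only realistic route I see is a continuity-method argument that propagates the local result of this paper's Main Theorem along paths in the space of Bartnik data, making precise where a complete proof must confront obstructions.

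The plan is as follows. Given target data $(\bartnik)$, embed it in a smooth path $\{D(s)\}_{s\in [0,1]}$ in $\mathcal{M}^{k+1}(S^2) \times H^k(S^2) \times H^k(S^2) \times \Omega^k(S^2)$ with $D(0) = (\gammasc, \tfrac{1}{2}\trksc, 0, 0)$ for some Schwarzschild mass $m_0 < r_0/2$ and $D(1) = (\bartnik)$. Let $S \subseteq [0,1]$ denote the set of $s$ for which a stationary vacuum extension realizing $D(s)$ exists and is unique up to isometry within the class of asymptotically flat extensions compatible with a double geodesic gauge. Nonemptiness of $S$ is immediate from the static theorem of \cite{ahmed}; openness of $S$ about any $s_0 \in S$ follows from the Main Theorem of the present paper applied with the extension at $s_0$ serving as the background, because the linearization there decouples into the static subsystem (invertible by \cite{ahmed}) and the $\tilde\theta$ boundary value problem (invertible here), so the implicit function theorem gives a neighborhood of $s_0$ in $S$. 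The objective is to conclude $S = [0,1]$.

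The heart of the argument is closedness. Given $s_n \to s_\infty \in \overline{S}$ with extensions $(g_n, u_n, \theta_n)$, one needs uniform a priori bounds in $\AHC{2}{k}{\delta}(M)$ together with persistence of both gauges: the $g$-geodesic foliation $g = dr^2 + g(r)$ must extend smoothly all the way to infinity without focal points on the level sets of $r$, and the $\theta$-geodesic condition $\Delta_{sc}(\theta(\partial_r)) = 0$, adapted dynamically to $\Delta_{g(s)}$ if necessary, must remain an invertible elliptic condition with decay at infinity. The necessary estimates would come from an ADM-mass control along the path, derived from continuity of the ADM functional on the Bochner spaces combined with the positive mass theorem applied to $(M, \gsp_n, \Pi_n)$, together with elliptic-transport estimates for the reduced system \eqref{ric=0} propagated radially outward from $\partial M$.

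For uniqueness, after two candidate extensions are normalized to the same $\{t=0\}$ slice via Lemmas \ref{f-prop} and \ref{f=0-lem} and placed in the double geodesic gauge, their difference satisfies a linear homogeneous elliptic-transport boundary value problem with trivial Cauchy data on $\partial M$ and prescribed decay at infinity; analyticity of stationary vacuum metrics in harmonic coordinates \cite{anderson-stationary} together with boundary unique continuation at $\partial M$ forces the difference to vanish. The decisive obstacle is closedness: the Anderson--Khuri construction produces paths of Bartnik data along which any sequence of putative extensions must degenerate through horizon formation, loss of asymptotic flatness, or breakdown of the geodesic gauge, so the uniform a priori estimates genuinely fail without further hypotheses. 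A complete proof therefore requires either quantitative nondegeneracy assumptions on $(\bartnik)$ ruling out these degeneration mechanisms, or a fundamentally new technique to handle degenerate limits; the continuity-method strategy sketched here unconditionally establishes the conjecture only on the connected component of $S$ containing the Schwarzschild datum, and the global statement as written should be regarded as holding only after appropriate restrictions on the admissible data.
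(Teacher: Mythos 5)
The statement you were asked to prove is stated in the paper as a \emph{conjecture}, and the paper offers no proof of it: the whole point of the present work is the Main Theorem, a \emph{local} well-posedness result for Bartnik data near a Schwarzschild sphere, with the global statement left open (and, as the paper itself notes via the Anderson--Khuri counterexamples, expected to fail in the naive global form). Your proposal correctly diagnoses this, but it should then be judged as a proof attempt, and as such it has a genuine gap beyond the closedness issue you concede. The openness step of your continuity method claims that the Main Theorem can be ``applied with the extension at $s_0$ serving as the background.'' That is not something the paper provides. The linearized analysis here is carried out exclusively at the Schwarzschild background $(g_{sc}, u_{sc}, 0, 0)$: the decoupling of the linearized system into the static subsystem and the $\tilde\theta$ boundary value problem relies on $\theta_{sc}=0$ (so that all cross terms involving the twist vanish at the background), and the solvability of the $\tilde\theta$ problem is proved by spherical harmonic decomposition, which uses the spherical symmetry of $g_{sc}$ in an essential way (Proposition \ref{ode-prop} and the ODE systems \eqref{sys1}--\eqref{sys3}). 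At a general stationary, non-static, non-symmetric background $(g(s_0),u(s_0),\theta(s_0))$ neither the decoupling nor the invertibility of the linearization is established, so openness of your set $S$ away from the Schwarzschild datum does not follow from anything in this paper or in \cite{ahmed}.

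Consequently the proposal does not prove the conjecture, nor does it even unconditionally establish the ``connected component'' statement you fall back on: without openness at non-Schwarzschild points, the continuity method yields nothing beyond the neighbourhood $\mathcal{U}$ already produced by the Main Theorem. Your uniqueness sketch has a similar status --- unique continuation from the boundary for the stationary vacuum system in these gauges is a substantial claim that is neither proved here nor reducible to the cited analyticity results without work. The honest summary is that the paper proves local existence and uniqueness near Schwarzschild data and deliberately leaves the conjecture open; a correct response to this prompt is to prove (or reproduce) that local theorem, not to sketch a global continuation argument whose two key steps (openness at general backgrounds, closedness/a priori estimates) are both unavailable.
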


We will now rewrite Einstein's vacuum equations in equation \eqref{ric=0} in terms of the parameters $u,\theta$ and $g$. Suppose $\mathcal{M}$ is a stationary spacetime defined by $(M, g, u ,\theta)^{(4)}$. Let $\xi = \left( \dd{t} \right)^{\sharp} = -e^{2u} (dt+\theta)$ be the Killing 1-form. Define the twist form $\eta$ as the 1-form on $M$ defined by 

\begin{equation}
\eta = -\frac{1}{2} \star_{\gsptime} \xi \wedge d\xi = -\frac{1}{2} e^{4u} \star_{{g}} d\theta
\end{equation}

The form $\eta$ represents the obstruction to the integrability of the horizontal distribution in $T\mathcal{M}$, and so it measures the extent to which $\dd{t}$ fails to be orthogonal to the $\{t=0\}$ hypersurface. In particular, $\eta$ vanishes if and only if the metric $\gsptime$ is static. 

\vv

The stationary vacuum equations in the quotient formalism becomes as follows (see \cite{beig-simon}):
\begin{prop}
The stationary spacetime $(M, g, u ,\theta)^{(4)}$ satisfies Einstein's vacuum equations if and only if the following is satisfied on $M$. 

\begin{equation} \label{ric=0-1}
\mathrm{Ric}_g = 2 du \otimes du + 2e^{-4u} \eta \otimes \eta
\end{equation}
\begin{equation}
\Delta_g u = -2e^{-4u} |\eta|^2_{g}
\end{equation}
\begin{equation}
d\eta = 0
\end{equation}
\end{prop}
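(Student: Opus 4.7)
The plan is to verify the equivalence by a direct frame computation of $\mathrm{Ric}_{\gsptime}$, then decomposing the resulting $(0,2)$-tensor according to the $(1+3)$ block structure singled out by the Killing vector $\dd{t}$. Concretely, I would introduce an orthonormal coframe $\{\omega^0,\omega^1,\omega^2,\omega^3\}$ on $\mathcal{M}$ with $\omega^0 = e^u(dt+\theta)$ (unit timelike) and $\omega^i = e^{-u}\tilde{\omega}^i$, where $\{\tilde{\omega}^i\}$ is an orthonormal coframe for $g$. Exterior differentiation gives
\begin{equation*}
d\omega^0 = du \wedge \omega^0 + e^u\, d\theta,
\end{equation*}
and the crucial observation is that $e^u\, d\theta$ is purely horizontal and, by the definition of the twist form, can be rewritten as $d\theta = -2e^{-4u}\star_g \eta$, so that $\eta$ enters the structure equations directly.

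I would then apply Cartan's first structure equation $d\omega^a + \Omega^a{}_b \wedge \omega^b = 0$ together with antisymmetry $\Omega_{ab} = -\Omega_{ba}$ to solve for the Levi--Civita connection 1-forms $\Omega^a{}_b$. These split naturally: the horizontal--horizontal block encodes the Levi--Civita connection of $g$ together with a conformal correction from $du$, while the mixed horizontal--vertical block carries $d\theta$, hence $\eta$. Cartan's second structure equation then yields the curvature 2-forms, whose contractions give the Ricci components in the three blocks labelled by $(0,0)$, $(0,i)$, and $(i,j)$.

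The three equations of the proposition fall out of these three blocks. The $(0,0)$ component simplifies, after using the structure equations and the conformal rescaling formula for the scalar Laplacian, to a multiple of $\Delta_g u + 2e^{-4u}|\eta|_g^2$, so its vanishing is exactly $\Delta_g u = -2e^{-4u}|\eta|_g^2$. The mixed $(0,i)$ components turn out to be proportional (via the Hodge star on the 3-manifold $M$) to $d\eta$, using $d^2\theta = 0$ together with $\eta = -\tfrac{1}{2}e^{4u}\star_g d\theta$; hence their vanishing is equivalent to $d\eta = 0$. Finally, the horizontal $(i,j)$ block, after substituting the two preceding equations and using the 3-dimensional Hodge identity $|\star_g \alpha|_g^2 = |\alpha|_g^2$ for 2-forms, yields $\mathrm{Ric}_g = 2\,du\otimes du + 2e^{-4u}\,\eta\otimes\eta$.

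The main obstacle is the bookkeeping forced by the conformal rescaling $\gsptime \leftrightarrow g$ combined with the non-integrability of the horizontal distribution: the cross-terms produced by $d\theta$ in the connection 1-forms propagate through the curvature computation and must combine correctly to yield the clean $\eta\otimes\eta$ and $|\eta|_g^2$ contributions on the right-hand sides. Since the proposition is classical (cf.\ \cite{beig-simon,kramer}), I would carry out the frame computation explicitly only up to the point where the block decomposition makes each of the three equations transparent, and cite these references for the residual algebraic simplifications.
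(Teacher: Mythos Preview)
The paper does not actually prove this proposition; it simply prefaces the statement with ``see \cite{beig-simon}'' and treats the result as classical background. So there is nothing to compare your approach against beyond the cited literature.

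Your sketch via Cartan's structure equations in an adapted orthonormal coframe is the standard route (and is essentially what one finds in the references \cite{beig-simon,kramer,anderson-stationary}). The block decomposition you describe---$(0,0)$ giving the Laplace equation for $u$, $(0,i)$ giving $d\eta=0$, and $(i,j)$ giving the Ricci equation---is correct. One small caution: you should be precise that the vanishing of \emph{all} components of $\mathrm{Ric}_{\gsptime}$ is equivalent to the three displayed equations; in particular the trace of the $(i,j)$ block together with the $(0,0)$ block must not impose an additional constraint beyond the three stated equations, which follows from the contracted Bianchi identity (or a direct check). For the purposes of this paper, a citation would have sufficed, but your outline is sound.
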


\vv

A result by Murchadha and Beig-Simon implies the following regarding the regularity and decay of stationary vacuum extensions (see \cite{murchadha, beig-simon}). 

\begin{prop} \label{decay prop}
Let $(M, g, u ,\theta)^{(4)}$ be an asymptotically flat stationary vacuum spacetime. Then $\gsptime$ is smooth away from the boundary and is Schwarzschildean near infinity. In particular, there exists coordinates near infinity in which 

\begin{equation}
\gsp_{ij} = \left( 1+ \frac{2m_0}{|x|}\right) \delta_{ij} + \mathcal{O}_2(|x|^{-2}), \quad u = -\frac{m_0}{|x|} + \mathcal{O}_2(|x|^{-2}), \quad \theta_i = \mathcal{O}_2(|x|^{-2})
\end{equation}
where $m_0$ is the ADM mass of the $\{t=0\}$ hypersurface $(M, \gsp)$. 
\end{prop}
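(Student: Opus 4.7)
The plan is to combine interior elliptic regularity for the stationary vacuum system with the classical asymptotic expansion theorem of Ó Murchadha (and its refinements by Beig--Simon and Kennefick--Ó Murchadha).

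First, for smoothness away from $\partial M$, the approach is to pass to harmonic coordinates on $M$ (equivalently, a wave-map gauge in spacetime), in which the system \eqref{ric=0-1} together with $\Delta_g u = -2e^{-4u}|\eta|_g^2$ and $d\eta = 0$ becomes a quasilinear elliptic system for the unknowns $(g_{ij}, u, \theta_i)$. Starting from the hypothesized $\mathcal{O}_2$ regularity of asymptotic flatness and applying the standard elliptic bootstrap (cf. Morrey), the interior regularity lifts to $C^\infty$ (in fact, to real analytic).

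Second, for the Schwarzschildean expansion, I would invoke the theorem of Ó Murchadha, whose proof proceeds by iteratively expanding the stationary vacuum equations in inverse powers of $|x|$ in harmonic coordinates at infinity. At leading order, the asymptotic Laplace equation for $u$ together with the decay hypothesis forces $u \sim -m_0/|x|$ for some constant $m_0$; substituting back into \eqref{ric=0-1} then pins down $g_{ij} \sim (1 + 2m_0/|x|)\delta_{ij}$ at leading order. The twist form $\theta$ decays at the faster rate $\mathcal{O}_2(|x|^{-2})$ because the monopole contribution to $d(e^{-4u}\star_g d\theta) = 0$ vanishes under the falloff assumption, so the leading angular-momentum piece of $\theta$ is already $\mathcal{O}(|x|^{-2})$. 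Finally, substituting these expansions into \eqref{gsp-g} yields $\gsp_{ij} = (1+2m_0/|x|)\delta_{ij} + \mathcal{O}_2(|x|^{-2})$, and evaluating the standard ADM boundary integral at infinity identifies the constant $m_0$ appearing in the expansion with the ADM mass of $(M,\gsp)$.

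The main obstacle is that the substantive analytic work resides in Ó Murchadha's theorem itself, which requires careful justification of the convergence of the iterative expansion in harmonic coordinates and the exclusion of logarithmic terms in the genuinely stationary (as opposed to merely asymptotically flat) setting. For the present proposal I would rely on this reference; the steps I can carry out directly are the reduction to an elliptic system in harmonic gauge with the attendant regularity bootstrap, and the matching of the expansion coefficients with the ADM mass via \eqref{gsp-g}.
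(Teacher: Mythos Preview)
Your proposal is correct and aligns with the paper's own treatment: the paper does not prove this proposition at all but simply states it as ``a result by Murchadha,'' citing the external reference without further argument. You in fact provide considerably more detail than the paper does, sketching both the elliptic bootstrap in harmonic gauge for interior smoothness and the iterative expansion underlying the Schwarzschildean asymptotics; this is entirely consistent with (and more informative than) the paper's bare citation.
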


\subsection*{The Geodesic Gauge}

Lemma \ref{f-prop} and Lemma \ref{f=0-lem} show that there is some freedom in choosing the stationary vacuum extension given the Bartnik data. In particular, two spacetimes $(M,g,u,\theta)^{(4)}$ and $(M,g',u',\theta')^{(4)}$ are stationary vacuum extensions of the same Bartnik data $(\bartnik)$ if they are isometric with isometry $\Phi$ of the form:  
 $$\Phi (t,x) = (t+f(x), \Psi(x)), \quad \text{for $(t,x) \in \mathcal{M}$},$$ where

\begin{itemize}

\item $\Psi$ is an isometry from $(M,g)$ to $(M,g')$ that fixes the boundary and is asymptotic to the identity map on $M$ so that $|\Psi(x) - \mathrm{Id}(x)| = \mathcal{O}_2(|x|^{-1})$.  
\item $f$ is a function on $M$ satisfying 
\begin{equation} \label{conditions for f} df = \mathcal{O}_1(|x|^{-\eta}), \quad f_{\partial M} = 0, \quad df|_{\partial M} = 0 \end{equation}
where $\eta>0$ is the decay rate of both spacetimes. 
 \end{itemize}
 
 If so, then $g = \Psi^* g'$, $u = \Psi^{*} u'$, and $\theta = \Psi^{*}\theta'+df$.

\vv

To pick a unique stationary spacetime $(M,g,u,\theta)^{(4)}$, we will choose a gauge for $g$, i.e. choose a particular element in the isometry class of $g$ in which the boundary and the asymptotics at infinity are preserved. Then, we will choose a gauge for $\theta$ i.e. choose a particular 1-form $\theta_{gauge}$ on $M$ among all the 1-forms $\theta+df$ where $f$ satisfies the conditions in \eqref{conditions for f}.

\vv

We will begin with choosing the gauge for $g$. We recall the following proposition from \cite{ahmed}.

\begin{prop} \label{gauge}
There exists $\tau'=\tau'(n,m_0)>0$ small enough such that the following is true for any $0<\tau<\tau'$.\\
If an asymptotically flat metric $g$ on $M$ of order $\eta>0$ satisfies in Cartesian coordinates
\begin{equation} \label{smallness}
|x|^{\eta} |g - g_{sc}| + |x|^{\eta+1}|\partial g - \partial g_{sc}| + |x|^{\eta+2}| \partial\partial g - \partial\partial g_{sc}| < \tau
\end{equation}
where $|\cdot|$ is with respect to the Euclidean metric $\delta$,  then:   

\begin{enumerate}
\item The affine parameter $r(\cdot) = \text{dist}_g(\partial M,\cdot)+r_0$ is differentiable everywhere on $M\setminus \partial M$ and defines a global radial foliation with leaves $S_r$ diffeomorphic to $S^2$. Moreover, given a coordinate system $(x^1,x^2,x^3)$ near infinity in which the metric satisfied $g_{ij} -\delta_{ij} = \mathcal{O}_2(|x|^{-\eta})$, we have that $r$ and $|x|$ are comparable in the sense that  
\begin{equation}
    C^{-1} |x| \leq r \leq C |x|
\end{equation}
for some constant $C>0$. 

\item With respect to this foliation, we have 
\begin{equation}  \label{trK-hatK-inf}
trK = \frac{2}{r} + \mathcal{O}_1(r^{-1-\eta}), \qquad |\hat K| = \mathcal{O}_1(r^{-1-\eta})
\end{equation}
where $K = \text{Hess}(r)$ is the second fundamental form on the leaves $S_r$, $trK$ is the trace of $K$, and $\hat K$ is the traceless part of $K$. 
\item There exists a unique diffeomorphism $\Phi: M \to [r_0, \infty) \times S^2 $ such that $\Phi|_{\partial M} = Id_{S^2}$, $r (\cdot) = \pi_r \circ \Phi (\cdot)$ where $\pi_r$ is the projection onto the first coordinate,  and $\Phi_* g = dr^2 + {\gamma_g}_r$ where ${\gamma_g}_r$ is the push forward of the induced metric on $S_r$. 
\end{enumerate}
\end{prop}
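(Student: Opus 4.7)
My plan is to establish all three claims by a perturbation argument from the model Schwarzschild metric $g_{sc}$, for which each item is explicitly verifiable, and then propagate the conclusions to $g$ using the weighted smallness condition \eqref{smallness}. The central object to control is the normal exponential map $E : [0,\infty) \times \partial M \to M$, $E(s,\omega) = \gamma_\omega(s)$, where $\gamma_\omega$ is the unit-speed $g$-geodesic starting at $\omega\in \partial M$ in the inward unit normal direction. For $g = g_{sc}$, the identification $g_{sc} = dr^2 + r(r-2m_0)\gamma_{\mathbb{S}^2}$ already puts the metric in geodesic normal form, so $E_{sc}$ is a global diffeomorphism under $s \mapsto r-r_0$, and the level sets of $s$ coincide with coordinate spheres.

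For (1) and (3), I would prove that $E$ remains a diffeomorphism for $g$ close to $g_{sc}$. Local invertibility amounts to the absence of conjugate points along each geodesic $\gamma_\omega$, which is controlled by the Jacobi equation along $\gamma_\omega$ with right-hand side governed by the Riemann tensor of $g$; the smallness on $\partial\partial g - \partial \partial g_{sc}$ (weighted by $|x|^{\eta+2}$) ensures the Jacobi endomorphism is a small ODE perturbation of the explicit, uniformly nondegenerate Schwarzschild Jacobi endomorphism (whose determinant grows like $r^2(r-2m_0)^2$), hence remains nondegenerate on $[0,\infty)$ for $\tau$ small. Properness and global injectivity follow from a Gauss-lemma argument in a boundary collar combined with an asymptotic analysis at infinity: the smallness guarantees distinct geodesics stay separated and each escapes to spatial infinity, with the position vector along $\gamma_\omega$ asymptotic (up to $\mathcal{O}(s^{1-\eta})$) to the corresponding Schwarzschild geodesic. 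Once $E$ is a diffeomorphism, normal geodesics minimize distance to $\partial M$, so $r(x) = \mathrm{dist}_g(\partial M, x) + r_0$ is smooth on $M\setminus \partial M$ and coincides with the first coordinate of $E^{-1}$ (reparameterized). Defining $\Phi := E^{-1}$ yields the required diffeomorphism; uniqueness is automatic since any $\Phi$ realizing $g = dr^2 + \gamma_r$ and fixing $\partial M$ must send radial curves to integral curves of $\nabla r$, which are precisely the inward normal $g$-geodesics from $\partial M$. The comparability $C^{-1}|x| \leq r \leq C|x|$ follows from $|g - \delta| \leq |g - g_{sc}| + |g_{sc} - \delta| = \mathcal{O}(|x|^{-\eta})$ together with the analogous property for the Schwarzschild radial coordinate.

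For (2), I would use the Riccati equation for the second fundamental form along the radial flow,
\begin{equation*}
\partial_r K + K \cdot K = -\mathrm{Rm}_g(\cdot, \partial_r)\partial_r,
\end{equation*}
viewed as an ODE in $r$ on each radial line. For the Schwarzschild background one computes directly that $K_{sc} = \tfrac{r-m_0}{r(r-2m_0)}\cdot \gamma_{g_{sc},r}$, so $trK_{sc} = \tfrac{2(r-m_0)}{r(r-2m_0)} = \tfrac{2}{r} + \mathcal{O}(r^{-2})$ and $\hat{K}_{sc} = 0$ (Schwarzschild coordinate spheres are totally umbilic). Writing $\tilde K := K - K_{sc}$, subtraction yields a linear ODE in $\tilde K$ whose coefficients involve $K$ and $K_{sc}$ and whose inhomogeneity is $\mathrm{Rm}_g - \mathrm{Rm}_{g_{sc}} = \mathcal{O}(r^{-2-\eta})$, directly controlled by the weighted smallness on $\partial\partial(g - g_{sc})$. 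Integrating this ODE, with initial data on $\partial M$ bounded by the smallness of $g - g_{sc}$, and using a Gr\"onwall-type estimate with the decaying inhomogeneity, yields $|\tilde K| = \mathcal{O}_1(r^{-1-\eta})$; splitting into trace and trace-free parts gives \eqref{trK-hatK-inf}.

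The hard part is genuinely the global diffeomorphism property of $E$ on the non-compact region $[0,\infty)\times \partial M$: absence of conjugate/focal points and global injectivity are immediate for $g_{sc}$ but must be upgraded for $g$ using only a scale-invariant smallness that degenerates with no uniform lower bound in $r$. The weighted bound \eqref{smallness} is tailored to exactly this: the $|x|^{\eta+2}$ weight on the second derivatives makes the Jacobi equation for $g$ a small $L^1_r$-perturbation of the Schwarzschild Jacobi equation, which suffices to propagate nondegeneracy to all of $[0,\infty)$ for $\tau$ sufficiently small depending on $m_0$.
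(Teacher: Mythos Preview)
The paper does not contain a proof of this proposition: it is explicitly introduced with ``We recall the following proposition from \cite{ahmed}'' and no argument is given here. There is therefore nothing in the present paper to compare your proposal against.

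On its own merits, your outline is the standard route and is essentially correct in structure: control of the normal exponential map via the Jacobi equation (using that the weighted $C^2$ smallness makes the curvature perturbation integrable in $r$), a Gauss-lemma/properness argument for global injectivity, and the Riccati equation for the asymptotics of $K$. One point that deserves more care than you indicate is global injectivity of $E$ on the whole of $[0,\infty)\times\partial M$: absence of focal points gives only local diffeomorphism, and your appeal to ``distinct geodesics stay separated'' via smallness is not quite an argument. The clean way is to note that $E$ is a proper local diffeomorphism between simply connected manifolds (properness following from the comparability $r\sim |x|$, which you do establish), hence a covering map, hence a diffeomorphism. With that addition your sketch would be a complete strategy; but again, the paper itself defers entirely to \cite{ahmed} for this result.
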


This allows us to globally express $g$ in the stationary vacuum extension $(M, g, u ,\theta)^{(4)}$ near the Schwarzschild solution in geodesic coordinates as follows: 
\[g =  dr^2 + g(r), \quad \text{where $g(r)$ is the induced metric on $S_r$}\]
and, hence, determines the gauge that we will use for $g$, which we will call the $g$-geodesic gauge. 

\vv

Fix Bartnik boundary data $(\bartnik)$ and let $(M, g, u ,\theta)^{(4)}$ be a stationary vacuum extension realizing this Bartnik data. Proposition \ref{f-prop} shows that there is a one to one correspondence between stationary vacuum extensions $(M,g,u,\theta')$ realizing the Bartnik data $(\bartnik)$ and functions $f$ on $M$ satisfying 
\[ f|_{\partial M} = 0,\quad  df|_{\partial M} =0, \quad df = O(r^{-\eta}) \]

In light of proposition \ref{f-prop}, this one-to-one correspondence is given by 
\[f \mapsto (M, g, u ,\theta+df)^{(4)}\]

We now fix the remaining gauge freedom for $\theta$ by imposing a canonical condition. 

\begin{prop} \label{uniquef-prop}
Let $(M, g, u ,\theta)^{(4)}$ be a stationary vacuum extension. Suppose $g$ can be globally expressed in geodesic coordinates: 
\[g = dr^2 + g(r)\]
There exists a unique $f$ satisfying 
\[ f|_{\partial M} = 0,\quad  df|_{\partial M} = 0, \quad df = \mathcal{O}_1(r^{-\eta}) \]
such that the 1-form $\theta_f:= \theta+df$ satisfies 
\begin{equation}
\Delta_{g_{sc}} \left( \theta_f(\dd{r})\right)=0
\end{equation}
\end{prop}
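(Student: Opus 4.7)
The plan is to reduce the problem to an elliptic Dirichlet problem for the scalar $v := \theta_f(\partial_r) = \theta(\partial_r) + \partial_r f$ on $(M, g_{sc})$, solve it using the isomorphism theorem for $Q$ recalled from \cite{ahmed}, and reconstruct $f$ by radial integration in the $g$-geodesic foliation. The conditions $df|_{\partial M} = 0$ (which forces $\partial_r f|_{\partial M} = 0$) and $df = \mathcal{O}_1(r^{-\eta})$ translate together with the gauge requirement into the Dirichlet problem
\begin{equation*}
\Delta_{g_{sc}} v = 0 \text{ on } M, \qquad v|_{\partial M} = \theta(\partial_r)|_{\partial M}, \qquad v = \mathcal{O}(r^{-\eta}) \text{ at infinity},
\end{equation*}
since $\theta(\partial_r)$ itself decays at this rate by the asymptotic flatness of the stationary vacuum extension.

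First I would apply the isomorphism for $Q : u \mapsto (\Delta_{g_{sc}} u, u|_{\partial M})$ from \cite{ahmed} (specialized to $g = g_{sc}$) to the source data $(0, \theta(\partial_r)|_{\partial M})$. The boundary trace $\theta(\partial_r)|_{\partial M}$ lies in the required Sobolev space on $S^2 \cong \partial M$ by interior/boundary regularity for the stationary vacuum equations, so the theorem produces a unique $v$ with the stated properties. I would then define
\begin{equation*}
f(r,\omega) := \int_{r_0}^r \bigl[v(s,\omega) - \theta(\partial_r)(s,\omega)\bigr]\,ds,
\end{equation*}
which automatically satisfies $f|_{\partial M} = 0$ and $\partial_r f = v - \theta(\partial_r)$. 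Consequently $\partial_r f|_{\partial M} = v|_{\partial M} - \theta(\partial_r)|_{\partial M} = 0$, and together with $f|_{\partial M} = 0$ this gives $df|_{\partial M} = 0$. The radial component of $df$ inherits the decay $\mathcal{O}(r^{-\eta})$ directly from $v - \theta(\partial_r)$.

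Uniqueness is immediate from uniqueness of the Dirichlet problem: if $f_1, f_2$ both satisfy the listed conditions, then $v_i := \theta(\partial_r) + \partial_r f_i$ both solve the same BVP above (the decay $df_i = \mathcal{O}_1(r^{-\eta})$ places $v_i$ in the correct weighted space), so $v_1 = v_2$, whence $\partial_r(f_1 - f_2) \equiv 0$; combined with $(f_1 - f_2)|_{\partial M} = 0$ this forces $f_1 = f_2$. The main delicate step is verifying the decay of the \emph{tangential} components of $df$ at infinity: while $\partial_r f$ inherits the required decay directly from the elliptic estimates, the angular derivatives $\partial_{\omega^a} f$ arise from integrating $\partial_{\omega^a}(v - \theta(\partial_r))$ along the radial direction and may a priori grow in $r$. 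The resolution uses that near infinity, tangential derivatives in spherical coordinates are related to Cartesian partial derivatives by a factor of $r^{-1}$, which compensates for this growth and yields the Cartesian bound $df = \mathcal{O}_1(r^{-\eta})$.
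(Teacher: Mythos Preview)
Your proof is correct and follows essentially the same approach as the paper: reduce the gauge condition to a Dirichlet problem for the Schwarzschild Laplacian (the paper cites Bartnik's solvability result, you invoke the $Q$-isomorphism from \cite{ahmed}), then recover $f$ by integrating $\partial_r f$ along the radial geodesics with $f|_{\partial M}=0$. Your write-up is in fact more careful than the paper's, which omits the source term $-\Delta_{g_{sc}}\theta(\partial_r)$ in the displayed system and does not address the decay of the tangential part of $df$; your remark that the $r^{-1}$ conversion factor between angular and Cartesian derivatives absorbs the growth from radial integration is exactly the point needed there.
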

\begin{proof}
It suffices to prove that there exists a unique $f$ on $M$ satisfying: 

\begin{equation}
\begin{cases}
&\Delta_{g_{sc}} \partial_r f = -\Delta_{g_{sc}}(\theta(\dd{r})) \\
&df = \mathcal{O}_1(r^{-\eta})\\
& f|_{\partial M} = 0\\
&\partial_r f|_{\partial M} = 0
\end{cases}
\end{equation}

It follows by solvability of the Laplacian on asymptotically flat manifolds (see \cite{Bartnik1}) that there exists a unique smooth function $h = \mathcal{O}_1(r^{-\eta})$ satisfying

\begin{equation}
\begin{cases}
&\Delta_{g_{sc}} h = -\Delta_{g_{sc}}(\theta(\dd{r}))\\
&h|_{\partial M} = 0
\end{cases}
\end{equation}

Note that we used the fact that $\theta(\dd{r}) = \mathcal{O}_2(r^{-2})$ according to proposition \ref{decay prop}. 

\vv

By integrating $h$ in $r$, it is clear that there exists a unique function $f$ satisfying $\partial_r f = h$ and $f|_{\partial M} = 0$.
\end{proof}

\begin{figure}[H]
    \centering
    \includegraphics[scale=0.12]{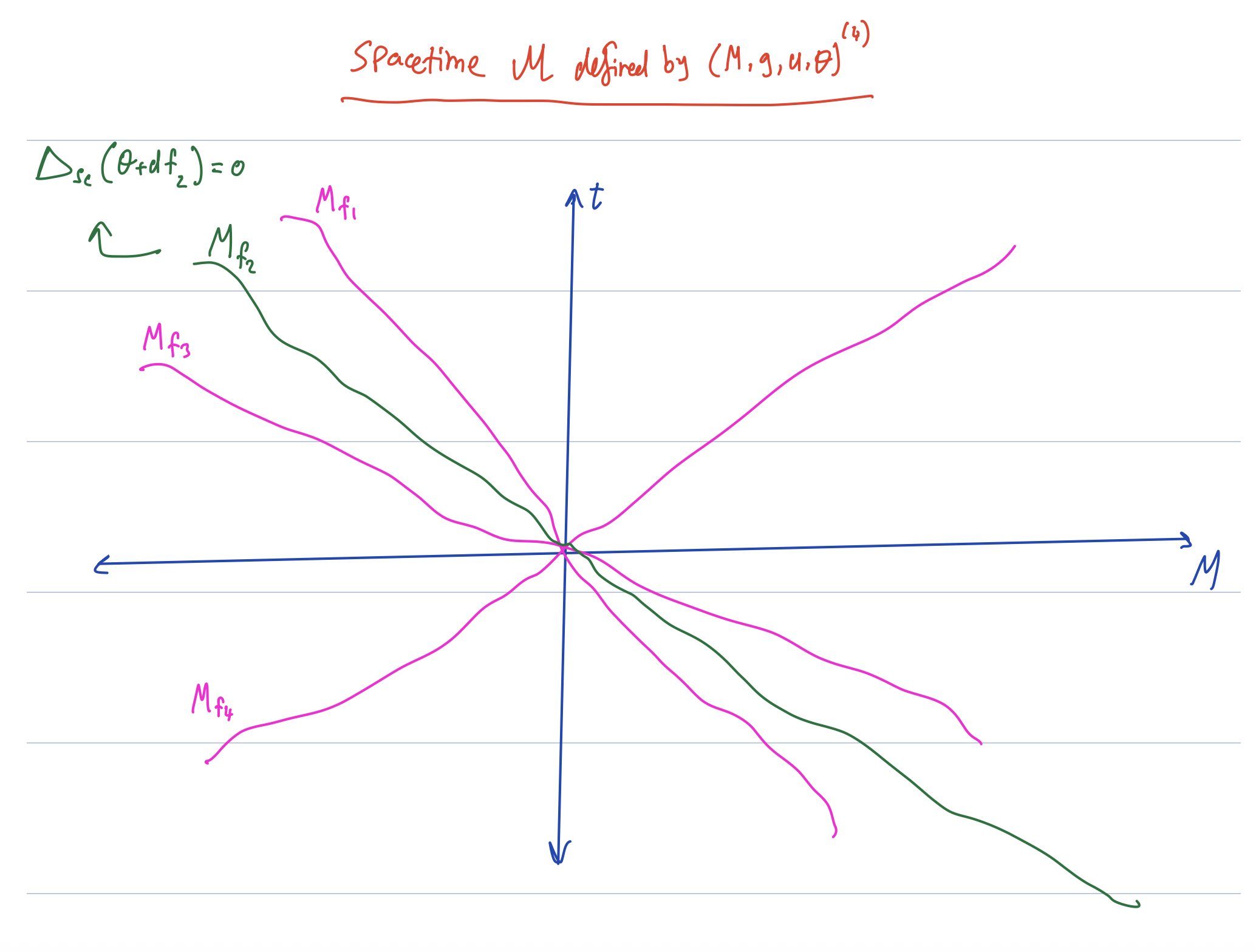}
    \caption{Illustration of the gauge freedom for initial data sets in a stationary spacetime defined by $(M, g, u, \theta)^{(4)}$. Each colored curve $M_{f_i}$ represents a different spacelike hypersurface given by the graph $t = f_i(x)$, where the functions $f_i$ satisfy the boundary conditions $f_i|{\partial M} = 0$, $df_i|{\partial M} = 0$ so that all initial data sets realize the same Bartnik boundary data on $\partial M$. Proposition \ref{uniquef-prop} shows that there exists a unique choice of hypersurface—depicted in green, $M_{f_2}$—for which the corresponding $1$-form $\theta + df_2$ satisfies the $\theta$-geodesic gauge condition $\Delta_{sc}\Big((\theta + df_2)(\dd{r})\Big) = 0$. This means that for any prescribed Bartnik data, there is a unique representative in the gauge where $\theta$ is determined by this elliptic condition and the corresponding initial data slice is then the ${t=0}$ hypersurface in the isometric stationary spacetime defined by $(M,g,u,\theta +df_2)^{(4)}$. Thus, fixing the $\theta$-gauge is equivalent to selecting a unique initial data slice among all those realizing the same boundary data in the stationary spacetime.}
    \label{fig:domain}
\end{figure}

\begin{defn} \label{theta-gauge-defn}
Let $(M,g,u,\theta)^{(4)}$ be a stationary spacetime in which $g$ is in the $g$-geodesic gauge. We say that $\theta$ is in the $\theta$-geodesic gauge if it satisfies

\begin{equation}
\Delta_{g_{sc}} \left( \theta(\dd{r}) \right) = 0
\end{equation} 
\end{defn}

With these gauge choices, we have completely fixed all diffeomorphism freedom in our problem. Specifically: 
\begin{itemize}
\item The $g$-geodesic gauge $g = dr^2 + g(r)$ fixes the spatial diffeomoprhisms that preserve the foliation structure. 
\item The $\theta$-geodesic gauge determines the additive gauge freedom in $\theta$. 
\end{itemize}

Since we have exhausted all available gauge freedom, we expect the stationary vacuum extension problem with prescribed Bartnik data to have at most one solution in these gauges, leading to our uniqueness result.

\subsection{Function Spaces} \label{functionspaces-sec}

In this section, we define the function spaces that we will be using. Fix $k \in \Z_{\geq0}$ and $\delta \in \R$. From here onwards, we will identify $M$ with the space $[r_0,\infty) \times S^2$. 

\begin{defn}
We define the weighted Sobolev space $H^k_{\delta}([r_0,\infty))$ with weight $\delta$ to be the space of all functions $f \in H^k_{loc}([r_0,\infty))$ such that $\normmmH{f}{k}{\delta}<\infty$, where 

\begin{equation}
\normmmH{f}{t}{\delta}^2 = \sum_{k'=0}^k \int_{r_0}^{\infty} r^{-2\delta-1+2k'} \left(f^{(k')}(r)\right)^2 dr
\end{equation}
We will also denote the space $H^k_{\delta}([r_0,\infty))$ by $L^2_{\delta}([r_0,\infty))$ when $k=0$.
\end{defn}

\begin{defn}
We define the weighted space $C^k_{\delta}([r_0,\infty))$ with weight $\delta$ to be the space of all functions $f \in C^k([r_0,\infty))$ such that $\normmmC{f}{k}{\delta}<\infty$, where 

\begin{equation}
\normmmC{f}{k}{\delta}^2 = \sum_{k'=0}^k \sup\left( r^{-2\delta +2k'} ({f}^{(k')}(r))^2 \right)
\end{equation}
\end{defn}

\begin{defn} 
 We define the weighted Sobolev space $H^k_{\delta} (M)$ with weight $\delta$ to be the space of all functions $u$ in $H^k_{loc}(M)$ such that $\norm{u}_{k,\delta}< \infty$ respectively, where 
\begin{equation} \label{norm1}
 \norm{u}_{k, \delta} = \sum^k_{l = 0} \left\{ 
\int_M \left( |D^l u| \cdot r^{l -\delta} \right)^2 r^{-3} dV \right\}^
\frac{1}{2}
\end{equation}
where $r = |x|$, $D$ is the connection with respect to the Euclidean metric on $M$, and $dV$ is the Euclidean volume form on $M$. We will also denote the space $H^k_{\delta}(M)$ by $L^2_{\delta}(M)$ when $k=0$.\\
\end{defn}

\begin{defn} \label{def of Xkdelta}
\noindent We define the space $\mathcal{X}^k_{\delta}(M)$ to be the space of vector fields $X$ on $M$ with components $X^i := X(x^i)$ in $H^k_{\delta}(M)$, where $(x^1,x^2,x^3)$ is the standard cartesian coordinates. The norm we use is 
\begin{equation}
\norm{X}_{k,\delta}:= \sum_{l=0}^k \norm{|D^l X|}_{0,\delta-l}
\end{equation}
\end{defn}

\vv


\vv

\begin{defn} 
Let $H^k(S^2)$ be the usual $L^2$ space, when $k=0$, and Sobolov space, when $k\geq 1$, on $(S^2, \gamma_{\mathbb{S}^2})$. Let $\mathcal{M}^k(S^2)$ and $\mathcal{H}^k(S^2)$ be the space of metrics on $S^2$ and symmetric tensors on $S^2$, respectively, with components in $H^k(S^2)$. The norm we will use is as follows:
\begin{equation}
\norm{h}^2_{\mathcal{H}^k(S^2)} := \sum_{l=0}^k \norm{|\slashed D^l h|}^2_{L^2(S^2)}
\end{equation}
where $\slashed D$ is the covariant derivative on $S^2$ with respect $\gamma_{\mathbb{S}^2}$. \\
\end{defn}
\begin{defn}
\noindent Let $\Omega^k(S^2)$ be the space of 1-forms on $S^2$ with components in $H^k(S^2)$. The norm used on this space is as follows:
\begin{equation}
\norm{\omega}^2_{\Omega^k(S^2)} := \sum_{l=0}^k \norm{|\slashed D^l \omega|}^2_{L^2(S^2)} 
\end{equation}  
\end{defn}
\vv

\begin{defn} Let $t \in \Z_{\geq 0}$. We define the space $H^t_{\delta}\left([r_0,\infty); H^k(S^2)\right)$ to be the space of functions $u$ in $H^t_{loc}\left([r_0,\infty); H^k(S^2)\right)$ such that $\normmH{u}{t}{k}{\delta} <\infty$, where 

\begin{equation}
\normmH{u}{t}{k}{\delta}^2:= \sum_{t'=0}^t\int_{r_0}^{\infty} r^{-2\delta-1+2t'}   \norm{ \partial_r^{(t')} u(r)}^2_{H^k(S^2)} dr
\end{equation}

We also define the space $\CtoH{t}{\delta}{k}$ to be the space of continuous $H^k(S^2)$-valued functions $u$ on $[r_0,\infty)$ such that $\normmC{u}{t}{k}{\delta} <\infty$, where 

\begin{equation}
\normmC{u}{t}{k}{\delta}^2:= \sum_{t'=0}^t \sup_{r\geq r_0} \left( r^{-2\delta+2t'}   \norm{ \partial_r^{(t')} u(r)}^2_{H^k(S^2)}\right)
\end{equation}
\end{defn}
\vv

\vv

We then define the space $H^t_{\delta}\left([r_0,\infty); \mathcal{M}^k(S^2)\right)$ and $H^t_{\delta}\left([r_0,\infty); \mathcal{H}^k(S^2)\right)$ similarly to the above with norm
\begin{equation}
\normmH{h}{t}{k}{\delta}^2 := \sum_{t'=0}^t\int_{r_0}^{\infty} r^{-2\delta-1+2t'} \norm{\partial_r^{(t')} h(r)}^2_{\mathcal{H}^k(S^2)} dr
\end{equation}

\vv

\begin{defn}   \label{space-of-metrics}
Define $\mathcal{M}^k_{\delta}(M)$ to be the space of metrics on $[r_0,\infty) \times S^2$ of the form $dr^2+ g(r)$ where $g(r) = r^2(\gamma_{\infty} + h(r))$, $\gamma_{\infty} \in \mathcal{M}^k(S^2)$, and $h \in \HtoHH{2}{\delta}{k}$. The space $\mathcal{M}^k_{\delta}(M)$ can be naturally identified with an open subset of the Banach space $\mathcal{H}^k(S^2) \oplus \HtoHH{2}{\delta}{k}$. This makes $\mathcal{M}^k_{\delta}(M)$ an open Banach submanifold of $\mathcal{H}^k(S^2) \oplus \HtoHH{2}{\delta}{k}$ and, in particular, a Banach manifold. Given $g_0 \in \mathcal{M}^k_{\delta}(M)$, the tangent space $T_{g_0}M^k_{\delta}$ is isomorphic to the space of tensors $\tilde g$ of the form $\tilde g = r^2(\tilde \gamma_{\infty}+ \tilde h(r))$, where $\tilde \gamma_{\infty} \in \mathcal{H}^k(S^2)$ and $\tilde h \in  H^2_{\delta}\left([r_0,\infty); \mathcal{H}^k(S^2)\right)$, equipped with the norm 

\begin{equation}
\norm{\tilde g}_{\mathcal{M}^k_{\delta}} := \norm{\tilde \gamma_{\infty}}_{\mathcal{H}^k(S^2)} + \normmH{\tilde h}{2}{k}{\delta} 
\end{equation}

\end{defn}

\vv
\begin{defn} \label{spaces for u}
Let $t\geq 0$. Denote by $\AH{t}{k}{\delta}(M)$ and $\AC{t}{k}{\delta}(M)$ the spaces

\begin{equation}
\AH{t}{k}{\delta}(M) := \bigcap_{t'=0}^t \HtoH{t'}{\delta}{k-t'} , \qquad \AC{t}{k}{\delta}(M) :=  \bigcap_{t'=0}^t \CtoH{t'}{\delta}{k-t'}
\end{equation}
equipped with the norms

\begin{equation}
\norm{u}^2_{\AH{t}{k}{\delta}} :=\max_{0\leq t'\leq t}  \normmH{u}{t'}{k-t'}{\delta}^2 , \qquad \norm{u}^2_{\AC{t}{k}{\delta}} := \max_{0\leq t'\leq t}  \normmC{u}{t'}{k-t'}{\delta}^2 
\end{equation}

Note that 
\[ u \in \AH{t}{k}{\delta}(M) \quad \Longleftrightarrow \quad \text{for every $0\leq t'\leq t$,   } \,\,  \partial_r^{(t')} u \in \LtoH{\delta-t'}{k-t'} \]
\[ u \in \AC{t}{k}{\delta}(M) \quad \Longleftrightarrow \quad \text{for every $0\leq t'\leq t$, } \,\, \partial_r^{(t')} u \in \CtoH{0}{\delta-t'}{k-t'} \]
Denote the intersection of these spaces by $\AHC{t}{k}{\delta}(M)$ defined by
\begin{equation*}
\AHC{t}{k}{\delta}(M) := \AH{t}{k}{\delta}(M) \bigcap \AC{t}{k}{\delta}(M)
\end{equation*}
equipped with the norm 

\begin{equation}
\norm{u}_{\mathcal{A}^{(t,k)}_{\delta}}^2:= \max_{0\leq t'\leq t} ( \normmH{u}{t'}{k-t'}{\delta}^2 + \normmC{u}{t'}{k-t'}{\delta}^2 )
\end{equation}

\begin{defn}
We define ${\Omega}^{(t,k)}_{\delta}(M)$ to be the space of $1$-forms on $M$ with components in $\AHC{t}{k}{\delta}(M)$ with respect to the standard cartesian coordinates. 
\end{defn}

\begin{defn}
We define ${\Omega_{\mathcal{G}}}^{(t,k)}_{\delta}(M)$ to be the space of $1$-forms $\theta \in {\Omega}^{(t,k)}_{\delta}(M)$ satisfying the $\theta$-geodesic gauge condition introduced in proposition \ref{theta-gauge-defn}: 
\begin{equation}
\Delta_{g_{sc}} \left( \theta(\dd{r}) \right) = 0
\end{equation}
\end{defn}

\end{defn}

\vvv

Denote by $\slashed d$ the exterior derivative on $S^2$ and $\slashed \star$ the Hodge star operator on $(S^2, \gamma_{\mathbb{S}^2})$. In the next proposition, we list some important properties regarding the Hodge decomposition of 1-forms on $S^2$ and $M$ that will be repeatedly used in the rest of the paper.

\begin{prop}\label{hodge-prop} Let $k\in \Z_{\geq 0}$. \
\begin{enumerate}[label=\textbf{(\alph*)}]

\item A 1-form $\omega$ on $S^2$ lies in $\Omega^k(S^2)$ if and only if it can be written in the form

\begin{equation}
\omega = \slashed d a + \slashed \star \slashed d b
\end{equation}
where $a, b \in H^{k+1}(S^2)$. 

\item A 1-form $\theta$ on $M$ lies in ${\Omega}^{(t,k)}_{\delta}(M)$ if and only if it can be written in the form 

\begin{equation}
\theta = \slashed d a + \slashed \star \slashed d b + c dr
\end{equation}
where $ a,  b \in \AHC{t}{k+1}{\delta+1}(M)$ and $c \in \AHC{t}{k}{\delta}(M)$. 
\end{enumerate}

\end{prop}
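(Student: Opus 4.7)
The plan is to reduce part (b) to part (a) by a fiberwise Hodge decomposition in the radial variable, and then to carefully match the Cartesian-component norms defining $\Omega^{(t,k)}_\delta(M)$ with the intrinsic $H^k(S^2)$-valued norms on the leaves $S^2_r$. For (a) I would use the spherical harmonic basis $\{Y_{\ell m}\}_{\ell\geq 0,|m|\leq\ell}$ of $L^2(S^2)$: since $H^1_{\mathrm{dR}}(S^2)=0$, the families $\{\slashed d Y_{\ell m}\}_{\ell\geq 1}$ and $\{\slashed\star\slashed d Y_{\ell m}\}_{\ell\geq 1}$ together form an $L^2$-orthogonal basis of $1$-forms, so every $\omega\in\Omega^k(S^2)$ admits an expansion
\begin{equation*}
\omega = \sum_{\ell\geq 1}\sum_{|m|\leq\ell}\bigl(\alpha_{\ell m}\slashed d Y_{\ell m}+\beta_{\ell m}\slashed\star\slashed d Y_{\ell m}\bigr),
\end{equation*}
and setting $a:=\sum \alpha_{\ell m}Y_{\ell m}$ and $b:=\sum \beta_{\ell m}Y_{\ell m}$ (both mean-zero on $S^2$) produces $\omega=\slashed d a+\slashed\star\slashed d b$. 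Using $\|\slashed d Y_{\ell m}\|^2_{L^2(S^2)}=\ell(\ell+1)$ and the spectral characterization of $H^s(S^2)$ by the weights $(1+\ell(\ell+1))^s$, one obtains
\begin{equation*}
\|\omega\|^2_{\Omega^k(S^2)} \sim \sum_{\ell\geq 1,|m|\leq\ell}\ell(\ell+1)(1+\ell(\ell+1))^k(\alpha_{\ell m}^2+\beta_{\ell m}^2) \sim \|a\|^2_{H^{k+1}(S^2)}+\|b\|^2_{H^{k+1}(S^2)},
\end{equation*}
which proves both directions of (a) simultaneously.

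For part (b) I would split $\theta=\theta^T+c\,dr$ with $c:=\theta(\dd{r})$. Since $dr$ and $\dd{r}$ have uniformly bounded Cartesian components, the piece $c\,dr$ is handled by the immediate equivalence between $c\in\AHC{t}{k}{\delta}(M)$ and $c\,dr\in{\Omega}^{(t,k)}_{\delta}(M)$. For the tangential part I apply (a) fiberwise at each fixed $r$ to produce zero-mean scalars $a(r,\cdot),b(r,\cdot)$ on $S^2$ with $\theta^T(r)=\slashed d a(r)+\slashed\star\slashed d b(r)$ together with the pointwise-in-$r$ bound
\begin{equation*}
\|a(r)\|^2_{H^{k+1}(S^2)}+\|b(r)\|^2_{H^{k+1}(S^2)} \lesssim \|\theta^T(r)\|^2_{\Omega^k(S^2)}.
\end{equation*}
Because $\slashed d$ and $\slashed\star$ depend only on the angular variables, they commute with $\partial_r$, so the same decomposition governs every $\partial_r^{t'}\theta^T$ with coefficients $\partial_r^{t'}a$ and $\partial_r^{t'}b$. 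The key bookkeeping step is the observation that a purely tangential $1$-form has Cartesian pointwise norm equal to $r^{-1}$ times its $\gamma_{\mathbb{S}^2}$-norm, since the metric induced on $S^2_r$ by the Euclidean structure is $r^2\gamma_{\mathbb{S}^2}$; absorbing this $r^{-1}$ factor into the weight $r^{-2\delta-1}$ of $L^2_\delta(M)$ shows that Cartesian weight $\delta$ for tangential components corresponds precisely to intrinsic $H^k(S^2)$-valued weight $\delta+1$. Integrating the fiberwise estimate in $r$ against the weights of $\AHC{t}{k+1}{\delta+1}(M)$, and performing the analogous supremum computation for the $\mathcal{A}_C$ components, closes (b) in both directions.

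The main obstacle is pinning down this dictionary between Cartesian weighted norms and intrinsic angular norms across all orders of angular and radial differentiation: the shifts $k\mapsto k+1$ and $\delta\mapsto\delta+1$ between $\theta$ and $(a,b)$ come from combining (i) the loss of one angular derivative when passing from $a$ to $\slashed d a$, and (ii) the extra $r^{-1}$ factor in the Cartesian norm of any tangential $1$-form coming from the $r$-dependent scaling of the angular metric. Once this matching is established uniformly across the mixed indices $(t',k-t')$ that enter the definition of $\AHC{t}{k}{\delta}(M)$, the proposition follows from (a) together with Fubini's theorem and the uniform equivalence $\|\slashed d a\|_{H^k(S^2)}\sim\|a\|_{H^{k+1}(S^2)}$ on zero-mean functions on $S^2$.
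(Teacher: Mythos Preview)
Your proposal is correct and follows essentially the same route as the paper: fiberwise Hodge decomposition $\theta=\slashed d a+\slashed\star\slashed d b+c\,dr$ together with the observation that the $r$-scaling of the induced metric on $S_r$ produces exactly the weight shift $\delta\mapsto\delta+1$ for $a,b$. The paper encodes this shift via an integration-by-parts identity on $(S_r,r^2\gamma_{\mathbb{S}^2})$ relating $\|\theta^{T_r}\|_{\Omega^1}$ to $\|\slashed\Delta a\|_{L^2}+\|\slashed\Delta b\|_{L^2}$, whereas you phrase it as the Cartesian-versus-$\gamma_{\mathbb{S}^2}$ norm dictionary for tangential forms; these are the same computation, and your explicit spherical-harmonic treatment of part~(a) is a mild refinement of the paper's appeal to standard Hodge theory.
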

\begin{proof}
(a) follows from standard results on the Hodge decomposition of 1-forms on closed manifolds (see \cite{hodge}). 

\vv

To prove (b), fix a 1-form $\theta$ on $M$. Then for each $r \in [r_0,\infty)$, the projection $\theta^{T_r}$ of $\theta$ on the $S_r :=\{r\} \times S^2$ admits the decomposition 
\begin{equation}
\theta^{T_r} = \slashed d a(r) + \slashed \star \slashed d b(r)
\end{equation}
for functions $a(r) , b(r)$ on $S^2$. Letting $c = \theta(\dd{r})$, we have that for each $r \in [r_0,\infty)$

\begin{equation}
\theta|_{S_r} = \theta^{T_r} + c(r) dr = \slashed d a(r) + \slashed \star \slashed d b(r) + c(r) dr
\end{equation}
where $\theta|_{S_r}$ is the restriction of $\theta$ on $S_r$. Equivalently, we have 
\begin{equation}
\theta = \slashed d a + \slashed \star \slashed d b + c dr
\end{equation}
for functions $a,b,c$ on $M$. 

\vv

 By (a), $\theta^{T_r} \in \Omega^k(S^2)$ if and only if $a(r), b(r) \in H^{k+1}(S^2)$. Furthermore, we compute 
 
 \begin{equation}
 \slashed d \slashed \star \theta^{T_r} = (-\slashed \Delta a ) d \sigma_{\gamma_{\mathbb{S}^2}}, \qquad  \slashed \star \slashed d \theta^{T_r} = -\slashed \Delta b 
 \end{equation}
 where $\slashed \Delta$ is the Laplacian and $d\sigma_{\mathbb{S}^2}$ is the volume form on $(S^2, \gamma_{\mathbb{S}^2})$. After integrating by parts on $(S_r, r^2 \gamma_{\mathbb{S}^2})$, we then get 
\begin{equation}
\int_{S_r} (|\slashed \nabla_{S_r} \theta^{T_r}|_{S_r}^2 + \frac{1}{r^2} |\theta^{T_r}|_{S_r}^2) d\sigma_{S_r} = \int_{S_r} (|\slashed \Delta_{S_r} a(r)|^2 + |\slashed \Delta_{S_r} b(r)|^2)d\sigma_{S_r}
\end{equation}

where $\slashed \nabla_{S_r}$, $\slashed \Delta_{S_r}$, $|\cdot|_{S_r}$ and $d\sigma_{S_r}$ are with respect to the round metric $r^2 \gamma_{\mathbb{S}^2}$ on $S_r$. In particular, we deduce that 

\begin{equation}
\begin{aligned}
\frac{1}{C} \left( \int_{r_0}^{\infty} r^{-2\delta-3} (\norm{a(r)}^2_{H^2(S^2)} + \norm{b(r)}^2_{H^2(S^2)}) dr\right)\\ \leq \int_{r_0}^{\infty} r^{-2\delta-1} \norm{\theta^{T_r}}^2_{\Omega^1(S^2)} dr \leq \qquad \qquad  \\ C \left( \int_{r_0}^{\infty} r^{-2\delta-3} (\norm{a(r)}^2_{H^2(S^2)} + \norm{b(r)}^2_{H^2(S^2)}) dr\right)
\end{aligned}
\end{equation}

for some constant $C>0$ depending only on $\delta$ and $r_0$. 
 It then follows directly that $\theta \in {\Omega}^{(0,1)}_{\delta}(M)$ if and only if $a,b \in \AHC{0}{2}{\delta+1}(M)$ and $c \in \AHC{0}{1}{\delta}(M)$. Using similar arguments, we conclude that $\theta \in {\Omega}^{(t,k)}_{\delta}(M)$ if and only if $a,b \in \AHC{t}{k+1}{\delta+1}(M)$ and $c \in \AHC{t}{k}{\delta}(M)$

\end{proof}

\subsection{The Main Theorem} \label{mainthm-sec}

This section records the main local well-posedness result for the stationary vacuum extension
problem in our fixed gauges, and explains how it immediately implies a corresponding local
well-posedness statement for Bartnik data prescribed on \emph{boosted} hypersurfaces.  The rest
of the paper is devoted to proving Theorem~\ref{main-thm} below.  The boosted statement is a
formal corollary once one observes that changing the spacelike hypersurface through $\partial M$
amounts to a Lorentz boost in the spacetime normal bundle of $\partial M$, and hence the Bartnik
data transform by an explicit, universal formula.

\begin{thm}\label{main-thm}
Let $M:= \R^3\setminus B_{r_0}$ where $r_0>2m_0$ and $m_0>0$. Let $\delta \in (-1,-\frac{1}{2}]$ and $k\geq 5$. Let $( \gammasc, \frac{1}{2} \trksc, 0, 0)$ be the Bartnik data on $\partial M$ in the $\{t=0\}$ hypersurface of Schwarzschild spacetime. There exists a neighbourhood $\mathcal{U}$ of $( \gammasc, \frac{1}{2} \trksc, 0, 0)$ in $$ \mathcal{M}^{k+1}(S^2) \times H^k(S^2) \times H^k(S^2) \times \Omega^k(S^2) $$ and a unique $\mathcal{C}^1$ map ${\bf H}^{\mathcal{G}}: (\bartnik) \mapsto (g, u,\theta)$ on $\mathcal{U}$ into $$ \mathcal{M}^k_{\delta}(M) \times \mathcal{A}^{(2,k+1)}_{\delta}(M) \times  {\Omega_{\mathcal{G}}}^{(2,k+1)}_{\delta}(M) $$
such that $(M,g,u,\theta)^{(4)}$ is a stationary vacuum extension with Bartnik data $(\bartnik)$. 
\end{thm}

\vv

In particular, given Bartnik data $(\bartnik)$ in $\mathcal{U}$, there exists a unique asymptotically flat stationary spacetime $(\mathcal{M}, \gsptime)$ defined by $(M,g,u,\theta)^{(4)}$ that is close to the Schwarzschild solution $(\mathcal{M}, \gsptime_{sc})$ in which 
\begin{itemize}
\item $(\mathcal{M}, \gsptime)$ satisfies Einstein's vacuum equations.  
\item The Bartnik data of the boundary $\partial M$ in the initial data set $(M, \gsp, \Pi)$ of the $\{t=0\}$ hypersurface in $(\mathcal{M}, \gsptime)$ coincides with $(\bartnik)$. 
\item The metric $g$ satisfies the $g$-geodesic gauge, i.e. $g$ can be globally written in the form $g = dr^2 + g(r)$. 
\item The 1-form $\theta$ satisfies the $\theta$-geodesic gauge, i.e. $\theta$ satisfies $\Delta_{sc} \left( \theta(\dd{r}) \right)=0$. 
\end{itemize}

\vv

We now deduce a local well-posedness statement near \emph{any} Schwarzschild sphere sitting inside a
spacelike graph hypersurface $\{t=f(x)\}$ in the Schwarzschild spacetime preserving the intrinsic geometry of $\partial M$ (i.e. $f|_{\partial M} = 0$).  The key point is that the
change of hypersurface induces a Lorentz boost in the spacetime normal bundle of $\partial M$; the
induced Bartnik data transform by a universal formula depending only on the corresponding rapidity.

\begin{lem}[Boost transformation of Bartnik data]\label{lem:boost-bartnik}
Let $(\mathcal{M}, \gsptime)$ be a stationary spacetime defined by $(M,g,u,\theta)^{(4)}$. Let $f$ be a $\mathcal{C}^2$ function on $M$ vanishing on the boundary $\partial M$ so that its graph is spacelike. Denote by $(M, \gsp, \Pi)$ and $(M_f, \gsp_f, \Pi_f)$ the initial data set defined by $\{t=0\}$ and $\{t=f(x)\}$ in $(\mathcal{M}, \gsptime)$. Denote by $(\bartnik)$ and $(\bartnikf)$ the Bartnik data of $\partial M$ in $(M,\gsp, \Pi)$ and $(M_f, \gsp_f, \Pi_f)$. 

There exists a unique $\mathcal{C}^1$ function $\varphi$ on $\partial M$, called the rapidity, such that

\begin{itemize}
\item ${\gammao}_f = \gammao$
\item ${\trko}_f = \cosh (\varphi) \trko +\sinh(\varphi) \trkpio$
\item ${\trkpio}_f = \sinh (\varphi) \trko + \cosh(\varphi) \trkpio$
\item ${\omegao}_f = \omegao + \slashed d \varphi$
\end{itemize}

The second and third Bartnik data can equivalently be written in matrix form as follows: 

\begin{equation} \label{bartnik change} \begin{pmatrix}
{\trko}_f\\[2pt]
{\trkpio}_f
\end{pmatrix}
=
\begin{pmatrix}
\cosh(\varphi) & \sinh(\varphi)\\
\sinh(\varphi) & \cosh(\varphi)
\end{pmatrix}
\begin{pmatrix}
\trko\\[2pt]
\trkpio
\end{pmatrix}\end{equation}

\vv

In particular, the map describing how the Bartnik data changes, defined by
\[(\bartnik) \mapsto (\bartnikf),\]
depends only on the rapidity $\varphi$, and hence we denote it by $\mathcal{B}_{\varphi}$. Furthermore, if $\varphi \in H^{k+1}(S^2)$, then $\mathcal{B}_{\varphi}$ is a $\mathcal{C}^1$ diffeomorphism of $$\mathcal{M}^{k+1}(S^2)\times H^k(S^2)\times H^k(S^2)\times \Omega^k(S^2)$$ with inverse $\mathcal{B}_{-\varphi}$. 
\end{lem}

\begin{proof}
Since $f$ vanishes on the boundary, the induced metric on $\partial M$ is preserved, i.e. ${\gammao}_f = \gammao$. 

\vv

Denote by $\bf{T}$ and $\bf{n}$ the future-oriented unit normal to $(M, \gsp)$ in $(\mathcal{M}, \gsptime)$ on $\partial M$ and the unit normal to $\partial M$ in $(M, \gsp)$, respectively. Define similarly the vector fields ${\bf{T}}_f$ and ${\bf{n}}_f$ on $\partial M$ for the $(M_f, \gsp_f)$ hypersurface. 
For each $p \in \partial M$, the normal space $N_p \partial M \subset T_p \mathcal{M}$ is a 2-dimensional Lorentzian plane, and the ordered pairs ${(\bf{T}, \bf{n})}_p$ and $({\bf{T}}_f, {\bf{n}}_f)_p$ are both positively oriented orthonormal bases of $N_p\partial M$. They differ by a unique element of $SO^{+}(1,1)$, implying that for a unique number $\varphi(p)$, we have

\begin{equation} \label{Tf nf}
{\bf{T}}_f = \cosh(\varphi) \bf{T} + \sinh(\varphi) \bf{n}, \quad {\bf{n}}_f = \sinh(\varphi) \bf{T} + \cosh (\varphi) \bf{n}
\end{equation}

This defines the rapidity $\varphi$ as a function on $\partial M$. Since $f$ is $\mathcal{C}^2$, the orthonormal frame $({\bf{T}}_f, {\bf{n}}_f)$ is $\mathcal{C}^1$, implying that $\varphi \in \mathcal{C}^1(\partial M)$. In fact, the rapidity is determined from the normal derivative of $f$ according to the following formula: 

\begin{equation}\label{varphi f}
 N^{-1} \frac{{\bf n} (f)}{1-e^{2u} \theta ({\bf n}) {\bf n}(f)} = \tanh \varphi,
 \end{equation}

\vv

Denote by $\bf{H}$ the mean curvature vector field of $\partial M$ in $(\mathcal{M}, \gsptime)$, which depends only on the embedding of $\partial M$ in $\mathcal{M}$. Since $\bf{H}$ is a section of the normal bundle $N \partial M$, we can write it as follows: 

\[{\bf{H}} = -\trkpio {\bf T} + \trko {\bf n} = -{\trkpio}_f {\bf{T}}_f + {\trko}_f {\bf{n}}_f \]

The above directly implies equation \eqref{bartnik change}.  

\vv

Finally, for $\omegao$ and ${\omegao}_f$, we compute using equation \eqref{Tf nf} that for a vector field $X$ on $\partial M$,

\begin{align*}
{\omegao}_f  (X)&= \Pi_f ( X, {\bf{n}}_f) \\
&= \gsptime(\nabla_X {\bf{T}}_f, {\bf{n}}_f)\\ 
&= \gsptime \bigg(\nabla_X \Big( \cosh(\varphi) \bf{T} + \sinh(\varphi) \bf{n} \Big), \sinh(\varphi) \bf{T} + \cosh (\varphi) \bf{n} \bigg)\\
&= \bigg( \sinh^2(\varphi) \gsptime({\bf T}, {\bf T}) + \cosh^2(\varphi) \gsptime({\bf n}, {\bf n}) \bigg) X(\varphi) + (\cosh^2(\varphi) - \sinh^2(\varphi)) \gsptime(\nabla_X {\bf T}, {\bf n})\\
&= X(\varphi) + \Pi(X, {\bf n})\\
&= (\omegao + \slashed d \varphi)(X)
\end{align*}

implying that 
\[{\omegao}_{f} = \omegao + \slashed d \varphi\]

as needed

\end{proof}

\vv

\begin{cor}[Local well-posedness near a boosted Schwarzschild slice]\label{cor:boosted-main}

Consider an arbitrary $\varphi$-boosted coordinate sphere in the Schwarzschild spacetime $(\mathcal{M}, {\gsptime}_{sc})$, where the rapidity $\varphi$ is in $H^{k+1}(S^2)$. There exists a neighbourhood $\mathcal{U}_{\varphi}$ of $\mathcal{B}_{\varphi}( \gammasc, \frac{1}{2} \trksc, 0, 0)$ in $$ \mathcal{M}^{k+1}(S^2) \times H^k(S^2) \times H^k(S^2) \times \Omega^k(S^2) $$ and a unique $\mathcal{C}^1$ map ${\bf H}^{\mathcal{G}}_{\varphi}: (\bartnik) \mapsto (g, u,\theta)$ on $\mathcal{U}_{\varphi}$ into $$ \mathcal{M}^k_{\delta}(M) \times \mathcal{A}^{(2,k+1)}_{\delta}(M) \times  {\Omega_{\mathcal{G}}}^{(2,k+1)}_{\delta}(M) $$
such that $(M,g,u,\theta)^{(4)}$ is a stationary vacuum extension with Bartnik data $(\bartnik)$. 
\end{cor}
\begin{proof}
Choose $\mathcal{U}_{\varphi} = \mathcal{B}_{\varphi} (\mathcal{U})$, where $\mathcal{U}$ is from theorem \ref{main-thm}. Define $(g,u,\bar \theta)$ by 

\[(g,u,\bar \theta) := {\bf H}^{\mathcal{G}} \circ\mathcal{B}_{-\varphi} (\bartnik) \]

where ${\bf H}^{\mathcal{G}}$ is from theorem \ref{main-thm}. \\

Then the spacetime $(\mathcal{M}, \gsptime)$ defined by $(M, g,u,\bar \theta)^{(4)}$ is a stationary vacuum extension with Bartnik data $\mathcal{B}_{-\varphi} (\bartnik)$. In particular, the $\{t=0\}$ initial data set in this spacetime has Bartnik data $\mathcal{B}_{-\varphi} (\bartnik)$ on $\partial M$. We then need to show that there exists a unique initial data set in $(\mathcal{M}, \gsptime)$, defined by $t = f(x)$, with Bartnik data $(\bartnik)$ on $\partial M$ such that the corresponding $1$-form $\theta:= \bar \theta +df$ satisfies the $\theta$-geodesic gauge. If so, then $(M, g,u, \theta)^{(4)}$, with $\theta:= \bar \theta+ df$, is the unique stationary vacuum extension with Bartnik data $(\bartnik)$ as needed.

\vv

It then suffices to show that there exists a unique $\mathcal{C}^2$ function $f$, with $df = \mathcal{O}_1(r^{\delta})$, such that the rapidity of the $t=f(x)$ slice in $(\mathcal{M}, \gsptime)$ coincides with $\varphi$, and the $\theta$-geodesic gauge
\begin{equation}
\Delta_{g_{sc}} \bigg( (\bar \theta + df)(\dd{r}) \bigg) = 0
\end{equation}

holds. This is equivalent to the following elliptic boundary value problem for $f$: 

\begin{equation}
\begin{cases}
\Delta_{g_{sc}} \partial_r f = - \Delta_{g_{sc}} \left( \bar \theta(\dd{r}) \right), & \text{ on $M$}\\
{\bf n} (f) = \frac{\tanh \varphi}{N^{-1} + e^{2u} \theta({\bf n}) \tanh \varphi} & \text{on $\partial M$}\\
f = 0, & \text{on $\partial M$}\\
df = \mathcal{O}_1(r^{\delta})
\end{cases}
\end{equation}

where ${\bf n}$ is the unit normal of $\partial M$ in $(M, \gsp)$ and equation \ref{varphi f} was used. Note that due to the second boundary condition, tangential derivatives of $f$ vanish on $\partial M$ making the first boundary condition an equation on $\partial_r f$. The existence and uniqueness of a $\mathcal{C}^1$ function $\partial_r f$, and hence of $f$, follows similarly to the argument in the proof of proposition \ref{uniquef-prop}. 

\vv

The desired unique $\mathcal{C}^1$ map ${\bf H}^{\mathcal{G}}_{\varphi}$ is then given by: 
\[ (\bartnik) \mapsto (g,u, \bar \theta+df),\]
where $(g,u,\bar \theta) = {\bf H}^{\mathcal{G}} \circ \mathcal{B}_{-\varphi}(\bartnik)$ and $f$ is the solution to the above system.
\end{proof}

\section{Reduction of the problem} \label{reduction-sec}

First, we introduce some important notation that will be used for the rest of this paper (some of which has already been defined in earlier sections). \\

Let $(\mathcal{M}, \gsptime)$ be a stationary spacetime defined by $(M,g,u, \theta)^{(4)}$ (see definition \ref{stationary-def}), where $M = \R^3\setminus B_{r_0}$, $\mathcal{M} = \R\times M$, and $g$, $u$, $\theta$ are a Riemannian metric, a function, a 1-form on $M$ respectively. We define (or have defined) the following: 

\begin{itemize}
\item $\mathfrak{g}$ is the induced metric on the $\{t=0\}$ hypersurface, M, in $\mathcal{M}$. 
\item $\Pi$ is the second fundamental form of the $\{t=0\}$ hypersurface in $\mathcal{M}$. 
\item $\bar \theta := e^{2u} \theta$
\item $N$ is the function defined by $N:= \frac{\sqrt{1-e^{4u}|\theta|^2_g}}{e^u}$. 
\item $K_{\mathfrak{g}}$ is the second fundamental form of $\partial M$ in $(M, \gsp)$. 
\item $K$ is the second fundamental form of the sphere $S_r$ of radius $r$ in $(M, g)$. $\hat K$ is the traceless part of $K$. 
\item $\ngf$ is the inward unit normal on $\partial M$ in $(M, \gsp)$. 
\item $\ngg$ is the inward unit normal on $\partial M$ in $(M, g)$. 
\item $v_{\gsp}$ and $\lambda_{\gsp}$ are the vector field and the function on $S^2$ satisfying $\ngf = \lambda_{\gsp} \ngg+v_{\gsp}$
\item $\eta$ is the twist 1-form defined by $\eta := -\frac{1}{2} e^{4u} \star_g d\theta$.
\item The Bartnik boundary data in $\partial M$ of the initial data set $(M, \gsp, \Pi)$ is $(\bartnik)$, defined in definition \ref{bartnik-data-def}.
\item $\gamma_{\gsp}$ is the induced metric on $\partial M$ in $(M, \gsp)$. 
\item $g(r)$ is the induced metric on $S_r$ in $(M, g)$. 
\item $\slashed \nabla$ and $\cancel{div}$ is the connection and divergence on $(S^2, \gamma_{\mathbb{S}^2})$. 
\end{itemize}

We also remind the reader of the values of some key parameters for the Schwarzschild stationary vacuum extension $(M, g_{sc}, u_{sc} ,\theta_{sc})^{(4)}$. 

\begin{multicols}{2}
\begin{itemize}
\item $\mathfrak{g_{sc}} = \left( 1-\frac{2m_0}{r}\right)^{-1} dr^2 + r^2 \gamma_{\mathbb{S}^2}$
\item $g_{sc} = dr^2 + \left( 1-\frac{2m_0}{r} \right) r^2 \gamma_{\mathbb{S}^2} $
\item $trK_{\mathfrak{g_{sc}}} = \frac{2 \sqrt{1-\frac{2m_0}{r}}}{r} $
\item $ trK_{sc} = \frac{2(r-m_0)}{r(r-2m_0)}$
\item  $\hat K_{\gsc} = 0$
\item $\hat K_{sc} = 0$
\item $\gamma_{\mathfrak{g_{sc}}} = (r_0)^2 \gamma_{\mathbb{S}^2}$ 
\item $\gamma_{sc} = r_0(r_0-2m_0) \gamma_{\mathbb{S}^2}$ 
\item $\Pi_{\gsc} = 0$
\item $\theta_{\gsc} = 0$. 
\item $u_{sc} = \ln{\sqrt{1-\frac{2m_0}{r} }}$
\item ${R_{\partial M}}_{sc} = \frac{2}{r_0(r_0-2m_0)}$
\end{itemize}
\end{multicols}

\vv

Using a straightforward adaptation of the proof of the reduction theorem in \cite{ahmed}, we take advantage of the vanishing of the Weyl tensor in $3$ dimensions and utilize the geometry equations describing the evolution of the metric in the geodesic foliation to reduce the stationary vacuum equations to a coupled system of elliptic-transport equations.

\begin{redthm} \label{prop-reduction}
Let $(\bartnik)$ be Bartnik data. Let $(\mathcal{M}, \gsptime)$ be a stationary spacetime defined by $(M,g,u, \theta)^{(4)}$, where $g = dr^2 + g(r)$ ($g$ can be written globally in the geodesic gauge).  \\

$(M,g,u, \theta)^{(4)}$ is a stationary vacuum extension with Bartnik data $(\bartnik)$ if and only if 

\begin{align} \label{redeq1}
\Delta_g u +2e^{-4u} |\eta|^2_{g} &=0, &\quad \text{on $M$}  \\[0.3cm] \label{redeq2}
\partial_r trK + \frac{1}{2}trK^2 + |\hat K|^2 + 2(\partial_ru)^2 + 2e^{-4u} (\eta_r)^2 &= 0, &\quad \text{on $M$} \\[0.3cm] \label{redeq3}
 \begin{aligned}\begin{split} \nabla_r \hat K  + trK \hat K + \bigg[2 \slashed du \otimes \slashed du + 2e^{-4u} \eta^T \otimes \eta^T \\+ g(r) \left((\partial_r u)^2 + e^{-4u} (\eta_r)^2 - |\nabla u|^2 - e^{-4u} |\eta|^2 \right)\bigg]\end{split} \end{aligned} &= 0, &\quad \text{on $M$}\\[0.3cm] \label{redeq4}
d\eta &= 0, &\quad \text{on $M$}\\[0.3cm] 
\begin{aligned} \begin{split} 2|\slashed \nabla u|^2 + 2e^{-4u} |\eta|^2  - 2(\partial_r u)^2 \\ -2e^{-4u} (\eta_r)^2 - |\hat K|^2 - R_{\partial M} + \frac{1}{2}{trK}^2 \end{split} \end{aligned} &= 0, &\quad \text{on $\partial M$}\\[0.3cm] \label{redeq5}
2(\partial_r u) \slashed du + 2e^{-4u} \eta_r \eta^T - \cancel{div} (\hat K) + \frac{1}{2} \slashed dtrK  &=0, &\quad \text{on $\partial M$}\\[0.3cm] \label{redeq6}
 e^{-2u}g(r_0) - e^{2u} \theta^T \otimes \theta^T &= \gammao, &\quad \text{on $\partial M$}\\[0.3cm] \label{redeq7}
\mathrm{tr}K_{\mathfrak{g}} &= \trko, &\quad \text{on $\partial M$}\\[0.3cm]
 N \mathrm{div}_{\gamma_{\mathcal{B}}} (\bar {\theta^{\sharp}}^T) - N\Big(\mathrm{tr} K_{\mathcal{B}}\Big) \bar \theta(\ngf) &= \mathrm{tr}_{\partial M} \Pi_{\mathcal{B}}, &\quad \text{on $\partial M$} \\[0.3cm]
 \frac{N}{2} \mathcal{L}_{\bar \theta^{\sharp_{\gsp}}} \gsp (\ngf, \cdot) &= \omega_{\mathcal{B}} (\cdot), &\quad \text{on $\partial M$}
 \end{align}
\end{redthm}

\section{Proof of The Main Theorem} \label{proof-sec}

\subsection{The Setup} 

The space we will use for the Bartnik data $(\bartnik)$ is 
\begin{equation}
\mathcal{B}_{\mathcal{G}}^k := \mathcal{M}^{k+1}(S^2) \times H^k(S^2) \times H^k(S^2) \times \Omega^k(S^2)
\end{equation}

As for the Bartnik static extension problem, the contracted Codazzi equation in equation \eqref{redeq5} give rise to apparent obstructions to solvability that are in correspondence with the space of conformal Killing vector fields on $S^2$ (see section 5 in \cite{ahmed}). We will deal with these obstruction in the same way as in \cite{ahmed}: by introducing an artificial vector field $X$ to the definition of a solution. This means that the modified solution will now consist of a metric $g$, a function $u$, a 1-form $\theta$, {\it and} a vector field $X$. 

The definition of the artificial vector field $X$ and how it it enters the modified problem will be identical to what was done in \cite{ahmed}.  We will make the necessary definitions here and we refer the reader to section 5.1 and 5.2 in \cite{ahmed} for their motivation. 

\begin{defn} We define $\widehat{\mathcal{X}}^{2}_{\delta}(M)$ to be all vector fields $X \in \mathcal{X}^{2}_{\delta}(M)$ such that $X|_{\partial M}$ is tangent to $\partial M$ and $ \left( \mathcal{L}_{\dd{r}} X \right)^T = 0$ on $\partial M$.
\end{defn}

\begin{defn}
Define the space $\mathcal{X}_{\infty}$ as the space of conformal killing vector fields $X_{\infty}$ on $(M,g_{sc})$ of the form 
\begin{equation} \label{xinf}
X_{\infty} = f(r) \bigg( \cancel{div}_{\gamma_{\mathbb{S}^2}} (X_{CK}) \bigg) \dd{r} + h(r) \overline{X_{CK}}
\end{equation}
where $f = f(r)$ and $h= h(r)$ are smooth functions on $M$ such that $f= 0 $ and $h= 1$ on $\partial M$ and $X_{CK} $ is a conformal Killing vector field on $(\partial M, \gamma_{\mathbb{S}^2})$. \end{defn}

The artificial vector field $X$ will be chosen to live in the space $\bigg( \widehat{\mathcal{X}}^{2}_{\delta}(M) \oplus \mathcal{X}_{\infty}(M) \bigg)$. Given a vector field $X$, we will also defined the function $F(X)$ on $M$ by 
\begin{equation}
F(X) := e^{-r^4 \epsilon(r) |X|^2}
\end{equation}
where $|\cdot|$ is taken with respect to $g_{sc}$ and $\epsilon(r)$ is a smooth cut off function on $[r_0,\infty)$ satsifying $\epsilon(r) = 1$ for $r \geq r_0+2$ and $\epsilon(r) = 0$ for $r\leq r_0+1$. 

\vv

The space we will use for the modified solution $(g,u,X,\theta)$ is 

\begin{equation}
\mathcal{D}_{\mathcal{G}}^k :=  \mathcal{M}^k_{\delta}(M)  \times {\mathcal{A}^{(2,k+1)}_{\delta}}(M)  \times \bigg( \widehat{\mathcal{X}}^{2}_{\delta}(M) \oplus \mathcal{X}_{\infty}(M) \bigg) \times {\Omega_{\mathcal{G}}}^{(2,k+1)}_{\delta}(M)
\end{equation}

\vv

A subtle analytic issue arises in formulating the stationary vacuum extension problem as a nonlinear map between Banach manifolds: one must carefully specify the domain to ensure that the linearized operator is surjective. A critical component of the stationary vacuum equations is the vanishing of the exterior derivative of the twist 1-form, 

\begin{equation}
d \eta = 0
\end{equation}
where $\eta := -\frac{1}{2} e^{4u} \star_g d\theta$. 
 For $(g, u, \theta, X) \in \mathcal{D}{\mathcal{G}}^k$, the natural expectation would be that $d\eta$ lies in the space of exact 2-forms $d(\Omega^{(1,k)}_{\delta-1}(M))$. However, this space proves to be too large for our purposes. The key issue is that the $\theta$-geodesic gauge condition
\[\Delta_{sc} (\theta(\dd{r})) = 0\]
imposes constraints on $\theta$ that are not compatible with arbitrary exact 2-forms. Specifically, not every exact 2-form can arise as $d\eta$ for some $\theta$ satisfying our gauge condition. 

\vv

We must identify a proper subspace of exact 2-forms that captures precisely those arising from 1-forms $\theta$ in our gauge. This requires understanding how the gauge condition restricts the possible forms of $\eta$ and hence $d\eta$. The key insight is that the $\theta$-geodesic gauge condition implies special structure in the spherical harmonic decomposition of $\theta$. When we compute $\eta = -\frac{1}{2} e^{4u} \star_g d\theta$ and then $d\eta$, this structure is preserved in a specific way. After a detailed analysis (shown in the proof of the following lemma), we find the appropriate target space to be:

\begin{defn}
Define the space ${\Omega_{\mathcal{S}}}^{(1,k)}_{\delta-1}(M)$ to be the space of 1-forms $\sigma \in {\Omega}^{(1,k)}_{\delta-1}(M)$ of the form 

\begin{equation}
\sigma = \slashed d \alpha + \slashed \star \slashed d \partial_r \beta + \gamma dr
\end{equation}
where $\alpha \in \AHC{1}{k+1}{\delta}(M)$, $\beta \in \AHC{2}{k+2}{\delta+1}(M)$ satisfying $\beta(r_0) = 0$, and $\gamma \in \AHC{1}{k}{\delta-1}(M)$. 
\end{defn}

\begin{lem}
For any $g \in \mathcal{M}^k_{\delta}(M)$, $u \in \AHC{2}{k+1}{\delta}(M)$  and $\theta \in {\Omega_{\mathcal{G}}}^{(2,k+1)}_{\delta}(M)$, $$d\eta \in d \bigg({\Omega_{\mathcal{S}}}^{(1,k)}_{\delta-1}(M) \bigg)$$. 
\end{lem}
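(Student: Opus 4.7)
My plan is to exhibit $\eta$ itself as an element of ${\Omega_{\mathcal{S}}}^{(1,k)}_{\delta-1}(M)$; since trivially $d\eta = d\eta$, this yields the lemma with $\sigma = \eta$. The first step is to apply Proposition \ref{hodge-prop}(b) to $\theta \in {\Omega_{\mathcal{G}}}^{(2,k+1)}_{\delta}(M)$, writing
\[ \theta = \slashed d a + \slashed \star \slashed d b + c\, dr, \qquad a, b \in \AHC{2}{k+2}{\delta+1}(M), \quad c \in \AHC{2}{k+1}{\delta}(M). \]
The $\theta$-geodesic gauge becomes the scalar elliptic equation $\Delta_{\gsc} c = 0$ on $M$, which in the geodesic foliation decomposes into radial and spherical pieces as
\[ \slashed \Delta c = -\partial_r\!\left( \phi^2\, \partial_r c \right), \qquad \phi^2 := r(r-2m_0), \]
a form I will exploit crucially below.

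Next, I would compute $\eta = -\tfrac{1}{2} e^{4u} \star_g d\theta$ directly in terms of $a, b, c$. Using the splitting of $\star_g$ induced by $g = dr^2 + g(r)$, together with the conformal invariance of $\slashed \star$ on $1$-forms on $S^2$, one obtains schematically
\[ \eta^T = \tfrac{1}{2} e^{4u}\, \slashed d \partial_r b \;-\; \tfrac{1}{2} e^{4u}\, \slashed \star \slashed d(\partial_r a - c) \;+\; \mathcal{R}, \qquad \eta(\dd{r}) = \tfrac{1}{2} e^{4u} \phi^{-2}\, \slashed \Delta b + \mathcal{R}', \]
where $\mathcal{R}, \mathcal{R}'$ are bilinear remainders involving $\slashed du$ and the deviation of $g(r)$ from the round metric. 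Hodge-decomposing $\eta^T = \slashed d A + \slashed \star \slashed d B$ on each sphere $S_r$, with $B$ determined by the elliptic equation $\slashed \Delta B = -\slashed \star \slashed d \eta^T$, I set
\[ \alpha := A, \qquad \beta(r,\cdot) := \int_{r_0}^r B(s,\cdot)\, ds, \qquad \gamma := \eta(\dd{r}), \]
so that $\sigma := \slashed d \alpha + \slashed \star \slashed d \partial_r \beta + \gamma\, dr$ equals $\eta$ identically, with $\beta(r_0) = 0$ and $\partial_r \beta = B$ automatic.

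The remaining and only nontrivial task is the function-space membership $\beta \in \AHC{2}{k+2}{\delta+1}(M)$, and this is where the main obstacle sits and where the $\theta$-geodesic gauge earns its keep. A naive application of elliptic regularity gives only $B \in H^{k+1}(S^2)$ — one derivative short of the $H^{k+2}$ required by $\AHC{2}{k+2}{\delta+1}$ — and integration in $r$ does not gain spatial smoothness, so the generic estimate fails by precisely one derivative. The gauge condition provides the missing derivative: integrating the identity $\slashed \Delta c = -\partial_r(\phi^2 \partial_r c)$ in $r$ yields
\[ \slashed \Delta\!\left( \int_{r_0}^r c(s,\cdot)\, ds \right) = \phi^2(r_0)\, \partial_r c(r_0,\cdot) - \phi^2(r)\, \partial_r c(r,\cdot), \]
so elliptic regularity on $S^2$ promotes $\int_{r_0}^r c\, ds$ to $H^{k+2}$. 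Combining this with $a \in \AHC{2}{k+2}{\delta+1}$, which is already at the required spatial regularity, and estimating $\mathcal{R}$ by standard bilinear estimates on $S^2$ (valid for $k \geq 5$), closes the bound for $\slashed \Delta \beta$ in the appropriate weighted space; a final elliptic regularity step on $S^2$ gives the $H^{k+2}$ spatial regularity of $\beta$, while weighted Hardy inequalities in $r$ control the norms of $\beta$, $\partial_r \beta = B$, and $\partial_r^2 \beta = \partial_r B$ at weights $\delta+1$, $\delta$, and $\delta-1$ respectively. The precise analytic role of the gauge is thus to supply the one extra derivative for $\int_{r_0}^r c\, ds$ that makes the candidate $\sigma = \eta$ land in ${\Omega_{\mathcal{S}}}^{(1,k)}_{\delta-1}(M)$.
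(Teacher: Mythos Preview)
Your approach is the paper's: both exhibit $\eta$ itself as an element of ${\Omega_{\mathcal{S}}}^{(1,k)}_{\delta-1}(M)$, isolate the $\slashed\star\slashed d$-coefficient $B$ of $\eta^T$, set $\beta=\int_{r_0}^r B$, and use the gauge identity $\slashed\Delta c=-\partial_r(\phi^2\partial_r c)$ to supply the one missing spherical derivative on $\beta$ via $\slashed\Delta\beta$. Two details to tighten, both made explicit in the paper: the factor $e^{4u}$ must be carried through the $r$-integration by parts for the $c$-term (the commutator $\sim \partial_r u\cdot\phi^2\partial_r c$ is then absorbed by bilinear estimates), and the $a$-contribution to $\slashed\Delta\beta$, namely $\int e^{4u}\,\partial_r\slashed\Delta a$, is \emph{not} handled merely by ``$a\in H^{k+2}$'' but by the same total-derivative trick $\int e^{4u}\partial_r\slashed\Delta a = e^{4u}\slashed\Delta a\big|_{r_0}^{r}-\int 4e^{4u}\partial_r u\,\slashed\Delta a$.
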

\begin{proof}
Let $g=dr^2 +g(r) \in \mathcal{M}^k_{\delta}(M)$, $u \in \AHC{2}{k+1}{\delta}(M)$ and $\theta \in {\Omega_{\mathcal{G}}}^{(2,k+1)}_{\delta}(M)$. We can then write $\theta$ in the form 

\begin{equation}
\theta =  \slashed d a +  \slashed \star \slashed d b + c dr
\end{equation}

where $a \in \AHC{2}{k+2}{\delta+1}(M)$, $b \in \AHC{2}{k+2}{\delta+1}(M)$, and $c \in \AHC{2}{k+1}{\delta}(M)$ satisfying the $\theta$-gauge: 
\begin{equation} \label{charmonic}
\Delta_{sc} c = 0, \quad \text{on $(M,g_{sc})$}
\end{equation}

Since $\eta = -\frac{1}{2} e^{4u} \star_g d\theta$, we have that $\eta \in {\Omega}^{(1,k)}_{\delta-1}(M)$  and, hence, can be written in the form: 
\begin{equation}
-2\eta = \slashed d \bar \alpha + \slashed \star \slashed d \bar \beta + \bar \gamma dr
\end{equation}

where $ \bar \alpha \in \AHC{1}{k+1}{\delta}(M),  \bar \beta \in  \AHC{1}{k+1}{\delta}(M), \bar \gamma \in  \AHC{1}{k}{\delta-1}(M)$.

\vv

We now introduce a piece of notation. Given a differential form $\omega$ on $M$, we denote by $\omega^{T_r}$ the projection of $\omega$ on the sphere $S_r$ and $\omega^{\perp_r}$ the restriction of $\omega$ on the normal bundle. In particular, we have that $\omega^{\perp_r} = \omega - \omega^{T_r}$. 

\vv

We compute $(d \eta)^{T_r}$  in terms of the functions $\bar \alpha, \bar \beta, \bar \gamma$ to be 

\begin{align}
-2 (d \eta)^{T_r} &= -2 \slashed d ( \eta^{T_r})\\
&= - \slashed \Delta \bar \beta d\sigma_{\mathbb{S}^2}
\end{align}

We now compute $(d \eta)^{T_r} $ in terms of $g, u, \theta$. First, we compute

\begin{align}
-2 \eta^{T_r} &= e^{4u} \star_g (d\theta)^{\perp_r}\\
& = e^{4u} \star_g ( dr \wedge \partial_r \slashed d a + dr \wedge \partial_r \slashed \star \slashed d b - dr \wedge \slashed d c)\\
&= e^{4u} (\partial_r \slashed \star \slashed d a + \partial_r \slashed d b - \slashed \star \slashed d c)\\
\end{align}

Taking the exterior derivative, we get
\begin{align}
-2 (d \eta)^{T_r}&=  -2 \slashed d (\eta^{T_r} )\\
&=e^{4u} \bigg( 4  \slashed du \wedge (\partial_r \slashed \star \slashed d a + \partial_r \slashed d b - \slashed \star \slashed d c)  \bigg)\\ 
& \qquad + e^{4u} \bigg( - \partial_r \slashed \Delta a + \slashed \Delta c  \bigg) d\sigma_{\mathbb{S}^2}
\end{align}

In light of the $\theta$-gauge in equation \eqref{charmonic}, we can write 

\begin{equation}
\slashed \Delta c = r(r-2m_0) (\partial^2_r c + trK_{sc} \partial_r c)
\end{equation}

and hence we deduce that 

\begin{align} \label{beta}
&- \slashed \Delta \bar \beta d\sigma_{\mathbb{S}^2} = e^{4u} \bigg( 4  \slashed du \wedge (\partial_r \slashed \star \slashed d a + \partial_r \slashed d b - \slashed \star \slashed d c)  \bigg)\\ 
& \qquad + e^{4u} \bigg( - \partial_r \slashed \Delta a + r(r-2m_0) (\partial^2_r c + trK_{sc} \partial_r c)   \bigg) d\sigma_{\mathbb{S}^2}\nonumber
\end{align}

We can rewrite the function multiplied to $d\sigma_{\mathbb{S}^2}$ in the second term on the right hand side as follows: 

\begin{align} 
e^{4u} \bigg( - \partial_r \slashed \Delta a + r(r-2m_0) (\partial^2_r c + trK_{sc} \partial_r c)   \bigg) & = \partial_r \left[ e^{4u} \bigg( -\slashed \Delta a + r(r-2m_0) (\partial_r c + trK_{sc} c \bigg) \right] \\
&\quad - \bigg(  \partial_r (e^{4u}) \slashed \Delta a + \partial_r (e^{4u} r(r-2m_0) ) \partial_r c + \partial_r(e^{4u}r(r-2m_0)) c \bigg)
\end{align}

Letting $(e_1,e_2)$ be an orthonormal frame on $(S^2, \gamma_{\mathbb{S}^2})$, we can finally observe the following after integrating in $r$: 

\begin{equation}
\int_{r_0}^r \left[ e^{4u} \bigg( 4 \slashed du \wedge (\partial_r \slashed \star \slashed d a + \partial_r \slashed d b - \slashed \star \slashed d c)  \bigg)(e_1,e_2) \right] dr \quad \text{lies in $\AHC{2}{k}{\delta+1}(M)$}
\end{equation}

\begin{equation}
\begin{aligned}
\left. \left[ e^{4u} \bigg( -\slashed \Delta a + r(r-2m_0) (\partial_r c + trK_{sc} c \bigg) \right] \right|^r_{r_0}\qquad \qquad \qquad  \\ + \int_{r_0}^r \left[ \bigg(  \partial_r (e^{4u}) \slashed \Delta a + \partial_r (e^{4u} r(r-2m_0) ) \partial_r c + \partial_r(e^{4u}r(r-2m_0)) c \bigg) \right] dr
\end{aligned}
\quad \text{lies in $\AHC{2}{k}{\delta+1}(M)$}
\end{equation}

Using equation \eqref{beta}, we deduce that 

\begin{equation}
\int_{r_0}^r \left[ \slashed \Delta \bar \beta \right] dr \quad \text{lies in $\AHC{2}{k}{\delta+1}(M)$}
\end{equation}

which implies that 

\begin{equation}
\beta:= \int_{r_0}^r  \bar \beta dr \quad \text{lies in $\AHC{2}{k+2}{\delta+1}(M)$}
\end{equation}

We then conclude that $\eta$ is of the form

\begin{equation}
-2\eta = \slashed d \bar \alpha + \slashed \star \slashed d \partial_r \beta + \bar \gamma dr
\end{equation}

where $\bar \alpha \in \AHC{1}{k+1}{\delta}(M)$, $\beta \in \AHC{2}{k+2}{\delta+1}(M)$ satisfying $\beta(r_0) = 0$, and $\bar \gamma \in \AHC{1}{k}{\delta-1}(M)$, implying that $d\eta \in d \bigg({\Omega_{\mathcal{S}}}^{(1,k)}_{\delta-1}(M) \bigg)$ as needed.

\end{proof}

\vv

We can now define $ \Phi^{\mathcal{G}}$ by:

\begin{multline*}
 \Phi^{\mathcal{G}}: \mathcal{B}_{\mathcal{G}}^k  \times \mathcal{D}_{\mathcal{G}}^k \to \\ 
 \qquad  \mathcal{A}^{(0,k-1)}_{\delta-2}(M) \times  L^2_{\delta-2} \bigg( [r_0,\infty); H^k(S^2) \bigg) \times  L^2_{\delta-2} \bigg( [r_0,\infty) ; \mathcal{H}^k (S^2) \bigg) \times \mathcal{X}^{0}_{\delta-2}(M) \times  d \bigg({\Omega_{\mathcal{S}}}^{(1,k)}_{\delta-1}(M) \bigg) \\
 \times H^{k-1}(S^2) \times \Omega^{k-1}(S^2) \times \mathcal{H}^{k}(S^2) \times  H^{k}(S^2) \times H^k(S^2) \times \Omega^k(S^2)
 \end{multline*}

\begin{align} \label{Gtilde}
& \Phi^{\mathcal{G}}(\bartnik , g, u, \theta, X ) := \begin{pmatrix} \Delta_g u +2e^{-4u} |\eta|^2_{g} &  \text{on $M$} \lv \\ 
\partial_r trK + \frac{1}{2}trK^2 + |\hat K|^2 + 2(\partial_ru)^2 +2e^{-4u} (\eta_r)^2 &  \text{on $M$} \lv \\
\begin{aligned}\begin{split} \nabla_r \hat K  + trK \hat K + \bigg[2 \slashed du \otimes \slashed du + 2e^{-4u} \eta^T \otimes \eta^T \\+ g(r) \left((\partial_r u)^2 + e^{-4u} (\eta_r)^2 - |\nabla u|^2 - e^{-4u} |\eta|^2 \right)\bigg]\end{split} \end{aligned} &  \text{on $M$} \lv \\
\Delta_{g, conf} (F(X) X) & \text{on $M$} \lv \\
d\eta & \text{on $M$} \lv \\
\begin{aligned} \begin{split} 2|\slashed \nabla u|^2 + 2e^{-4u} |\eta|^2  - 2(\partial_r u)^2 \\ -2e^{-4u} (\eta_r)^2 - |\hat K|^2 - R_{\partial M} + \frac{1}{2}{trK}^2 \end{split} \end{aligned} & \text{on $\partial M$} \lv \\
2(\partial_r u) \slashed du + 2e^{-4u} \eta_r \eta^T - \cancel{div} (\hat K) + \frac{1}{2} \slashed dtrK + \omega( g, X_{\infty})  &  \text{on $\partial M$}  \lv   \\
 e^{-2u} g(r_0) -e^{2u} \theta^T \otimes \theta^T - \gammao &  \text{on $ \partial M$} \lv \\
\mathrm{tr}K_{\mathfrak{g}} - \trko & \text{on $\partial M$} \lv \\
N \mathrm{div}_{\gamma_{\mathcal{B}}} (\bar {\theta^{\sharp}}^T) - N\Big(\mathrm{tr} K_{\mathcal{B}}\Big) \bar \theta(\ngf) - \mathrm{tr}_{\partial M} \Pi_{\mathcal{B}} & \text{on $\partial M$} \lv \\
  \frac{N}{2} \mathcal{L}_{\bar \theta^{\sharp_{\gsp}}} \gsp (\ngf, \cdot) - \omega_{\mathcal{B}} (\cdot) & \text{on $\partial M$}
  \end{pmatrix} 
\end{align}

\begin{defn}
We say that a 4-tuple $(g,u, X, \theta)$ is a modified solution with Bartnik data $(\bartnik)$ if $ \Phi^{\mathcal{G}}(\bartnik, g, u, X,\theta) = 0$.
\end{defn}

\begin{remark} \label{X=0 iff} In view of proposition \ref{prop-reduction}, a modified solution $(g,u,X,\theta)$ with Bartnik data $(\bartnik)$ and $X=0$ implies that $(M, g, u ,\theta)^{(4)}$ is a stationary vacuum extension with Bartnik data $(\bartnik)$. 
\end{remark}

The main tool to obtain the existence of the modified problem is the implicit function theorem on Banach manifolds (see \cite{implicit}), which is stated here for convenience. 

\begin{thm}
Let $U \subset E$, $V \subset F$ be open subsets of Banach spaces $E$ and $F$, and let $\Psi: U \times V \to G$ be a $C^r$ map to a Banach space $G$, with $r\geq 1$. For some $x_0 \in U$, $y_0 \in V$, assume the partial derivatives in the second argument $D_2 \Psi(x_0,y_0): F \to G$ is an isomorphism. Then there are neighbourhoods $U_0$ of $x_0$ and $W_0$ of $\Psi(x_0,y_0)$ and a unique $C^r$ map $H: U_0 \times W_0 \to V$ such that for all $(x,w) \in U_0 \times W_0$, $\Psi(x, H(x,w)) = w$. 
\end{thm}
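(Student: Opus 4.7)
The plan is to recast the equation $\Psi(x,y)=w$ as a parametrized fixed-point problem in $y$ and then bootstrap the regularity of the solution map. Set $L := D_2\Psi(x_0,y_0)$, which is an isomorphism with bounded inverse $L^{-1}$ (automatic for an isomorphism of Banach spaces by the open mapping theorem), and let $w_0 := \Psi(x_0,y_0)$. For fixed parameters $(x,w)$, the equation $\Psi(x,y) = w$ is equivalent to $y = T_{(x,w)}(y)$, where
\[
T_{(x,w)}(y) := y - L^{-1}\bigl(\Psi(x,y) - w\bigr).
\]
The task therefore reduces to producing, for each $(x,w)$ in a neighbourhood of $(x_0,w_0)$, a unique fixed point of $T_{(x,w)}$ in a closed ball around $y_0$.

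First I would verify the hypotheses of the Banach contraction principle uniformly in the parameters. Differentiating in $y$ gives $D_y T_{(x,w)}(y) = I - L^{-1} D_2\Psi(x,y)$, which vanishes at $(x_0,w_0,y_0)$. Since $\Psi$ is $C^r$ with $r \geq 1$, there exists $\rho > 0$ such that $\|D_y T_{(x,w)}(y)\| \leq \tfrac{1}{2}$ whenever $\|x-x_0\|, \|w-w_0\| < \rho$ and $\|y-y_0\| \leq \rho$, with $\overline{B}(y_0,\rho) \subset V$. Shrinking $\rho$ if necessary, continuity of $\Psi$ ensures $\|T_{(x,w)}(y_0) - y_0\| = \|L^{-1}(\Psi(x,y_0) - w)\| \leq \rho/2$. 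The mean value inequality in Banach spaces then implies that $T_{(x,w)}$ maps $\overline{B}(y_0,\rho)$ into itself and is a $\tfrac{1}{2}$-contraction, so Banach's fixed-point theorem produces a unique $y = H(x,w) \in \overline{B}(y_0,\rho)$ solving $\Psi(x,H(x,w)) = w$. Taking $U_0 := B(x_0,\rho)$ and $W_0 := B(w_0,\rho)$ yields the claimed map together with its uniqueness on those neighbourhoods.

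Next I would establish the regularity of $H$. The contraction bound yields the Lipschitz estimate
\[
\|H(x,w) - H(x',w')\| \leq 2\|L^{-1}\|\cdot \bigl\|\Psi(x,H(x',w')) - \Psi(x',H(x',w')) - (w-w')\bigr\|,
\]
from which continuity of $H$ follows. Since $D_2\Psi(x,H(x,w))$ depends continuously on $(x,w)$ and equals $L$ at $(x_0,w_0)$, further shrinking of $U_0, W_0$ keeps it in the open set of invertible operators in $\mathcal{L}(F,G)$. Formally differentiating the identity $\Psi(x,H(x,w))=w$ suggests the candidate Fr\'echet derivatives
\[
D_x H = -[D_2\Psi]^{-1} D_1\Psi, \qquad D_w H = [D_2\Psi]^{-1},
\]
evaluated at $(x,H(x,w))$. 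A short argument using the first-order expansion of $\Psi$ around $(x,H(x,w))$ together with the Lipschitz estimate above upgrades these to genuine Fr\'echet derivatives. Since inversion $A \mapsto A^{-1}$ is smooth on the open set of invertible operators, the right-hand sides are compositions of $C^{r-1}$ maps, and an induction on $r$ gives $H \in C^r$.

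The step I expect to require the most care is the passage from candidate derivatives to bona fide Fr\'echet derivatives. One must show that the remainder in the first-order expansion of $\Psi$ at $(x,H(x,w))$, evaluated along the curve $(x+\xi,H(x+\xi,w+\omega))$, is of order $o(\|\xi\|+\|\omega\|)$; this relies on using the Lipschitz bound for $H$ to convert the joint $o(\cdot)$ estimate for $\Psi$ into an $o(\cdot)$ estimate in the parameters alone, which is precisely why we work on a bounded closed ball rather than globally. The remaining ingredients — the uniform contraction estimate, openness of invertible operators, the chain rule, and the smoothness of operator inversion — are standard Banach-space calculus.
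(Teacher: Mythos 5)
Your proposal is correct: recasting $\Psi(x,y)=w$ as the parametrized fixed-point problem for $T_{(x,w)}(y)=y-L^{-1}\bigl(\Psi(x,y)-w\bigr)$, verifying a uniform $\tfrac{1}{2}$-contraction on a closed ball, extracting the Lipschitz estimate, and bootstrapping $C^r$ regularity via the smoothness of operator inversion is precisely the classical proof of this theorem. The paper itself does not prove this statement---it is quoted as a standard result with a citation---so your argument coincides with the standard one in the cited literature; the only caveat worth making explicit is that the uniqueness you obtain is for solutions taking values in $\overline{B}(y_0,\rho)$ (equivalently, after shrinking $V$ to that ball), which is the usual local formulation of the theorem.
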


 \vvv

The map $\Phi^{\mathcal{G}}$ is indeed $\mathcal{C}^1$ near $( \gammasc, \frac{1}{2} \trksc,0,0, g_{sc}, u_{sc}, 0 ,0)$ (see section 5.2 in \cite{ahmed}). We can then differentiate $\Phi^{\mathcal{G}}$ at $( \gammasc, \frac{1}{2} \trksc,0,0, g_{sc}, u_{sc}, 0,0 )$ and study its derivative.

\vv
Let $ D\Phi^{\mathcal{G}}_{sc}$ denote the derivative of $ \Phi^{\mathcal{G}}$ with respect to the last five components evaluated at $( \gammasc, \frac{1}{2} \trksc,0,0, g_{sc}, u_{sc}, 0,0 )$ where
\begin{multline*}
D \Phi^{\mathcal{G}}_{sc}: T_{g_{sc}}\mathcal{M}^k_{\delta} \times {\mathcal{A}^{(2,k+1)}_{\delta}} \times  \bigg( \widehat{\mathcal{X}}^{2}_{\delta}(M) \oplus \mathcal{X}_{\infty}(M) \bigg) \times {\Omega_{\mathcal{G}}}^{(2,k+1)}_{\delta}(M)\\
  \to  \mathcal{A}^{(0,k-1)}_{\delta-2}(M) \times  L^2_{\delta-2} \bigg( [r_0,\infty); H^k(S^2) \bigg) \times  L^2_{\delta-2}\bigg( [r_0,\infty) ; \mathcal{H}^k (S^2) \bigg) \times  d \bigg({\Omega_{\mathcal{S}}}^{(1,k)}_{\delta-1}(M) \bigg)\\
 \times H^{k-1}(S^2) \times \Omega^{k-1}(S^2) \times \mathcal{H}^{k}(S^2) \times  H^{k}(S^2)  \times  H^{k}(S^2) \times \Omega^k(S^2) 
 \end{multline*}

\begin{prop} \label{Dphi-iso}
 $D \Phi^{\mathcal{G}}_{sc}$ is an isomorphism.
\end{prop}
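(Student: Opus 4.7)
The plan is to exploit the decoupling of the linearization at the Schwarzschild background, already flagged in the introduction, so as to reduce the isomorphism claim for $D\Phi^{\mathcal{G}}_{sc}$ to two independent sub-problems: the linearized static system of \cite{ahmed}, and a new elliptic boundary value problem for $\tilde\theta$.

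First I would verify the decoupling structure explicitly. Because $\theta_{sc} = 0$, and hence $\eta_{sc} = 0$, an audit of \eqref{Gtilde} gives two clean observations: every entry that depends on $\theta$ or $\eta$ at least quadratically has vanishing $\tilde\theta$-derivative at the background (this kills the $\tilde\theta$-dependence of the equations for $u$, $trK$, $\hat K$, the conformal Laplacian in $X$, the Gauss and contracted Codazzi constraints, and the boundary conditions prescribing $\gamma_{\mathcal{B}}$ and $\mathrm{tr}K_{\mathfrak{g}}$); conversely, the three entries that are linear in $\theta$ --- the bulk equation $d\eta = 0$ and the boundary conditions for $\mathrm{tr}_{\partial M}\Pi_{\mathcal{B}}$ and $\omega_{\mathcal{B}}$ --- have every contribution from varying $g$, $u$, or $X$ multiplied by a factor of $\theta_{sc}$, $\bar\theta_{sc}$, or $d\theta_{sc}$, all of which vanish. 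After a harmless rearrangement of the target components, this puts $D\Phi^{\mathcal{G}}_{sc}$ in block-diagonal form
\[
D\Phi^{\mathcal{G}}_{sc} = \begin{pmatrix} A & 0 \\ 0 & B \end{pmatrix},
\]
where $A$ acts on $(\tilde g, \tilde u, \tilde X)$ and $B$ acts on $\tilde\theta$.

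The block $A$ is, by inspection, precisely the operator whose invertibility was established in \cite{ahmed} to prove the static local well-posedness theorem recalled in the introduction; its isomorphism property can therefore be quoted directly. It then remains to show that
\[
B \colon {\Omega_{\mathcal{G}}}^{(2,k+1)}_{\delta}(M) \to d\bigl({\Omega_{\mathcal{S}}}^{(1,k)}_{\delta-1}(M)\bigr) \times H^{k}(S^2) \times \Omega^{k}(S^2)
\]
is an isomorphism. Explicitly, the image data corresponds to the boundary value problem for $\tilde\theta$ displayed in the introduction, subject to the gauge condition $\Delta_{sc}\tilde\theta_r = 0$. My approach, to be carried out in section \ref{bvp-sec}, is: (i) use Proposition \ref{hodge-prop}(b) to write $\tilde\theta = \slashed d a + \slashed \star \slashed d b + c\, dr$; (ii) solve the gauge equation $\Delta_{sc}c = 0$ from its induced boundary data using standard Laplacian theory on asymptotically flat manifolds, already packaged in \cite{ahmed}; (iii) substitute the resulting $c$ into the remaining equations so that the system decouples into separate elliptic problems for $a$ and $b$; (iv) diagonalize via the vector spherical harmonic expansion on $S^2$, reducing each mode to a second-order ODE on $[r_0, \infty)$; and (v) assemble the modewise solutions, verifying the required weighted decay at infinity and regularity in the $\AHC{2}{k+1}{\delta+1}$-type spaces.

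The main obstacle will be the last step: deriving estimates uniform in the angular momentum $\ell$ so that the Bochner-measurable norms defining the $\mathcal{A}^{(t,k)}_\delta$ spaces can be controlled after summation over modes. A closely related subtlety is that the target bulk space is precisely $d({\Omega_{\mathcal{S}}}^{(1,k)}_{\delta-1}(M))$, not an arbitrary space of exact $2$-forms; this restriction is exactly what matches the range of the operator $\tilde\theta \mapsto d\tilde\eta$ under the $\theta$-geodesic gauge, as already established in the lemma preceding \eqref{Gtilde}, and the $\theta$-geodesic gauge is precisely what excises the residual kernel generated by $\theta \mapsto \theta + df$ so that $B$ becomes injective.
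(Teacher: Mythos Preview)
Your proposal is correct and follows essentially the same route as the paper: you establish the block-diagonal decoupling at $\theta_{sc}=0$, quote the static result of \cite{ahmed} for the $(\tilde g,\tilde u,\tilde X)$ block, and then attack the $\tilde\theta$ block by the Hodge decomposition $\tilde\theta=\slashed d a+\slashed\star\slashed d b+c\,dr$, spherical-harmonic reduction to ODEs, and $\ell$-uniform estimates to close in the $\mathcal{A}$-spaces. The only minor refinement the paper makes explicit is the order of operations in the $\tilde\theta$ block: the $b$-equation decouples completely and is solved first, while $a$ and $c$ remain coupled through both the bulk ODE and the boundary conditions, so one first extracts $c_{m\ell}(r_0)$, $a_{m\ell}(r_0)$, $a'_{m\ell}(r_0)$ algebraically from the boundary system, then solves the homogeneous gauge equation for $c$ with this Dirichlet datum, and only then integrates the first-order ODE for $a$.
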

\begin{proof} 
The proof of this will be the content of section \eqref{linearized-sec} and \eqref{bvp-sec}. 
\end{proof}

\vv

We can now conclude the existence theorem for the modified problem.
\begin{thm}
There exists a neighbourhood $\mathcal{U}$ of $( \gammasc, \frac{1}{2} \trksc, 0, 0)$ in $\mathcal{B}_{\mathcal{G}}^k$ and a unique $\mathcal{C}^1$ map ${\bf H}^{\mathcal{G}}: (\bartnik) \mapsto (g, u, X,\theta)$ on $\mathcal{U}$ into $$ \mathcal{M}^k_{\delta}(M) \times \mathcal{A}^{(2,k+1)}_{\delta}(M) \times  \widehat{\mathcal{X}}^{2}_{\delta}(M) \oplus \mathcal{X}_{\infty}(M) \times  {\Omega_{\mathcal{G}}}^{(2,k+1)}_{\delta}(M)  $$ satisfying 
\begin{equation}  \Phi^{\mathcal{G}}\left(\bartnik, {\bf H}^{\mathcal{G}}\left(\bartnik \right)\right) = 0, \qquad \text{for all $(\bartnik) \in \mathcal{U}$}  \end{equation}

\end{thm}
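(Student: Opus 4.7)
The plan is to apply the implicit function theorem on Banach manifolds (stated right above) to the map $\Phi^{\mathcal{G}}$ at the Schwarzschild base point. The setting is already arranged so that this reduces to verifying the three standard hypotheses, and the real analytic content has been separated off into Proposition \ref{Dphi-iso}.

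First I would verify that $\Phi^{\mathcal{G}}$ vanishes at the base point $\bigl(\gammasc, \tfrac{1}{2}\trksc, 0, 0, g_{sc}, u_{sc}, 0, 0\bigr)$. This is essentially a direct check against the Reduction Theorem \ref{prop-reduction}: the Schwarzschild datum $(g_{sc}, u_{sc}, 0)$ is a genuine stationary vacuum extension realizing the Bartnik data $(\gammasc, \tfrac{1}{2}\trksc, 0, 0)$, so the first nine block components of $\Phi^{\mathcal{G}}$ vanish (with $\eta_{sc}=0$, $\hat K_{sc}=0$, and $\theta_{sc}=0$), and the two remaining $\Pi$-components and $\omega$-component vanish because $\bar\theta_{sc}=0$. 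The artificial vector field term $\Delta_{g,\mathrm{conf}}(F(X)X)$ vanishes at $X=0$, and the obstruction correction $\omega(g,X_\infty)$ does as well since $X_\infty$-projection of $X=0$ is $0$.

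Second I would invoke the $\mathcal{C}^1$ regularity of $\Phi^{\mathcal{G}}$ on a neighbourhood of the base point in $\mathcal{B}^k_{\mathcal{G}}\times\mathcal{D}^k_{\mathcal{G}}$. The analytic components (transport, elliptic, and algebraic) are the same as in \cite{ahmed}, where $\mathcal{C}^1$-regularity has already been established; the new ingredients compared to the static case are the $\theta$-dependent terms ($e^{-4u}|\eta|^2_g$, $\eta^T\otimes\eta^T$, $\bar\theta$-dependent boundary quantities, $d\eta$), which are polynomial and algebraic expressions in $\theta, u, g$ and their first derivatives, hence multilinear in the Banach-manifold coordinates — and so their $\mathcal{C}^1$-regularity is standard once one knows the multiplication laws in the $\mathcal{A}^{(t,k)}_\delta$ and $\Omega^{(t,k)}_\delta$ spaces hold (i.e., $k\geq 5$ ensures enough regularity for a Banach algebra structure at the relevant level). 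I would note that it is only at this point that the choice of function spaces for $\theta$, and the target space $d\bigl({\Omega_{\mathcal{S}}}^{(1,k)}_{\delta-1}(M)\bigr)$ for the $d\eta$-component, matters — and the preceding lemma guarantees that the image indeed lands in this target.

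Third, I would invoke Proposition \ref{Dphi-iso}, which asserts that the partial derivative $D\Phi^{\mathcal{G}}_{sc}$ in the $(g,u,X,\theta)$-slot at the base point is a Banach isomorphism between the specified spaces. With these three hypotheses (smoothness of $\Phi^{\mathcal{G}}$, vanishing at the base point, isomorphism of the partial derivative in the solution variable), the implicit function theorem produces an open neighbourhood $\mathcal{U}\subset\mathcal{B}^k_{\mathcal{G}}$ of $(\gammasc, \tfrac{1}{2}\trksc,0,0)$ and a unique $\mathcal{C}^1$ map ${\bf H}^{\mathcal{G}}$ on $\mathcal{U}$ valued in $\mathcal{D}^k_{\mathcal{G}}$ such that $\Phi^{\mathcal{G}}(\cdot,{\bf H}^{\mathcal{G}}(\cdot))=0$.

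Nothing in this theorem itself is the hard part — the genuine difficulty is deferred to Proposition \ref{Dphi-iso}, whose proof is precisely the object of sections \ref{linearized-sec} and \ref{bvp-sec}. In particular, the $(\tilde g,\tilde u,\tilde X)$-block of $D\Phi^{\mathcal{G}}_{sc}$ decouples from $\tilde\theta$ and reduces to the static linearized problem solved in \cite{ahmed}, so the real obstacle is the analysis of the decoupled $\tilde\theta$-block: proving that the boundary value problem displayed in the introduction, together with the gauge condition $\Delta_{sc}\tilde\theta(\partial_r)=0$, is uniquely solvable in $\Omega_{\mathcal{G}}^{(2,k+1)}_\delta$ with $\ell$-uniform estimates on the vector spherical harmonic modes.
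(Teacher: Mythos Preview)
Your proposal is correct and follows exactly the paper's approach: the paper's proof is the one-line ``Follows from proposition \ref{Dphi-iso} and the implicit function theorem on Banach manifolds,'' and you have simply spelled out the standard verification of the implicit-function-theorem hypotheses (vanishing at the base point, $\mathcal{C}^1$ regularity, isomorphism of the partial derivative) in more detail. As you correctly note, the real content is deferred to Proposition \ref{Dphi-iso}.
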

\begin{proof}
Follows from proposition \eqref{Dphi-iso} and the implicit function theorem on Banach manifolds. 
\end{proof}

The vanishing of the artificial vector field $X$, given a solution $(g,u,X, \theta)$ to the modified problem, follows in the same way as in the Bartnik static extension problem (see section 5.3 in \cite{ahmed}). In particular, if $(g,u,X,\theta) \in \mathcal{D}_{\mathcal{G}}^k$ is a modified solution and the metric $\gamma_{\infty}$ on the sphere at infinity is close enough to the round metric in the $\mathcal{H}^k(S^2)$ norm, then $X = 0$ and $(M, g, u ,\theta)^{(4)}$ is a stationary vacuum extension. After possibly shrinking the neighbourhood $\mathcal{U}$ of $( \gammasc, \frac{1}{2} \trksc, 0, 0)$ and using the continuity of ${\bf H}^{\mathcal{G}}$, we finally conclude the main theorem as stated in section \ref{mainthm-sec}.

\subsection{The Linearized Problem} \label{linearized-sec}

Denote by $D\Phi^{\mathcal{G}}_{sc}$ the linearization of $\Phi^{\mathcal{G}}$ at $(\gammasc, \frac{1}{2} \trksc, 0,0, g_{sc}, u_{sc}, 0, 0)$. 

Let $\tilde g \in T_{g_{sc}} \mathcal{M}^k_{\delta}$, $\tilde u \in \mathcal{A}^{(2,k+1)}_{\delta}(M)$, $\tilde X \in \widehat{\mathcal{X}}^{2}_{\delta}(M)$, $\tilde \theta \in  {\Omega_{\mathcal{G}}}^{(2,k+1)}_{\delta}(M)$. For small $t$, let $g(t)$, $u(t)$, $X(t)$ and $\theta(t)$ be smooth 1-parameter families satisfying

\begin{multicols}{2}
\begin{itemize}
\item $g(0) = g_{sc}$
\item $u(0) = u_{sc}$
\item $X(0) = 0$
\item $\theta(0) = 0$
\item $g'(0) = \tilde g$
\item $u'(0)= \tilde u$
\item $X'(0) = \tilde X$
\item $\theta'(0) = \tilde \theta$
\end{itemize}
\end{multicols}

Define the following 

\begin{multicols}{1}
\begin{itemize}
\item $\widetilde{trK} := \ddt trK(t)$
\item $\widetilde{\hat K} := \ddt \hat K(t)$
\item $\tilde \gamma := \ddt g(t)(r_0)$
\item $\tilde \omega := \ddt \omega(g(t), X_{\infty}(t) )$
\end{itemize}
\end{multicols}

where $X_{\infty}(t)$ is the projection of $X(t)$ into the space $\mathcal{X}_{\infty}$. By definition of $\omega$, we have that $\tilde \omega$ is a conformal Killing field on $(S^2, g_{sc}(r_0))$.

\vv

We compute $D\Phi^{\mathcal{G}}_{sc}$ to be:

\begin{lem}
\begin{align} 
 D \Phi^{\mathcal{G}}_{sc} (\tilde g, \tilde u, \tilde X, \tilde \theta) &=  \left. \frac{d}{dt} \right|_{t=0} \Phi^{\mathcal{G}}(\gammasc, \frac{1}{2} \trksc, 0, 0, g(t), u(t), X(t), \theta(t) ) \nonumber  \\ 
 &=  \begin{pmatrix} \Delta_{g_{sc}} \tilde u +  (\partial_r u_{sc}) ( \widetilde{trK}) \lv \\ 
\partial_r \widetilde{trK} + trK_{sc} \widetilde{trK}  + 4(\partial_r u_{sc})( \partial_r \tilde u) \lv \\
\mathcal{L}_{\dd{r}} {\widetilde{\hat K}}   \lv \\
\Delta_{g_{sc}, conf} \tilde X  \lv \\ 
d\bigg(- \frac{1}{2} e^{4u_{sc}} \star_{g_{sc}} d \tilde \theta \bigg) \lv \\
 - 4(\partial_r u_{sc}) (\partial_r \tilde u) + trK_{sc} \left. \widetilde{trK} \right|_{\partial M}  +  \frac{4}{r_0(r_0-2m_0)} \tilde u +2 \slashed \Delta_{\gamma_{sc}} \tilde u \lv \\
2(\partial_r u_{sc}) \slashed d\tilde u - \cancel{div} (\widetilde{\hat K}) + \tilde \omega  \lv \\
 e^{-2u_{sc}} \tilde \gamma - 2 r_0^2 \, \tilde u \,  g_{S^2} \lv \\
e^{u_{sc}} \left( \left. \widetilde{trK} \right|_{\partial M} + \frac{2}{r_0} \tilde u - 2\partial_r \tilde u \right) \lv\\
 e^{-u_{sc}} \left( \cancel{div}_{\gamma_{\mathfrak{g_{sc}}}} (e^{2u_{sc}} \tilde \theta^T) - \Big(\mathrm{tr} K_{\gsc} \Big) e^{3u_{sc}} \tilde \theta_r \right) \lv \\
\frac{e^{-u_{sc}}}{2} \mathcal{L}_{e^{2u_{sc}} \tilde \theta^{\sharp_{\gsc}}} \gsc \left(e^{u_{sc}} \dd{r}, \cdot\right)
  \end{pmatrix}
 \end{align}
 
 \end{lem}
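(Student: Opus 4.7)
The statement is a direct componentwise linearization of $\Phi^{\mathcal{G}}$ at the background point $(\gammasc, \frac{1}{2}\trksc, 0, 0, g_{sc}, u_{sc}, 0, 0)$. My plan is to exploit the vanishing and spherical symmetry properties of the Schwarzschild background to collapse most cross-terms. At the outset I would record the following background identities: $u_{sc}$ and $trK_{sc}$ depend only on $r$, so $\slashed d u_{sc} = 0 = \slashed d trK_{sc}$; $\hat K_{sc} = 0$, $\eta_{sc} = 0$, $\theta_{sc} = 0$, and $X_{sc} = 0$ with $F(0) = 1$; moreover $\gsp_{sc} = e^{-2u_{sc}}g_{sc}$, so that the inward $\gsp_{sc}$-unit normal is $e^{u_{sc}}\partial_r$ and $N_{sc} = e^{-u_{sc}}$. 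These facts ensure that contributions quadratic in $(\hat K, \eta, \theta, X)$ and tangential derivatives of the radial background quantities drop out.

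Components 1--3 and 6--9 coincide with the static problem, and their linearizations can be extracted directly from the corresponding computation in \cite{ahmed}. For concreteness, the variation of $\Delta_g u$ in the geodesic gauge $\Delta_g = \partial_r^2 + trK\,\partial_r + \slashed\Delta_{g(r)}$ yields $\Delta_{g_{sc}}\tilde u + (\partial_r u_{sc})\widetilde{trK}$, using $\slashed\Delta_{g_{sc}(r)}u_{sc} = 0$. The Raychaudhuri-type and Codazzi-type equations similarly lose their quadratic-in-$\hat K$ and quadratic-in-$\eta$ contributions, and $\slashed d u_{sc}=0$ further simplifies the Codazzi tensor term. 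Component 4 is immediate since $F(X)X$ differs from $X$ by a cubic expression near the origin, so its linearization is just $\Delta_{g_{sc},\mathrm{conf}}\tilde X$.

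The $\tilde\theta$-dependent contributions are genuinely new. For component 5, since $\theta_{sc}=0$ the definition $\eta = -\tfrac12 e^{4u}\star_g d\theta$ linearizes cleanly to $\tilde\eta = -\tfrac12 e^{4u_{sc}}\star_{g_{sc}}d\tilde\theta$ (the variations of $u$ and $g$ multiply the vanishing $d\theta_{sc}$ and drop out), and hence $d\tilde\eta$ takes the stated form. For components 10 and 11 I would linearize the expressions for $N = (e^u\sqrt{1+e^{2u}|\theta|_\gsp^2})^{-1}$ and $\bar\theta = e^{2u}\theta$ at $\theta_{sc}=0$, noting that $|\theta|^2_\gsp$ is quadratic in $\theta$ and hence $\tilde N$ only contributes when multiplied against a nonvanishing background factor, which never occurs because $\bar\theta_{sc}=0$. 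What survives in component 10 is the linear action of the Bartnik divergence and mean-curvature terms on $e^{2u_{sc}}\tilde\theta$; in component 11, it is $\tfrac12 \mathcal{L}_{e^{2u_{sc}}\tilde\theta^{\flat_{\gsc}}}\gsc$ evaluated on the background normal $e^{u_{sc}}\partial_r$.

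The hardest part is the bookkeeping in components 10 and 11: one must check that the variations of the $\gsp$-boundary metric $\gamma_\gsp$, of the unit normal $\ngf$, and of $K_\gsp$ all produce either zero or terms that regroup into the displayed compact expressions, because they enter multiplied by vanishing background factors. Once this is verified, the lemma follows by collecting the eleven linearized components and simplifying with the background identities recorded above.
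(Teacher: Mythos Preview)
Your proposal is correct and follows the same approach as the paper: direct componentwise linearization using the vanishing of $\hat K_{sc}$, $\eta_{sc}$, $\theta_{sc}$, $X_{sc}$ and the radial dependence of $u_{sc}$, $trK_{sc}$ to kill the quadratic and cross terms. The only difference is emphasis: the paper singles out component 9 (the variation of $\mathrm{tr}K_{\gsp}$) for a detailed computation---writing $K_\gsp = \tfrac12\mathcal{L}_{\ngf}\gsp$, decomposing $\ngf = \lambda_\gsp\partial_r + v_\gsp$, and showing $\tilde v_\gsp = 0$, $\tilde\lambda_\gsp = e^{u_{sc}}\tilde u$---and declares the rest ``straightforward,'' whereas you give a uniform overview and flag components 10--11 as the main bookkeeping.
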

 \begin{proof}
We will only prove the linearized equation for the second Bartnik data. The rest follow by straightforward computations. 

We wish to compute 

\[ \ddt \mathrm{tr}K_{\mathfrak{g}} \]

We compute the second fundamental form $K_{\gsp}$ to be 

\begin{align}
K_{\gsp} &= \frac{1}{2} \mathcal{L}_{\ngf} \gsp\\
&= \frac{1}{2} \mathcal{L}_{\ngf} (e^{-2u} g) - \frac{1}{2} \mathcal{L}_{\ngf} (e^{2u} \theta \otimes \theta)\\
&= -\ngf(u) e^{-2u} g + \frac{1}{2} e^{-2u} \mathcal{L}_{\ngf} g - \frac{1}{2} \mathcal{L}_{\ngf} (e^{2u} \theta \otimes \theta)
\end{align}

Decomposing $\ngf = \lambda_{\gsp} \dd{r} + v_{\gsp}$ where $v_{\gsp}$ is tangent to $\partial M$, we deduce that for any vector $w$ tangent to $\partial M$, 

\begin{equation}
0 = \gsp(\ngf, w) = e^{-2u} g(v_{\gsp},w)+e^{2u} \theta\otimes \theta(\ngf,w)
\end{equation}

where we used the equation of $\gsp$ in terms of $g$, $u$, and $\theta$ (see equation \eqref{gsp-g})

Taking the derivative with respect to $t$, we get 
\begin{equation}
0 = e^{-2u_{sc}} g_{sc}(w,\tilde v_{\gsp})
\end{equation}
where $\tilde v = \ddt v_{\gsp}$, implying that $\tilde v_{\gsp} = 0$. A similar calculation shows that 
\begin{equation}
\tilde \lambda_{\gsp} := \ddt \lambda_{\gsp} = \tilde u e^{u_{sc}}
\end{equation}

Taking the $\gsp$-trace of $K_{\gsp}$ and taking the derivative with respect to $t$, we deduce that 

\begin{equation}
\ddt \mathrm{tr}K_{\gsp} = e^{u_{sc}} \left( \left. \widetilde{trK} \right|_{\partial M} + \frac{2}{r_0} \tilde u - 2\partial_r \tilde u \right)
\end{equation}
 \end{proof}

Note that the equations for $(\tilde g, \tilde u, \tilde X)$ decouple completely from the equations for $\tilde \theta$, which are the $5^{th}$, $10^{th}$ and $11^{th}$ equations. More specifically, we have that 

\begin{align} 
 D \Phi^{\mathcal{G}}_{sc} (\tilde g, \tilde u, \tilde X, \tilde \theta)&=  \begin{pmatrix} \Delta_{g_{sc}} \tilde u +  (\partial_r u_{sc}) ( \widetilde{trK}) \lv \\ 
\partial_r \widetilde{trK} + trK_{sc} \widetilde{trK}  + 4(\partial_r u_{sc})( \partial_r \tilde u) \lv \\
\mathcal{L}_{\dd{r}} {\widetilde{\hat K}}   \lv \\
\Delta_{g_{sc}, conf} \tilde X  \lv \\ 
0 \lv \\
 - 4(\partial_r u_{sc}) (\partial_r \tilde u) + trK_{sc} \left. \widetilde{trK} \right|_{\partial M}  +  \frac{4}{r_0(r_0-2m_0)} \tilde u +2 \slashed \Delta_{\gamma_{sc}} \tilde u \lv \\
2(\partial_r u_{sc}) \slashed d\tilde u - \cancel{div} (\widetilde{\hat K}) + \tilde \omega  \lv \\
 e^{-2u_{sc}} \tilde \gamma - 2 r_0^2 \, \tilde u \,  g_{S^2} \lv \\
e^{u_{sc}} \left( \left. \widetilde{trK} \right|_{\partial M} + \frac{2}{r_0} \tilde u - 2\partial_r \tilde u \right) \lv\\
 0  \lv \\
0
  \end{pmatrix} \\
  &+  \begin{pmatrix}0 \lv \\ 
0 \lv \\
0 \lv \\
0  \lv \\ 
d\bigg(- \frac{1}{2} e^{4u_{sc}} \star_{g_{sc}} d \tilde \theta \bigg) \lv \\
 0 \lv \\
0 \lv \\
0 \lv \\
0 \lv\\
 e^{-u_{sc}} \left( \cancel{div}_{\gamma_{\mathfrak{g_{sc}}}} (e^{2u_{sc}} \tilde \theta^T) - \Big(\mathrm{tr} K_{\gsc} \Big) e^{3u_{sc}} \tilde \theta_r \right) \lv \\
\frac{e^{-u_{sc}}}{2} \mathcal{L}_{e^{2u_{sc}} \tilde \theta^{\sharp_{\gsc}}} \gsc \left(e^{u_{sc}} \dd{r}, \cdot\right)
  \end{pmatrix}
 \end{align}

 Note also that the equations for $(\tilde g, \tilde u, \tilde X)$ in the first term in the right hand side of the above equation are identical to the equations in the definition of the operator $D\Phi_{sc}$ in equation 5.81 in \cite{ahmed}, which is an isomorphism according to proposition 5.9 in \cite{ahmed}. To show that $D\Phi^{\mathcal{G}}_{sc}$ is an isomorphism, it suffices to prove the following proposition:
\begin{prop} \label{psi-prop}
Define the operator $\Psi$ by 
\[\Psi: {\Omega_{\mathcal{G}}}^{(2,k+1)}_{\delta}(M) \to  d \bigg({\Omega_{\mathcal{S}}}^{(1,k)}_{\delta-1}(M) \bigg)  \times  H^{k}(S^2) \times \Omega^k(S^2)  \]

\begin{equation}
\Psi(\tilde \theta ) := \begin{pmatrix}d\bigg(- \frac{1}{2} e^{4u_{sc}} \star_{g_{sc}} d \tilde \theta \bigg) \lv \\
 e^{-u_{sc}} \left( \cancel{div}_{\gamma_{\gsc}} (e^{2u_{sc}} \tilde \theta^T) - \Big(\mathrm{tr} K_{\gsc} \Big) e^{3u_{sc}} \tilde \theta_r\right)  \lv \\
\frac{e^{-u_{sc}}}{2} \mathcal{L}_{e^{2u_{sc}} \tilde \theta^{\sharp_{\gsc}}} \gsc \left(e^{u_{sc}} \dd{r}, \cdot\right)
  \end{pmatrix}
\end{equation}

$\Psi$ is an isomorphism. 
\end{prop}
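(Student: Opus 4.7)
The plan is to prove that $\Psi$ is an isomorphism by combining the Hodge-type decomposition of Proposition~\ref{hodge-prop} with a spherical harmonic expansion, reducing the problem to a one-parameter family of ODE boundary value problems indexed by $(\ell,m)$, and then establishing $\ell$-uniform estimates in the Bochner-measurable spaces of \cite{ahmed}. The surjectivity of $\Psi$ onto the carefully chosen target space $d\bigl({\Omega_{\mathcal{S}}}^{(1,k)}_{\delta-1}(M)\bigr)\times H^{k}(S^2)\times \Omega^k(S^2)$ should be read as a compatibility statement matching the $\theta$-geodesic gauge, just as in the lemma preceding this proposition.

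First I would use Proposition~\ref{hodge-prop}(b) to write uniquely $\tilde\theta=\slashed d a+\slashed\star\slashed d b+c\,dr$ with $a,b\in\AHC{2}{k+2}{\delta+1}(M)$ and $c\in\AHC{2}{k+1}{\delta}(M)$. Since $c=\tilde\theta(\partial_r)$, the $\theta$-geodesic gauge reads $\Delta_{sc}c=0$. Analogously I would parametrize a given $d\sigma$ in the target by scalar data $(\alpha,\beta,\gamma)$ with $\beta(r_0)=0$ from the definition of ${\Omega_{\mathcal{S}}}^{(1,k)}_{\delta-1}(M)$, and decompose the boundary data $h\in H^{k}(S^2)$ and $\Lambda\in\Omega^{k}(S^2)$ via Proposition~\ref{hodge-prop}(a). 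Plugging this into $\Psi$ and using that $u_{sc}$ is spherically symmetric (so $\slashed d u_{sc}=0$), the interior equation $d\tilde\eta=d\sigma$ splits, exactly as in the computation preceding this proposition, into two scalar equations: an angular one coupling $\partial_r\slashed\Delta a$ with $\slashed\Delta c$ through the $e^{4u_{sc}}$ weight, and a radial one whose leading term is $\partial_r\slashed\Delta b$. The two boundary components of $\Psi$ translate to a scalar equation and a $1$-form equation on $\partial M$ which, after a further Hodge decomposition on $S^2$, give scalar boundary conditions in the potentials $a|_{\partial M}, b|_{\partial M}, c|_{\partial M}$ and their normal derivatives.

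Expanding $a=\sum a_{\ell m}(r)Y_{\ell m}$ (and similarly $b,c$), each $(\ell,m)$-mode decouples. The gauge becomes
\[
r(r-2m_0)\bigl(\partial_r^2 c_{\ell m}+\mathrm{tr}K_{sc}\,\partial_r c_{\ell m}\bigr)-\ell(\ell+1)\,c_{\ell m}=0
\]
on $[r_0,\infty)$, while the interior equations become ODEs for $a_{\ell m},b_{\ell m}$ sourced by the $Y_{\ell m}$ components of $(\alpha,\beta,\gamma)$, and the boundary equations supply the data at $r=r_0$. The mode $\ell=0$ is degenerate: $\slashed d Y_{00}=0=\slashed\star\slashed d Y_{00}$, so $a_{00},b_{00}$ carry no information, and $c_{00}$ is uniquely determined by the gauge ODE, the weighted decay at infinity, and one of the boundary equations. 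For $\ell\geq 1$ I would solve the modal system by variation of parameters, obtaining explicit representation formulas for $a_{\ell m},b_{\ell m},c_{\ell m}$ in terms of the $(\ell,m)$-modes of the data.

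The main obstacle will be Step~4: extracting bounds \emph{uniform in $\ell$} in the $\AHC{2}{k+1}{\delta}$ and $\AHC{2}{k+2}{\delta+1}$ norms, so that summation in $(\ell,m)$ produces a bounded two-sided inverse. This requires that the factors of $\ell(\ell+1)$ generated by $\slashed\Delta$ combine correctly with the radial weights $r^{-2\delta-1+2t'}$ and with the $C$/$H$ split built into the Bochner-measurable spaces. The very definition of ${\Omega_{\mathcal{S}}}^{(1,k)}_{\delta-1}(M)$---where the middle potential enters as $\partial_r\beta$ with $\beta(r_0)=0$ rather than as a free function---is precisely tailored so that the variation-of-parameters formula for $\partial_r a$ in the angular equation produces the correct decay and regularity; this is what yields surjectivity of $\Psi$ onto $d({\Omega_{\mathcal{S}}}^{(1,k)}_{\delta-1}(M))$. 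Once the $\ell$-uniform estimates are in place, Plancherel on $S^2$ converts them into the weighted Bochner-norm bounds required to conclude that $\Psi$ is an isomorphism.
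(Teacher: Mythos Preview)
Your strategy is essentially the paper's: Hodge-decompose $\tilde\theta=\slashed d a+\slashed\star\slashed d b+c\,dr$, expand in spherical harmonics, reduce to modal ODE systems with the gauge fixing $c$, solve mode-by-mode, and then close with $\ell$-uniform estimates via the Bochner--space machinery of \cite{ahmed}. The identification of the special structure of ${\Omega_{\mathcal{S}}}^{(1,k)}_{\delta-1}(M)$ (the $\partial_r\beta$ with $\beta(r_0)=0$) as exactly what makes the $a$-equation integrable is right and is indeed what the paper exploits in its Step~4.

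There is one genuine gap. You propose to solve each modal system ``by variation of parameters, obtaining explicit representation formulas''. This works for $c_{m\ell}$ (the gauge ODE has Legendre-type solutions) and for $a_{m\ell}$ (it satisfies a \emph{first-order} ODE, so a single integration suffices). It does \emph{not} work directly for $b_{m\ell}$: the interior equation for $b$ is a second-order ODE which is a lower-order perturbation of the Schwarzschild Laplace operator by terms of size $O(m_0/r^2)$ in the first derivative and $O(m_0/r^3)$ in the zeroth order (the ``$8m_0$'' terms in the paper's system \eqref{sys2}). The homogeneous solutions of this perturbed equation are not explicit, so you cannot simply write down a Green's kernel. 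The paper circumvents this by observing that the perturbed operator is a compact perturbation of the model Neumann operator $\Phi^{\mathcal N}_{m\ell}$, hence Fredholm of index zero, and then proves the kernel is trivial via a rescaling argument (dilating $r\mapsto r_0+n(r-r_0)$ and absorbing the $1/n$ error for $n$ large). Once invertibility is secured for each $(m,\ell)$, the $\ell$-uniform estimates for $b$ come from the \emph{unperturbed} Neumann estimate in Proposition~\ref{ode-prop}, treating the $8m_0$ terms as a harmless right-hand side. Your outline should flag this step; without it the existence claim for $b_{m\ell}$ is unjustified.

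A minor correction: the $b$-equation is second order (leading term $r(r-2m_0)\partial_r^2 b$), not first order as ``leading term $\partial_r\slashed\Delta b$'' suggests; and the $a$-equation is genuinely first order in $r$, which is why it is solved by a single radial integration after $c$ is known. This ordering---first $b$ (independent), then boundary values of $(a,c)$ algebraically, then $c$ from the gauge ODE, finally $a$ by integration---is the paper's four-step scheme and is the cleanest way to organize the argument you sketch.
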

\begin{proof}
The proof of this will be the content of the next section. 
\end{proof}

\subsection{The BVP for the 1-form $\tilde \theta$} \label{bvp-sec}

In this section, we prove proposition \ref{psi-prop}, which reduces to proving the wellposedness of the following BVP for the 1-form $\tilde \theta$. 

\begin{thm} \label{bvp-thm}

Let $\sigma \in {\Omega_{\mathcal{S}}}^{(1,k)}_{\delta-1}(M)$, $h \in H^k(S^2)$ and $\Lambda \in \Omega^{k}(S^2)$. There exists a unique $\tilde \theta \in {\Omega_{\mathcal{G}}}^{(2,k+1)}_{\delta}(M)$ satisfying 

\begin{equation} \label{bvp}
\begin{cases}
4 \partial_r u_{sc}  \left(  dr \wedge \sgsc d\tilde \theta \right) + d\sgsc d\tilde \theta = d\sigma, & \text{in $M$}\\
 \cancel{div} (\tilde \theta^T) - 2(r_0-2m_0) \tilde \theta_r = h, & \text{on $\partial M$}\\
\left( \mathcal{L}_{\dd{r}} \tilde \theta \right)^T +\slashed d \tilde \theta_r - \frac{2(r_0-3m_0)}{r_0(r_0-2m_0)} \tilde \theta^T = \Lambda,& \text{on $\partial M$}
\end{cases}
\end{equation}
\end{thm}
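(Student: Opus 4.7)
The plan is to reduce Theorem \ref{bvp-thm} to a family of decoupled ODE boundary value problems, one for each spherical harmonic mode, and to establish $\ell$-uniform estimates allowing reassembly in the Bochner-measurable function spaces.

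First I would use the Hodge-type decomposition of Proposition \ref{hodge-prop}(b) to write $\tilde \theta = \slashed d a + \slashed \star \slashed d b + c\, dr$ with $a, b \in \AHC{2}{k+2}{\delta+1}(M)$ and $c \in \AHC{2}{k+1}{\delta}(M)$; the gauge condition becomes the scalar elliptic equation $\Delta_{sc} c = 0$. Decompose $a, b, c$ and the data $\sigma, h, \Lambda$ in vector spherical harmonics. A direct computation of $d\tilde \theta$, $\sgsc d\tilde \theta$ and $d\sgsc d\tilde \theta$ using the product structure $g_{sc} = dr^2 + \rho^2 \gamma_{\mathbb{S}^2}$ with $\rho^2 = r(r-2m_0)$ shows that the main PDE splits, in each $(m,\ell)$ mode with $\ell \geq 1$, into three scalar relations corresponding to the $d\sigma_{\mathbb{S}^2}$-component, the $dr \wedge \slashed d$-component, and the $dr \wedge \slashed \star \slashed d$-component of both sides. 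A key structural feature is that the coexact coefficient $b_{m\ell}$ decouples entirely from $(a_{m\ell}, c_{m\ell})$ and satisfies an inhomogeneous second-order radial ODE whose source comes from the $\slashed d$-component of $\sigma$, while the coupled pair $(a_{m\ell}, c_{m\ell})$ is governed by the two remaining matching relations together with the gauge ODE $c_{m\ell}'' + \mathrm{tr}K_{sc}\, c_{m\ell}' - \tfrac{\ell(\ell+1)}{r(r-2m_0)} c_{m\ell} = 0$.

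The boundary conditions decompose analogously: the divergence condition in \eqref{bvp} gives the linear algebraic constraint $-\tfrac{\ell(\ell+1)}{r_0^2}\, a_{m\ell}(r_0) + \tfrac{2}{r_0}\sqrt{\tfrac{r_0-2m_0}{r_0}}\, c_{m\ell}(r_0) = h_{m\ell}$ for $\ell \geq 1$ (and a simpler scalar condition on $c_{00}(r_0)$ for $\ell=0$), while the Lie-derivative condition splits into a Robin condition relating $a_{m\ell}'(r_0)$, $a_{m\ell}(r_0)$, $c_{m\ell}(r_0)$ and an independent Robin condition for $b_{m\ell}$. Thus each mode with $\ell \geq 1$ yields a complete BVP on $[r_0, \infty)$: a scalar BVP for $b_{m\ell}$ (second-order ODE, one Robin condition at $r_0$, decay at infinity), and a coupled BVP for $(a_{m\ell}, c_{m\ell})$ incorporating the gauge ODE, the matching relations from the bulk, and the two mixed boundary conditions plus decay. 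Mode-by-mode solvability follows from standard ODE theory: indicial analysis at infinity of the Schwarzschild radial Laplace operator (analogous to the fundamental solutions used in the static case of \cite{ahmed}) produces a one-dimensional subspace of decaying solutions, and the finite number of remaining free parameters is fixed by the conditions at $r_0$.

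The main obstacle, and the bulk of the technical work, will be to establish energy-type estimates for each mode with constants \emph{uniform in $\ell$}, so that summing using Parseval on $S^2$ converts $\ell$-weighted $L^2$ bounds into the $\mathcal{H}^k$ and $\Omega^k$ norms defining $\AHC{2}{k+2}{\delta+1}(M)$, $\AHC{2}{k+1}{\delta}(M)$, and hence ${\Omega_{\mathcal{G}}}^{(2,k+1)}_{\delta}(M)$. The $\ell$-dependence enters both through the coefficient $\ell(\ell+1)/(r(r-2m_0))$ in the ODEs and through $-\ell(\ell+1)/r_0^2$ in the boundary matrix; one must verify that the determinant of the boundary system is bounded away from zero uniformly in $\ell$ (using the strict positivity of $\sqrt{(r_0-2m_0)/r_0}$ on Schwarzschild spheres), and then track powers of $\ell$ in the weighted integral estimates exactly as in the static analogue in \cite{ahmed}. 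Once these $\ell$-uniform estimates are in hand, summing in $\ell$ yields both existence in the target spaces and the bound needed to conclude that $\Psi$ is an isomorphism, completing the proof of Proposition \ref{psi-prop} and hence of Theorem \ref{bvp-thm}.
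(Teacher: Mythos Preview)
Your proposal is correct and follows essentially the same approach as the paper: Hodge-decompose $\tilde\theta = \slashed d a + \slashed\star\slashed d b + c\,dr$, reduce via spherical harmonics to decoupled radial ODE systems (the paper's equations \eqref{sys1}--\eqref{sys3}) with $b_{m\ell}$ separating from the $(a_{m\ell}, c_{m\ell})$ pair, solve mode by mode, and reassemble using $\ell$-uniform weighted estimates drawn from Proposition~\ref{ode-prop}. The one point you gloss over is that the $b_{m\ell}$ operator is a lower-order perturbation of $\Phi^{\mathcal{N}}_{m\ell}$ whose injectivity is not entirely automatic---the paper disposes of it via a rescaling/absorption argument---but this fits squarely within your outline.
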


\vv

 The second boundary condition comes from the following computation: in spherical coordinates $(\phi^1, \phi^2)$,  
 
  \begin{align}
 \frac{e^{-u_{sc}}}{2} \mathcal{L}_{e^{2u_{sc}} \tilde \theta^{\sharp_{\gsc}}} \gsc \left(e^{u_{sc}} \dd{r}, \dd{\phi^i} \right) &= \frac{1}{2} \left( \nabla^{\gsc}_0 (e^{2u_{sc}} \tilde \theta)_i + \nabla^{\gsc}_i (e^{2u_{sc}} \tilde \theta)_0 \right)\\
 &=  \frac{e^{2u_{sc}}}{2}  \left( 2 \partial_r u_{sc} \tilde \theta_i + \partial_r \tilde \theta_i + \partial_i \tilde \theta_r - 2{\Gamma}_{0i}^{k}\tilde \theta_k \right)\\
 &=  \frac{e^{2u_{sc}}}{2} \left( \left( \mathcal{L}_{\dd{r}} \tilde \theta \right)^T +\slashed d \tilde \theta_r - \frac{2(r_0-3m_0)}{r_0(r_0-2m_0)} \tilde \theta^T \right)_i
 \end{align}
 The content of this section is the proof of this theorem.

 \vv

We utilize the spherical symmetry of $(M,g_{sc})$ to reduce the BVP into ODEs on the spherical harmonic coefficients. For $\ell \in \Z_{\geq0}$ and $-\ell\leq m \leq \ell$, let $Y_{m\ell}$ be the spherical harmonics on $S^2$ normalized with respect to the unit sphere $(S^2, \gamma_{\mathbb{S}^2})$. In particular, the following holds
\begin{equation}
\slashed \Delta Y_{m\ell} = -\ell(\ell+1) Y_{m\ell}
\end{equation}
 
 We also define the following two 1-forms on $(S^2, \gamma_{\mathbb{S}^2})$: 
\[
\slashed d Y_{m\ell} \quad \text{and} \quad \slashed \star \slashed dY_{m\ell}
\]
 We extend $Y_{m\ell}$ and the above two 1-forms to $M$ so that they are constant in $r$. We will also extend the volume form $\dsig$ on $(S^2, \gamma_{\mathbb{S}^2})$ in the same manner. For simplicity, we will use the same notation for the extensions; it should be clear from context which on we are referring to. In particular, the extensions $Y_{m\ell}$, $\dy$, $\sdy$ and $\dsig$ satisfy the following on $M$
 \[\mathcal{L}_{\dd{r}} Y_{m\ell} = 0, \quad \mathcal{L}_{\dd{r}} \dy = 0, \quad \mathcal{L}_{\dd{r}} \sdy = 0, \quad \mathcal{L}_{\dd{r}} \dsig = 0   \]

We have the following spherical harmonic decomposition theorem for functions in  $\AHC{t}{k}{\delta}$ spaces (see section 3 in \cite{ahmed}). 
\begin{prop} \label{sph-prop}  Every function $f \in \AHC{0}{0}{\delta}(M)$ admits a unique decomposition 
\begin{equation} \label{sph-f}
f = \summ f_{m\ell}(r) Y_{m\ell}
\end{equation}
where $f_{m\ell} \in L^2([r_0,\infty) \cap C^0([r_0,\infty))$. Furthermore, for $k\geq t\geq 0$,  $f \in \AHC{t}{k}{\delta}(M)$ if and only if
\begin{equation} \summ \sum_{t'=0}^t [1+\ell(\ell+1)]^{k-t'} (\normmmH{f_{m\ell}^{(t')}}{0}{\delta-t'}^2+\normmmC{f_{m\ell}^{(t')}}{0}{\delta-t'}^2) < \infty \end{equation}
In particular, the square root of the left side is an equivalent norm on $\AHC{t}{k}{\delta}(M)$. 

\end{prop}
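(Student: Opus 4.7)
The plan is to reduce the characterization of $\AHC{t}{k}{\delta}(M)$ to a weighted $\ell^2$ characterization on the spherical-harmonic coefficients, exploiting two standard facts. First, $\{Y_{m\ell}\}$ is an orthonormal basis of $L^2(S^2,\gamma_{\mathbb{S}^2})$ of eigenfunctions of $-\slashed\Delta$ with eigenvalue $\ell(\ell+1)$, yielding the Parseval-type equivalence
\[
\|g\|_{H^k(S^2)}^2\;\sim\;\summ[1+\ell(\ell+1)]^{k}\,|g_{m\ell}|^2,
\]
for $g\in H^k(S^2)$, where $g_{m\ell}:=\int_{S^2}g\,Y_{m\ell}\,\dsig$. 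Second, $Y_{m\ell}$ is independent of $r$, so $\partial_r$ commutes with the spherical expansion. Existence and uniqueness of the decomposition for $f\in\AHC{0}{0}{\delta}(M)$ are then immediate: the $C^0_\delta$-half of the definition embeds $f$ continuously into $C^0([r_0,\infty);L^2(S^2))$, so $f_{m\ell}(r):=\int_{S^2}f(r,\cdot)\,Y_{m\ell}\,\dsig$ is a well-defined continuous function of $r$, and pointwise orthonormality of $\{Y_{m\ell}\}$ in $L^2(S^2)$ furnishes uniqueness.

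For the forward direction of the characterization, fix $f\in\AHC{t}{k}{\delta}(M)$. By definition, for each pair $(t',t'')$ with $0\le t''\le t'\le t$ the radial derivative $\partial_r^{t''}f$ lies in $\LtoH{\delta-t''}{k-t'}\cap\CtoH{0}{\delta-t''}{k-t'}$. Applying Parseval pointwise in $r$ to $\partial_r^{t''}f(r,\cdot)$ gives
\[
\|\partial_r^{t''}f(r)\|_{H^{k-t'}(S^2)}^2\;\sim\;\summ[1+\ell(\ell+1)]^{k-t'}\,|f_{m\ell}^{(t'')}(r)|^2,
\]
where I use that $\partial_r$ passes through the spherical expansion to identify the coefficients of $\partial_r^{t''}f$ with $f_{m\ell}^{(t'')}$. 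Integrating this pointwise equivalence against $r^{-2(\delta-t'')-1}$ in $r$ and applying Tonelli interchanges the $(m,\ell)$-sum with the $r$-integral, producing the $\normmmH{\cdot}{0}{\delta}$-contribution of the claimed double sum (after choosing $t'=t''$ and setting $k':=k-t''$, which is the most restrictive index pair). The reverse direction is obtained by a symmetric Cauchy-series argument: the assumed summability shows that the partial sums $f_N:=\sum_{\ell\le N}\sum_m f_{m\ell}(r)\,Y_{m\ell}$ form a Cauchy sequence in each of the Bochner spaces appearing in the intersection defining $\AHC{t}{k}{\delta}(M)$, so by completeness they converge to $f$, and radial derivatives are recovered by commutativity of $\partial_r$ with the finite partial sums.

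The main obstacle is the $C^0_\delta$-contribution to the forward direction, because $\sup_r$ does not commute with the infinite $(m,\ell)$-sum: naively pulling $\sup_r$ inside is illegal. To overcome this, I would invoke the one-dimensional weighted Sobolev embedding $H^1_{\delta'}([r_0,\infty))\hookrightarrow C^0_{\delta'}([r_0,\infty))$ applied to each scalar coefficient to bound
\[
\normmmC{f_{m\ell}^{(t-k')}}{0}{\delta}^2\;\lesssim\;\normmmH{f_{m\ell}^{(t-k')}}{0}{\delta}^2+\normmmH{f_{m\ell}^{(t-k'+1)}}{0}{\delta-1}^2,
\]
with an implicit constant independent of $(m,\ell)$. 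The two right-hand terms are precisely the $L^2$-contributions controlled weight-by-weight by the $\AH$-half of the $\AHC{t}{k}{\delta}$ norm via the Tonelli argument above, so summing over $(\ell,m)$ with the weights $[1+\ell(\ell+1)]^{k'}$ produces the desired $\sum\sup$-type estimate. This is the only place where the combined $\AHC$-structure (both $L^2$ and $C^0$ conditions simultaneously) is genuinely used; once the $\sum\sup$ bound is in hand the reverse direction for the $C^0$-part follows symmetrically from the Cauchy-series argument, and the equivalence of norms asserted in the final clause is obtained by tracking the constants through the Parseval and Sobolev inequalities.
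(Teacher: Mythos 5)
The paper does not actually prove Proposition \ref{sph-prop}: it is quoted with a pointer to Section 3 of \cite{ahmed}, so there is no in-paper argument to compare against and I assess your proposal on its own terms. Your construction and uniqueness of the decomposition, and your $L^2$-half of the characterization (pointwise-in-$r$ Parseval on $S^2$, commuting $\partial_r$ with the coefficient functionals, Tonelli to exchange the $(m,\ell)$-sum with the weighted $r$-integral) are the standard route and are fine; note only that the radial weights you correctly use are the shifted ones $\delta-t''$ attached to the $t''$-th radial derivative, as dictated by the definition of $\AH{t}{k}{\delta}(M)$, which you should reconcile with the unshifted $\delta$ written in the displayed sum of the statement.

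The genuine gap is in the $C^0$-half of the forward direction. Your coefficient-wise embedding $\normmmC{g}{0}{\delta}^2 \lesssim \normmmH{g}{0}{\delta}^2 + \normmmH{g'}{0}{\delta-1}^2$ is correct and $(m,\ell)$-uniform, but applied to $g = f_{m\ell}^{(t-k')}$ it consumes one extra radial derivative: at the top-order term $k'=0$ it requires control of $f_{m\ell}^{(t+1)}$, which the $\AHC{t}{k}{\delta}(M)$ norm does not provide (it controls at most $t$ radial derivatives of the coefficients). Moreover this term cannot be recovered from the sup-of-sums information alone: already for $t=k=0$, a train of unit-width bumps placed at $r\approx\ell$ in the $\ell$-th harmonic with suitably chosen heights has finite $\AHC{0}{0}{\delta}$ norm while $\summ \normmmC{f_{m\ell}}{0}{\delta}^2$ diverges, so the interchange of $\sup_r$ and the $(m,\ell)$-sum is quantitatively false without extra angular weight. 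The repair at top order, in the regime where the proposition is actually applied, is elementary but different from your embedding: from the $\AC$-half one has $\sup_r r^{-2(\delta-t)}\sum_{m\ell}[1+\ell(\ell+1)]^{k-t}|f^{(t)}_{m\ell}(r)|^2<\infty$, hence each individual coefficient sup carries a factor $[1+\ell(\ell+1)]^{-(k-t)}$, and these factors are summable over $(m,\ell)$ precisely when $k-t\ge 2$, which yields the $k'=0$ term directly. Similarly, for the intermediate terms your embedding only closes if the angular weight $[1+\ell(\ell+1)]^{k'}$ is dominated by the weight $[1+\ell(\ell+1)]^{k-t'}$ available at the extra derivative order $t'=t-k'+1$, i.e.\ if $k\ge t+1$. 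You should therefore either impose these restrictions (they hold in the paper's application, $t=2$, $k\ge 5$) or match the index conventions of \cite{ahmed}; as stated for all $t,k\ge 0$ your argument, and indeed the literal claim, does not go through.
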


We decompose $\tilde \theta \in {\Omega_{\mathcal{G}}}^{(2,k+1)}_{\delta}(M) $, $\sigma \in {\Omega_{\mathcal{S}}}^{(1,k)}_{\delta-1}(M)$, $h \in H^k(S^2)$ and $\Lambda \in \Omega^{k}(S^2)$ (see proposition \ref{hodge-prop}): 

\begin{itemize}

\item \begin{equation}
\tilde \theta = \slashed d a + \slashed \star \slashed d b + c dr
\end{equation}
where 
\begin{equation} \label{sph-theta}
a = \summm a_{m\ell}(r)Y_{m\ell}, \quad b = \summm \sqrt{r(r-2m_0)}b_{m\ell}(r)Y_{m\ell}, \quad c =  \summ c_{m\ell}(r) Y_{m\ell}
\end{equation}

with norm: 

\begin{equation}
\norm{\tilde \theta}_{\Omega^{(2,k+1)}_{\delta}}^2 = \norm{a}_{\AHC{2}{k+2}{\delta+1}}^2 + \norm{b}_{\AHC{2}{k+2}{\delta+1}}^2 + \norm{c}_{\AHC{2}{k+1}{\delta}}^2
\end{equation}

\item \begin{equation}
\sigma = \slashed d \alpha + \slashed \star \slashed d \partial_r\beta + \zeta dr
\end{equation}

where 
\begin{equation} \label{sph-sigma}
\alpha= \summm \sqrt{r(r-2m_0)} \alpha_{m\ell}(r)Y_{m\ell}, \quad \beta = \summm \beta_{m\ell}(r)Y_{m\ell}, \quad \zeta =  \summ \sqrt{r(r-2m_0)}\zeta_{m\ell}(r) Y_{m\ell}
\end{equation}

with norm 

\begin{equation}
\norm{ \sigma}_{\Omega^{(1,k)}_{\delta-1}}^2 = \norm{\alpha}_{\AHC{1}{k+1}{\delta}}^2 + \norm{\beta}_{\AHC{2}{k+2}{\delta+1}}^2 + \norm{\zeta}_{\AHC{1}{k}{\delta-1}}^2
\end{equation}

\item 
\begin{equation} \label{sph-h}
 h = \summ h_{m\ell} Y_{m\ell}
\end{equation}

with norm 

\begin{equation}
\norm{h}_{H^k(S^2)}^2 = \summ [1+\ell(\ell+1)]^k |h_{m\ell}|^2
\end{equation}
\item 
\begin{equation} \label{sph-lambda}
 \Lambda = \summm \lambda_{m\ell} \dy + \xi_{m\ell} \sdy 
\end{equation}

with norm 

\begin{equation}
\norm{\Lambda}_{\Omega^k(S^2)}^2 =  \summm [1+\ell(\ell+1)]^{k+1} \Big(|\lambda_{m\ell}|^2+|\xi_{m\ell}|^2\Big)
\end{equation}
 \end{itemize}
 
 \vv
 
 Furthermore, the $\theta$-gauge satisfied by $\tilde \theta$, 

\begin{equation}
\Delta_{g_{sc}} \left( \tilde \theta(\dd{r})\right) = 0\end{equation}

is equivalent to the following ODE on the functions $c_{m\ell}$ on $[r_0,\infty)$:
\begin{equation}
r(r-2m_0)c_{m\ell}''(r) + 2(r-m_0)c_{m\ell}'(r) - \ell(\ell+1)c_{m\ell}(r) = 0
\end{equation}

\vv

The BVP in equation \eqref{bvp} reduces to a system on the spherical harmonic coefficients according to the following lemma. 
\begin{lem}
Let $\tilde \theta \in {\Omega_{\mathcal{G}}}^{(2,k+1)}_{\delta}(M) $, $\sigma \in {\Omega_{\mathcal{S}}}^{(1,k)}_{\delta-1}(M)$, $h \in H^k(S^2)$ and $\Lambda \in \Omega^{k}(S^2)$ with spherical harmonic decomposition as in equations \eqref{sph-theta}, \eqref{sph-sigma}, \eqref{sph-h}, and \eqref{sph-lambda}. 

The BVP in equation \eqref{bvp} holds if and only if the following system of ODEs on the spherical harmonic coefficients are satisfied:

\begin{itemize}

\item \textit{For $\ell = 0$:}

\begin{align}\label{sys1}
\begin{cases} 
r(r-2m_0)c_{00}''(r) + 2(r-m_0)c_{00}'(r) = 0, \quad \text{on $[r_0,\infty)$} \lv\\
-2(r_0-2m_0) c_{00}(r_0) = h_{00}
\end{cases}
\end{align}

\item  \textit{For $\ell \geq 1$:}
\begin{align} \label{sys2}
\begin{cases}
r(r-2m_0)b_{m\ell}''(r) + 2(r-m_0) b_{m\ell}'(r) - \ell(\ell+1) b_{m\ell}(r) + 4m_0 b_{m\ell}'(r) + \frac{m_0(4r-5m_0)}{r(r-2m_0)} b_{m\ell}(r)\\ =  r(r-2m_0)\alpha_{m\ell}'(r) + (r-m_0)\alpha_{m\ell}(r) - r(r-2m_0) \zeta_{m\ell}(r), \qquad \text{on $[r_0,\infty)$} \lv\\
b_{m\ell}'(r_0) - \frac{(r_0-5m_0)}{r_0(r_0-2m_0)} b_{m\ell}(r_0) = \frac{\xi_{m\ell}}{\sqrt{r_0(r_0-2m_0)}}
\end{cases}
\end{align}

\begin{equation} \label{sys3}
\begin{cases}
a'_{m\ell}(r) - c_{m\ell}(r) = \beta'_{m\ell}(r), & \text{on $[r_0,\infty)$}\lv \\
r(r-2m_0)c_{m\ell}''(r) + 2(r-m_0)c_{m\ell}'(r) - \ell(\ell+1)c_{m\ell}(r) = 0, & \text{on $[r_0,\infty)$} \lv\\
a_{m\ell}'(r_0) + c_{m\ell}(r_0) - \frac{2(r_0-3m_0)}{r_0(r_0-2m_0)} a_{m\ell}(r_0) = \lambda_{m\ell}\lv \\
-a_{m\ell}(r_0) \ell(\ell+1) - 2(r_0-2m_0)c_{m\ell}(r_0) = h_{m\ell}
\end{cases}
\end{equation}

\end{itemize}
\end{lem}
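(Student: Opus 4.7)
My plan is to prove the lemma by a direct mode-by-mode substitution: I substitute the Hodge-plus-spherical-harmonic decompositions \eqref{sph-theta}, \eqref{sph-sigma}, \eqref{sph-h}, \eqref{sph-lambda} into each of the three equations in \eqref{bvp}, split every $2$-form on $M$ into its $dr\wedge(\text{tangent }1\text{-form})$ and $(\text{function})\,\dsig$ components, and then use the orthogonality of $\{Y_{m\ell}\}_{\ell\geq 0}$ together with the uniqueness of the Hodge decomposition on $(S^2,\gamma_{\mathbb{S}^2})$ to read off one scalar equation per mode. The key computation is
\[
d\tilde\theta \;=\; dr\wedge\bigl(\slashed d\,\partial_r a+\slashed\star\slashed d\,\partial_r b-\slashed d c\bigr)\;-\;\slashed\Delta b\,\dsig,
\]
and then, using the conformal invariance of $\slashed\star$ on $1$-forms in $2$ dimensions together with $\sgsc\dsig=(r(r-2m_0))^{-1}dr$, a clean expression for $\sgsc d\tilde\theta$ and its exterior derivative. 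A parallel computation decomposes $d\sigma$ in the same $dr\wedge(\cdot)$-versus-$\dsig$ splitting.

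Equating the $\dsig$-components on both sides of the interior equation then produces a relation of the form $\slashed\Delta(\partial_r a-c-\partial_r\beta)=0$ (up to the $e^{4u_{sc}}$-weight that has already been tracked in the preceding $d\eta$-membership lemma), which yields the first line of \eqref{sys3} after applying the product rule to the $\sqrt{r(r-2m_0)}$-normalization. Equating the $dr\wedge\slashed d Y_{m\ell}$- and $dr\wedge\slashed\star\slashed d Y_{m\ell}$-coefficients gives the two second-order ODEs in \eqref{sys2} and \eqref{sys3}, where the gauge identity $\slashed\Delta c_{m\ell}=r(r-2m_0)\bigl(\partial_r^2 c_{m\ell}+ trK_{sc}\,\partial_r c_{m\ell}\bigr)$ coming from $\Delta_{sc}c=0$ is used to eliminate the $\slashed d c$-contribution on the left-hand side.

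The two boundary conditions decompose in the same spirit. I compute $\cancel{div}(\tilde\theta^T)=\slashed\Delta a$ at $r=r_0$ (the co-closed piece $\slashed\star\slashed d b$ has vanishing divergence) and observe that $(\mathcal{L}_{\dd{r}}\tilde\theta)^T+\slashed d\tilde\theta_r-\frac{2}{r_0}\tilde\theta^T$ decomposes on $(S^2,\gamma_{\mathbb{S}^2})$ as $\slashed d F+\slashed\star\slashed d G$ with $F$ involving $a,c$ and $G$ involving $b$. Matching $Y_{m\ell}$-coefficients of the first boundary condition against $h_{m\ell}$ produces the boundary relation in \eqref{sys1} when $\ell=0$ and the last line of \eqref{sys3} when $\ell\geq 1$; matching $\slashed d Y_{m\ell}$- and $\slashed\star\slashed d Y_{m\ell}$-coefficients of the second boundary condition against $\lambda_{m\ell},\xi_{m\ell}$ produces the remaining boundary lines of \eqref{sys2} and \eqref{sys3}. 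The $\ell=0$ case collapses to \eqref{sys1} alone because $\slashed d Y_{00}=\slashed\star\slashed d Y_{00}=0$ kills the $a$- and $b$-modes and eliminates the entire second boundary condition, leaving only the gauge ODE $r(r-2m_0)c_{00}''+2(r-m_0)c_{00}'=0$ together with the $\ell=0$ piece of the first boundary condition. The converse direction is then an immediate reassembly using completeness of $\{Y_{m\ell}\}$ and uniqueness of the Hodge decomposition.

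The main obstacle is bookkeeping with the $\sqrt{r(r-2m_0)}$-normalization: this factor is chosen precisely so that after one application of the identity $\partial_r\bigl(\sqrt{r(r-2m_0)}\,a_{m\ell}\bigr)=\frac{r-m_0}{\sqrt{r(r-2m_0)}}\,a_{m\ell}+\sqrt{r(r-2m_0)}\,a_{m\ell}'$, and similarly for $b_{m\ell},\alpha_{m\ell},\beta_{m\ell}$, the coefficients satisfy ODEs in the clean first-/second-order form displayed in \eqref{sys2}--\eqref{sys3}. Relatedly, the presence of $\partial_r\beta$ (rather than $\beta$) in the definition of ${\Omega_{\mathcal{S}}}^{(1,k)}_{\delta-1}(M)$, together with the normalization $\beta(r_0)=0$, is exactly what is needed to integrate the $\dsig$-coefficient equation once in $r$ and recover a first-order transport relation in the $a,c,\beta$-variables rather than a second-order one; this is what makes the reduced system amenable to the $\ell$-uniform analysis of the following subsection.
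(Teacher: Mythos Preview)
Your proposal is correct and follows essentially the same route as the paper: the paper's proof consists of the single sentence ``straightforward computation using the identities $d\slashed dY_{m\ell}=0$, $d\slashed\star\slashed dY_{m\ell}=-\ell(\ell+1)Y_{m\ell}\,\dsig$, $\sgsc(dr\wedge\slashed dY_{m\ell})=\slashed\star\slashed dY_{m\ell}$, $\sgsc(dr\wedge\slashed\star\slashed dY_{m\ell})=-\slashed dY_{m\ell}$, $\sgsc\dsig=(r(r-2m_0))^{-1}dr$'', and your outline is precisely the unpacking of that computation into its $dr\wedge(\cdot)$ versus $\dsig$ pieces followed by coefficient matching. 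Your additional remarks on the role of the $\sqrt{r(r-2m_0)}$ normalization and of the $\partial_r\beta$ structure in $\Omega_{\mathcal{S}}$ are accurate and go a bit beyond what the paper makes explicit at this point.
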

\begin{proof}
The lemma follows by a straightforward computation that uses the following identities: 
\begin{itemize}
\item $d\dy = 0$
\item $d\sdy = -\ell(\ell+1) Y_{m\ell} d\sigma_{\mathbb{S}^2}$.
\item $\sgsc (dr \wedge \dy) = \sdy$
\item $\sgsc (dr \wedge \sdy) = -\dy$
\item $\sgsc\dsig = \frac{1}{r(r-2m_0)} dr$
\end{itemize}

\end{proof}

We can now formulate theorem \ref{bvp-thm} in terms of the spherical harmonic decomposition: 

\begin{thm}
Let $\alpha \in \AHC{1}{k+1}{\delta}(M)$, $\beta \in \AHC{2}{k+2}{\delta+1}(M)$ satisfying $\beta(r_0) = 0$, $\zeta \in \AHC{1}{k}{\delta-1}(M)$, $h \in H^k(S^2)$, and $\Lambda \in \Omega^k(S^2)$ with spherical harmonic decomposition as in equations \eqref{sph-sigma}, \eqref{sph-h}, and \eqref{sph-lambda}. There exists a unique $a,b \in \AHC{2}{k+2}{\delta+1}(M)$ and $c \in \AHC{2}{k+1}{\delta}(M)$ with spherical harmonic decomposition as in equation\eqref{sph-theta} satisfying the system in equations \eqref{sys1}, \eqref{sys2}, and \eqref{sys3}.  
\end{thm}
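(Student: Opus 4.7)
The plan is to exploit the spherical-symmetry decomposition to decouple the boundary value problem into three independent families of ODE problems indexed by $(m,\ell)$: the scalar mode \eqref{sys1} for $c_{00}$, the Robin problems \eqref{sys2} for $b_{m\ell}$ ($\ell \geq 1$), and the coupled system \eqref{sys3} for the pair $(a_{m\ell}, c_{m\ell})$ ($\ell \geq 1$). For each mode I will construct a unique solution and then reassemble the series via Proposition \ref{sph-prop}. The crucial point is that all per-mode estimates must be uniform in $\ell$ and compatible with the weights $[1+\ell(\ell+1)]^{k'}$ built into the equivalent-norm characterization, so that the reassembled object lies in the required Bochner-measurable spaces.

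For \eqref{sys1}, the divergence form $\bigl(r(r-2m_0)c_{00}'\bigr)' = 0$ integrates explicitly: general solutions are $C_1 + C_2 \ln(1-2m_0/r)$, and decay at infinity forces $C_1 = 0$ while the Robin condition fixes $C_2$. For $\ell \geq 1$, the substitution $z = r/m_0 - 1$ transforms the homogeneous $c$-equation in \eqref{sys3} into Legendre's equation of degree $\ell$, whose decaying solution at infinity is (a scalar multiple of) the Legendre function of the second kind $Q_\ell(z)$; hence $c_{m\ell} = A_c\,Q_\ell(r/m_0 - 1)$ up to one scalar. The first-order equation for $a_{m\ell}$ has integrating factor $\sqrt{r(r-2m_0)}$, producing $a_{m\ell}$ explicitly in terms of $c_{m\ell}$, the data $\beta_{m\ell}$, and one free constant $A_a$ from the homogeneous solution (which decays like $1/r$, hence lies in the target space). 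The two boundary conditions at $r_0$ then yield a $2\times 2$ linear system for $(A_c, A_a)$, whose invertibility I will verify by direct computation, checking that its determinant admits an $\ell$-uniform lower bound. For \eqref{sys2}, the principal part is Legendre-like plus bounded lower-order perturbations, so the homogeneous solutions remain asymptotic to $P_\ell, Q_\ell$; variation of parameters produces a particular solution, decay selects a unique decaying linear combination modulo one constant, and the Robin condition fixes that constant.

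The main obstacle is obtaining estimates with constants independent of $\ell$. The strategy is a weighted energy method tuned to the norms in Proposition \ref{sph-prop}. Multiplying the homogeneous $c$-equation $\bigl(r(r-2m_0) c_{m\ell}'\bigr)' = \ell(\ell+1) c_{m\ell}$ by $r^{-2\delta-1} c_{m\ell}$ and integrating by parts on $[r_0,\infty)$ produces, after absorbing the lower-order term $\tfrac{d}{dr}(r^{-2\delta-1})$ via Young's inequality, the coercive bound
\begin{equation*}
\int_{r_0}^\infty r^{-2\delta+1} |c_{m\ell}'|^2\, dr + \ell(\ell+1) \int_{r_0}^\infty r^{-2\delta-1} |c_{m\ell}|^2\, dr \lesssim \bigl|c_{m\ell}(r_0)\bigr|^2 + \bigl|c_{m\ell}(r_0) c_{m\ell}'(r_0)\bigr|,
\end{equation*}
with constants independent of $\ell$, where the boundary terms are controlled via the boundary conditions involving $h_{m\ell}, \lambda_{m\ell}$. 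Differentiating the equation in $r$ and iterating yields higher-order analogues. For \eqref{sys2} an analogous multiplier produces estimates on $b_{m\ell}$ in terms of the source built from $\alpha_{m\ell}, \zeta_{m\ell}$; the perturbation terms are absorbed by the $\ell(\ell+1)$ coercivity for $\ell$ large and handled by finite case analysis for $\ell$ small. The estimates for $a_{m\ell}$ follow directly from its explicit representation once $c_{m\ell}$ and $\beta_{m\ell}$ are controlled, using the integrating-factor formula.

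Squaring the per-mode estimates and summing against the weights $[1+\ell(\ell+1)]^{k'}$ for $0 \le k' \le k+2$ and applying the equivalent-norm characterization in Proposition \ref{sph-prop} yields $a, b \in \AHC{2}{k+2}{\delta+1}(M)$ and $c \in \AHC{2}{k+1}{\delta}(M)$ with total norm bounded by $\norm{\alpha}_{\AHC{1}{k+1}{\delta}} + \norm{\beta}_{\AHC{2}{k+2}{\delta+1}} + \norm{\zeta}_{\AHC{1}{k}{\delta-1}} + \norm{h}_{H^k(S^2)} + \norm{\Lambda}_{\Omega^k(S^2)}$. Uniqueness is immediate mode-by-mode from the uniqueness in each ODE problem.
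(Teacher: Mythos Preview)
Your overall strategy---decompose into spherical-harmonic modes, solve each ODE system, and reassemble via Proposition~\ref{sph-prop}---matches the paper's exactly, and your treatment of the first-order $a$-equation via the integrating factor $\sqrt{r(r-2m_0)}$ is precisely what the paper does in its Step~4. The paper differs mainly in packaging: rather than invoking Legendre functions and deriving weighted energy estimates from scratch, it cites a black-box ODE proposition (Proposition~\ref{ode-prop}, proved in the companion paper) that supplies both solvability and the sharp uniform-in-$\ell$ bounds for the Dirichlet and Robin problems associated to the operator $f\mapsto f'' + \tfrac{2(r-m_0)}{r(r-2m_0)}f' - \tfrac{\ell(\ell+1)}{r(r-2m_0)}f$. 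It also orders the $(a,c)$ system differently---first solving algebraically for $a_{m\ell}(r_0),\,a'_{m\ell}(r_0),\,c_{m\ell}(r_0)$ at the boundary, then solving the homogeneous Dirichlet problem for $c$, then integrating for $a$---rather than determining two free constants simultaneously.

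There is one genuine gap. For the $b$-equation \eqref{sys2} you write ``decay selects a unique decaying linear combination modulo one constant, and the Robin condition fixes that constant.'' This presumes that the decaying homogeneous solution has nonzero Robin value $f'(r_0) - \tfrac{2}{r_0}f(r_0)$, i.e.\ that the Robin problem has trivial kernel. This is not automatic: the operator in \eqref{sys2} is a genuine lower-order perturbation of the Legendre-type operator, and nothing in the asymptotics at infinity prevents the boundary functional from vanishing on the decaying solution. The paper treats exactly this point as the nontrivial core of Step~1: it shows the perturbed operator $\Phi^{\mathcal{G}}_{m\ell}$ is a compact perturbation of $\Phi^{\mathcal{N}}_{m\ell}$ (hence Fredholm of index zero) and then proves injectivity via a separate scaling argument---setting $h=f'-\tfrac{2}{r_0}f$, rescaling $h_n(r)=h(r_0+n(r-r_0))$, and using the Dirichlet estimate to force $h\equiv 0$. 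Your proposal needs either this argument or an explicit verification (mode by mode, including the perturbation terms) that the Robin functional does not vanish on the decaying solution; the remark about ``finite case analysis for $\ell$ small'' does not address this, since the issue concerns a specific boundary value rather than large-$\ell$ absorption.

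A secondary caution: your energy-estimate sketch for $c_{m\ell}$ produces a boundary term $|c_{m\ell}(r_0)\,c'_{m\ell}(r_0)|$, but $c'_{m\ell}(r_0)$ is not prescribed data, so you will need an additional step (trace/interpolation or a second multiplier) to close the estimate with the precise powers $[1+\ell(\ell+1)]^{3/2}$ and $[1+\ell(\ell+1)]^{2}$ that the resummation in Proposition~\ref{sph-prop} demands.
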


The rest of this section is devoted to proving this theorem. We fix an $\alpha, \beta, \zeta, h,$ and $\Lambda$ as in the statement of the theorem. The proof will proceed in the following steps. 

\begin{enumerate}[label=\textbf{Step \arabic*: }]
\item  We will show that for each $\ell,m$, where $\ell\geq 1$, there exists a unique $b_{m\ell} \in H^2_{\delta}([r_0,\infty)) \cap C^2_{\delta}[r_0,\infty)$ satisfying the system in \eqref{sys2}. Then we will show that the function $b$ on $M$, with spherical harmonic decomposition as in equation \eqref{sph-theta}, indeed lives in $\AHC{2}{k+2}{\delta+1}(M)$.  

\item We will solve for $a'_{m\ell}(r_0)$, $a_{m\ell}(r_0)$, and $c_{m\ell}(r_0)$, and estimate them by $ \beta, \Lambda$, and $h$.

\item We will show that for each $\ell,m$, where $\ell\geq 0$, there exists a unique $c_{m\ell} \in H^2_{\delta}([r_0,\infty)) \cap C^2_{\delta}[r_0,\infty)$ satisfying the system in \eqref{sys1} and \eqref{sys3}. Then we will show that the function $c$ on $M$, with spherical harmonic decomposition as in equation \eqref{sph-theta}, indeed lives in $\AHC{2}{k+1}{\delta}(M)$. 

\item  We will show that for each $\ell,m$, where $\ell\geq 1$, there exists a unique $a_{m\ell} \in H^2_{\delta+1}([r_0,\infty)) \cap C^2_{\delta}[r_0,\infty)$ satisfying the system in \eqref{sys3}. Then we will show that the function $a$ on $M$, with spherical harmonic decomposition as in equation \eqref{sph-theta}, indeed lives in $\AHC{2}{k+2}{\delta+1}(M)$. 
\end{enumerate}
 
 \vv
 
We first introduce a piece of notation. Given two quantities $Y, Z$, we will write $Y \lesssim Z$ if there exists a constant $C>0$ depending only on $r_0$, $m_0$ and $\delta$, such that $Y \leq C Z$. 

\vv

Before we begin proving the above steps, we will state a well-posedness result for a particular second order ODE that is relevant to the system we are solving in equations \eqref{sys1} - \eqref{sys3}.  

\begin{prop} \label{ode-prop}
For $\ell \geq 1$, $-\ell\leq m\leq \ell$, define the operators $\Phi_{m\ell}^{\mathcal{D}}$ and $\Phi^{\mathcal{N}}_{m\ell}$ by the following: for a function $f$ on $[r_0,\infty)$,  

\begin{equation} 
\Phi_{m\ell}^{\mathcal{D}} (f) = \begin{pmatrix} f''(r) + \frac{2(r-m_0)}{r(r-2m_0)} f'(r) - \frac{\ell(\ell+1)}{r(r-2m_0)} f(r)  \lv\\  f(r_0)  \end{pmatrix}
\end{equation}

\begin{equation} 
\Phi_{m\ell}^{\mathcal{N}} (f) = \begin{pmatrix} f''(r) + \frac{2(r-m_0)}{r(r-2m_0)} f'(r) - \frac{\ell(\ell+1)}{r(r-2m_0)} f(r)  \lv\\ f'(r_0)  \end{pmatrix}
\end{equation}

For $\delta \in (-1, -\frac{1}{2})$, 

\[ \Phi_{m\ell}^{\mathcal{D}} : H^2_{\delta}([r_0,\infty)) \cap C^2_{\delta}[r_0,\infty) \to L^2_{\delta-2}([r_0,\infty)) \cap C^0_{\delta-2}([r_0,\infty) \times \R \quad \text{is an isomorphism}\]
\[ \Phi_{m\ell}^{\mathcal{N}} : H^2_{\delta}([r_0,\infty)) \cap C^2_{\delta}[r_0,\infty) \to L^2_{\delta-2}([r_0,\infty)) \cap C^0_{\delta-2}([r_0,\infty) \times \R \quad \text{is an isomorphism}\]

\vv

Furthermore, the following estimates, uniform in $m$ and $\ell$, hold:\\
 for any $f \in H^2_{\delta}([r_0,\infty)) \cap C^2_{\delta}[r_0,\infty)$, 

\begin{equation} 
\begin{aligned} \label{estimate-f H D}
\normmmH{f''}{0}{\delta-2}^2 + \left[ 1+\ell(\ell+1)\right] \normmmH{f'}{0}{\delta-1}^2 + \left[ 1+\ell(\ell+1)\right]^2 \normmmH{f}{0}{\delta}^2 \\ \lesssim \normmmH{\pi_1 \circ \Phi^{\mathcal{D}}_{m\ell}(f)}{0}{\delta-2}^2 + \left[ 1+\ell(\ell+1)\right]^{3/2}|\pi_2 \circ \Phi^{\mathcal{D}}_{m\ell}(f)|^2
\end{aligned}
\end{equation}

\begin{equation} 
\begin{aligned} \label{estimate-f C D}
\normmmC{f''}{0}{\delta-2}^2 + \left[ 1+\ell(\ell+1)\right] \normmmC{f'}{0}{\delta-1}^2 + \left[ 1+\ell(\ell+1)\right]^2 \normmmC{f}{0}{\delta}^2 \\  \lesssim \normmmC{\pi_1 \circ \Phi^{\mathcal{D}}_{m\ell}(f)}{0}{\delta-2}^2 + \left[ 1+\ell(\ell+1)\right]^2|\pi_2 \circ \Phi^{\mathcal{D}}_{m\ell}(f)|^2
\end{aligned}
\end{equation}

\begin{equation} 
\begin{aligned} \label{estimate-f H N}
\normmmH{f''}{0}{\delta-2}^2 + \left[ 1+\ell(\ell+1)\right] \normmmH{f'}{0}{\delta-1}^2 + \left[ 1+\ell(\ell+1)\right]^2 \normmmH{f}{0}{\delta}^2 \\ \lesssim \normmmH{\pi_1 \circ \Phi^{\mathcal{N}}_{m\ell}(f)}{0}{\delta-2}^2 + \left[ 1+\ell(\ell+1)\right]^{1/2}|\pi_2 \circ \Phi^{\mathcal{N}}_{m\ell}(f)|^2
\end{aligned}
\end{equation}

\begin{equation} 
\begin{aligned} \label{estimate-f C N}
\normmmC{f''}{0}{\delta-2}^2 + \left[ 1+\ell(\ell+1)\right] \normmmC{f'}{0}{\delta-1}^2 + \left[ 1+\ell(\ell+1)\right]^2 \normmmC{f}{0}{\delta}^2 \\ \lesssim  \normmmC{\pi_1 \circ \Phi^{\mathcal{N}}_{m\ell}(f)}{0}{\delta-2}^2 + \left[ 1+\ell(\ell+1)\right]|\pi_2 \circ \Phi^{\mathcal{N}}_{m\ell}(f)|^2 
\end{aligned}
\end{equation}

where $\pi_1$, $\pi_2$ are the projection maps into the first and second component respectively. 

\end{prop}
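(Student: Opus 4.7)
My plan for proving Proposition~\ref{ode-prop} is to recognise the operator as the radial part of the Schwarzschild Laplacian acting on the $\ell$-th spherical harmonic mode, and to combine standard ODE well-posedness with careful WKB asymptotics to track the sharp $\ell$-dependence.

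For existence and uniqueness at each fixed $\ell \geq 1$, I would first construct the two linearly independent homogeneous solutions of $(r(r-2m_0) f')' - \ell(\ell+1) f = 0$: a solution $\phi_\ell$ decaying at infinity (with $\phi_\ell(r) \sim r^{-\ell-1}$) and a solution $\psi_\ell$ growing at infinity (with $\psi_\ell(r) \sim r^\ell$), identified by asymptotic analysis of the equation as $r\to \infty$ together with standard regularity at the regular endpoint $r = r_0$. The unique solution in $H^2_\delta \cap C^2_\delta$ with $\delta \in (-1,-1/2)$ is then given by variation of parameters,
\begin{equation*}
f(r) = \phi_\ell(r) \int_{r_0}^{r} \frac{\psi_\ell(s) g(s)}{W(s)}\, ds + \psi_\ell(r) \int_{r}^{\infty} \frac{\phi_\ell(s) g(s)}{W(s)}\, ds + C_\ell\, \phi_\ell(r),
\end{equation*}
where $W$ is the Wronskian of $(\phi_\ell,\psi_\ell)$ and $C_\ell$ is determined by the Dirichlet or Neumann boundary condition; uniqueness is immediate since any kernel element must be proportional to $\phi_\ell$ (by decay at infinity) and the boundary condition then fixes the multiplicative constant.

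For the uniform-in-$\ell$ estimates, I would decompose $f = f_0 + \tilde f$, where $f_0$ is a ``boundary layer'' extension built from $\phi_\ell$ (taking $f_0 = d\,\phi_\ell / \phi_\ell(r_0)$ in the Dirichlet case, and the analogous normalisation satisfying the Robin condition with $n$ in the Neumann case), so that $\tilde f$ satisfies the same ODE with source $g$ but with homogeneous boundary data. The interior estimate for $\tilde f$ follows from the energy method: multiplying the divergence-form equation by $\tilde f$ against a suitable weight and integrating by parts (with the boundary term now vanishing), then absorbing weight-derivative cross terms (permissible because $\delta \in (-1,-1/2)$) and using the coercivity $\ell(\ell+1) \geq 2$, yields
\begin{equation*}
\norm{\tilde f'}^2_{L^2_{\delta-1}} + \ell(\ell+1)\norm{\tilde f}^2_{L^2_\delta} \lesssim \norm{g}^2_{L^2_{\delta-2}}.
\end{equation*}
Reading $\tilde f''$ directly from the ODE then gives $\norm{\tilde f''}^2_{L^2_{\delta-2}} \lesssim \norm{g}^2_{L^2_{\delta-2}} + [1+\ell(\ell+1)]^2\norm{\tilde f}^2_{L^2_\delta}$, completing the interior $H$-estimate for $\tilde f$.

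The main obstacle is tracking the sharp $\ell$-dependence of the $f_0$ contribution, which requires precise WKB asymptotics of $\phi_\ell$. Using the effective potential $\ell(\ell+1)/(r(r-2m_0))$ and the normalisation $\phi_\ell(r_0) = 1$, one finds $\phi_\ell(r) \approx (r/r_0)^{-\ell}$ near $r_0$ for large $\ell$, yielding the sharp bounds $\norm{\phi_\ell}^2_{L^2_\delta} \sim \ell^{-1}$, $\norm{\phi_\ell'}^2_{L^2_{\delta-1}} \sim \ell$, and $\norm{\phi_\ell''}^2_{L^2_{\delta-2}} \sim \ell^3$. For the Dirichlet case, $f_0 = d\,\phi_\ell$ then contributes $\sim \ell^3 |d|^2 \sim [1+\ell(\ell+1)]^{3/2}|d|^2$ to the LHS, matching \eqref{estimate-f H D}. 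For the Neumann case, the normalising denominator $\phi_\ell'(r_0) - (2/r_0)\phi_\ell(r_0) \sim -\ell/r_0$ saves one power of $\ell$, giving the weight $[1+\ell(\ell+1)]^{1/2}|n|^2$ of \eqref{estimate-f H N}. The $C^0_\delta$-type estimates \eqref{estimate-f C D} and \eqref{estimate-f C N} follow from the $H$-type estimates by one-dimensional Sobolev embedding on the half-line, with the boundary-layer piece estimated pointwise via the explicit asymptotics of $\phi_\ell$; this incurs an extra factor of $[1+\ell(\ell+1)]^{1/2}$ relative to the $H$-estimates, consistent with the stated weights.
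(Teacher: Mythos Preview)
Your overall strategy is sound and is essentially the standard route one would take to prove this statement; the paper itself does not give a detailed argument here but only writes ``Follows from a straightforward adaptation of the proof in section~3 of \cite{ahmed}'', so your sketch is already more explicit than what appears in the text. The identification of the operator as the radial Schwarzschild Laplacian, the variation-of-parameters construction, the splitting $f = f_0 + \tilde f$ into boundary layer plus interior, the energy argument for $\tilde f$, and the WKB scaling $\|\phi_\ell\|_{L^2_\delta}^2 \sim \ell^{-1}$, $\|\phi_\ell'\|_{L^2_{\delta-1}}^2 \sim \ell$, $\|\phi_\ell''\|_{L^2_{\delta-2}}^2 \sim \ell^3$ all check out and reproduce the stated powers of $[1+\ell(\ell+1)]$ in the $H$-estimates. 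For uniqueness in the Robin case you should also note (it is easy, by a maximum-principle argument on the divergence-form equation) that $\phi_\ell'(r_0) - \tfrac{2}{r_0}\phi_\ell(r_0) < 0$ for every $\ell \geq 1$, not just asymptotically.

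There is one genuine gap in your derivation of the $C$-type estimates \eqref{estimate-f C D} and \eqref{estimate-f C N}. You propose to obtain them from the $H$-type estimates by one-dimensional Sobolev embedding, but this cannot work for the interior piece $\tilde f$: the right-hand sides of \eqref{estimate-f C D}, \eqref{estimate-f C N} involve $\normmmC{g}{0}{\delta-2}$, whereas the $H$-estimate you have established for $\tilde f$ has $\normmmH{g}{0}{\delta-2}$ on the right, and on the unbounded interval $[r_0,\infty)$ the space $C^0_{\delta-2}$ does \emph{not} embed into $L^2_{\delta-2}$ (indeed $\int r^{-2(\delta-2)-1}\cdot r^{2(\delta-2)}\,dr = \int r^{-1}\,dr$ diverges). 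So Sobolev embedding only goes the wrong way here. The fix is straightforward: since you already have the explicit Green's-function representation via $\phi_\ell,\psi_\ell$, estimate those integrals pointwise against $\sup_s s^{-(\delta-2)}|g(s)|$ to get the $C^0_\delta$ bounds for $\tilde f$ directly, with the same WKB control on $\phi_\ell,\psi_\ell$ giving the correct $\ell$-weights. Your direct pointwise treatment of the boundary-layer piece $f_0$ is already the right thing to do and gives the extra half-power of $[1+\ell(\ell+1)]$ you observe.
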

\begin{proof}
Follows from a straightforward adaptation of the proof in section 3 of \cite{ahmed}. 
\end{proof}

\section*{Step 1: Solving for $b$}

We will show that for each $\ell,m$, where $\ell\geq 1$, there exists a unique $b_{m\ell} \in H^2_{\delta}([r_0,\infty)) \cap C^2_{\delta}[r_0,\infty)$ satisfying the system in \eqref{sys1}. 

\vv

Motivated by the system in \eqref{sys2} satisfied by $b_{m\ell}$, we define the operator $\Phi^{\mathcal{G}}_{m\ell}$, for $\ell \geq 1$, $-\ell\leq m\leq \ell$, from $H^2_{\delta}([r_0,\infty)) \cap C^2_{\delta}[r_0,\infty)$ to $L^2_{\delta-2}([r_0,\infty)) \cap C^0_{\delta-2}([r_0,\infty) \times \R$ by

\begin{equation}
\Phi^{\mathcal{G}}_{m\ell} (f) = \Phi^{\mathcal{N}}_{m\ell}(f) + \begin{pmatrix}  \frac{4m_0}{r(r-2m_0)} f'(r) + \frac{m_0(4r-5m_0)}{r^2(r-2m_0)^2}f(r)  \lv\\ -\frac{(r_0-5m_0)}{r_0(r_0-2m_0)} f(r_0)  \end{pmatrix}
\end{equation}

\vv

Due to standard compact embedding theorems of the spaces $H^k_{\delta}([r_0,\infty))$ and $C^k_{\delta}([r_0,\infty))$, we note that $\Phi_{m\ell}^{\mathcal{G}}$ is a compact perturbation of $\Phi^{\mathcal{N}}_{m\ell}$, which an isomorphism due to proposition \ref{ode-prop}. Hence, $\Phi_{m\ell}^{\mathcal{G}}$ is Fredholm of index $0$. To show that $\Phi_{m\ell}^{\mathcal{G}}$ is an isomorphism, it then suffices to show that its kernel is trivial, which is shown in the next proposition. 

\begin{prop}
Let $\ell \in \Z_{>0}$. Let $f_{\ell} \in H^2_{\delta}([r_0,\infty)) \cap C^2_{\delta}[r_0,\infty)$ satisfy the following system 
 
\begin{align} \label{sys2f}
\begin{cases}
r(r-2m_0)f_{\ell}''(r) + 2(r-m_0) f_{\ell}'(r) - \ell(\ell+1) f_{\ell}(r)\\
 =- 4m_0 f_{\ell}'(r) - \frac{m_0(4r-5m_0) }{r(r-2m_0)}f_{\ell}(r), \qquad \text{on $[r_0,\infty)$} \lv\\
f_{\ell}'(r_0) - \frac{(r_0-5m_0)}{r_0(r_0-2m_0)} f_{\ell}(r_0) = 0
\end{cases}
\end{align}

Then $f=0$. 
\end{prop}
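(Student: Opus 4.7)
My plan is to prove this via an energy identity, derived by multiplying by $f$ and integrating by parts, so that the boundary condition at $r_0$ and the weighted decay at infinity combine into a sum of manifestly non-negative terms forcing $f\equiv 0$.

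First I would put the ODE into divergence form. Setting $P(r) := r(r-2m_0)$, so that $P'(r) = 2(r-m_0)$, the two highest-order terms on the LHS of the equation collapse to $(Pf')'$, and the full equation becomes
\begin{equation*}
(Pf')' + 8 m_0 \, f' \;-\; \Bigl[\ell(\ell+1) - \tfrac{8 m_0 (r-m_0) - m_0^2}{P}\Bigr] f \;=\; 0.
\end{equation*}
Next I would multiply by $f$ and integrate over $[r_0,\infty)$. The boundary contributions at infinity vanish because $f\in H^2_\delta\cap C^2_\delta$ with $\delta\in(-1,-\tfrac12)$ gives $f,f'=O(r^{\delta})$, while $P=O(r^2)$, so $P f' f \to 0$ and $f^2\to 0$. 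At $r_0$ the condition $f'(r_0)=\tfrac{2}{r_0}f(r_0)$ yields $P(r_0)f'(r_0)f(r_0)=2(r_0-2m_0)f(r_0)^2$, and $\int_{r_0}^{\infty} f'f\,dr = -\tfrac{1}{2}f(r_0)^2$. Collecting terms, the identity I would arrive at is
\begin{equation*}
\int_{r_0}^{\infty} P (f')^2 \, dr \;+\; 2 r_0 \, f(r_0)^2 \;+\; \int_{r_0}^{\infty} \Bigl[\ell(\ell+1) - \tfrac{8m_0(r-m_0)-m_0^2}{P}\Bigr] f^2 \, dr \;=\; 0.
\end{equation*}

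The first two terms are visibly non-negative (using $r_0>2m_0$ so $P>0$ on $[r_0,\infty)$). To conclude $f\equiv 0$ I would show the potential integrand in the last term is non-negative, or else that its negative portion is dominated by the principal integral via a weighted Hardy inequality of the type
\begin{equation*}
\int_{r_0}^{\infty} \tfrac{f^2}{r(r-2m_0)} \, dr \;\lesssim\; \int_{r_0}^{\infty} r(r-2m_0) (f')^2 \, dr \;+\; C\, f(r_0)^2 .
\end{equation*}
Once every summand in the energy identity is non-negative, the identity forces each to vanish individually, giving $f(r_0)=0$ and $f'\equiv 0$ hence $f\equiv 0$.

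The main obstacle will be securing a definite sign of the potential term in the range $r_0 \in (2m_0,\, \approx 5.12\,m_0)$ with $\ell=1$, where the $m_0$-dependent perturbations compete with the $\ell(\ell+1)=2$ part. For all $\ell\geq 2$ the integrand is non-negative by a direct pointwise bound. If the Hardy absorption for $\ell=1$ turns out to be too tight with the multiplier $f$, I would instead use a tailored multiplier $w(r)f$ for a positive weight such as $w(r)=r^{\alpha}$ or $w(r)=(1-2m_0/r)^{\alpha}$, chosen so that the derivative $w'/w$ cancels the unfavorable $8m_0$ first-order coefficient and reshapes the potential into a non-negative expression, exploiting the uniform positivity of $r_0-2m_0$.
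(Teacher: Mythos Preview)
Your energy identity is correctly derived, and the approach is a natural first attempt. It is, however, \emph{not} the route the paper takes, and there is a real gap in the regime the paper most cares about, namely $r_0$ arbitrarily close to $2m_0$.

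The obstruction is sharper than you state. The potential $Q(r)=\dfrac{8m_0(r-m_0)-m_0^2}{r(r-2m_0)}$ blows up like $\dfrac{7m_0}{2(r-2m_0)}$ as $r\downarrow 2m_0$, so for \emph{every} fixed $\ell$ one has $Q(r_0)>\ell(\ell+1)$ once $r_0$ is close enough to $2m_0$; your claim that $\ell\ge 2$ is handled by a pointwise bound is therefore false in that range. Worse, the full quadratic form
\[
\mathcal{Q}(f)=\int_{r_0}^{\infty}P(f')^{2}\,dr+2r_0 f(r_0)^{2}+\int_{r_0}^{\infty}\bigl[\ell(\ell+1)-Q(r)\bigr]f^{2}\,dr
\]
is genuinely indefinite when $r_0$ is near $2m_0$: a trial function equal to $1$ on $[r_0,4m_0]$ and cut off shortly after gives $\int P(f')^2+2r_0 f(r_0)^2=O(m_0)$ while the potential term contributes $-c\,m_0\log\frac{1}{r_0-2m_0}\to -\infty$. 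The natural integrating factor $\mu=(1-2m_0/r)^{4}$ that kills the $8m_0 f'$ term does not rescue this: it also damps the good boundary term to $O\!\big((r_0-2m_0)^5\big)$, and one checks that $\mu\bigl[\ell(\ell+1)-Q\bigr]$ is still negative on a fixed interval (around $r\approx 3m_0$--$4m_0$) with size independent of $r_0$, so the same trial function makes $\mathcal{Q}_\mu(f)<0$. A Hardy inequality with constants independent of $r_0-2m_0$ cannot absorb a term that diverges logarithmically in $r_0-2m_0$. In short, no choice of static weight $w$ makes the resulting quadratic form coercive uniformly down to $r_0\to 2m_0^+$, so the proposed fixes do not close.

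The paper avoids this entirely by a rescaling--absorption argument rather than an energy estimate. One sets $h(r)=f'(r)-\tfrac{2}{r_0}f(r)$, so that $h(r_0)=0$, and observes that the perturbation terms in the equation are \emph{lower order}: they carry an extra power of $r^{-1}$ relative to the principal Sturm--Liouville part $(Pf')'-\ell(\ell+1)f$. Dilating about $r_0$ by $h_n(r)=h(r_0+n(r-r_0))$ makes the principal part scale homogeneously while the lower-order remainder acquires a factor $1/n$; applying the a~priori estimate for the Dirichlet operator $\Phi^{\mathcal{D}}_{m\ell}$ to $h_n$ and bounding $f_n$ back in terms of $h_n$ then gives $\|h_n\|\le \tfrac{C}{n}\|h_n\|$, whence $h\equiv 0$ for $n$ large, and the decay of $f$ forces $f\equiv 0$. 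This argument uses only that the perturbation is subprincipal, not any sign, which is exactly what your energy method lacks.
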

\begin{proof}

For $\ell \in\Z_{>0}$, define $F_{\ell}$ as the function on $[r_0,\infty)$ defined by 

\[F_{\ell} (r) = \sqrt{r(r-2m_0)} f_{\ell}(r)\]

Letting $C_{\ell}:= F_{\ell}(r_0)$, we find that $F_{\ell}$ satisfies: 

\begin{align} \label{sys2F}
\begin{cases}
\left[ \frac{(r-2m_0)^2}{r^2} F_{\ell}'(r)\right]' = \ell(\ell+1) \frac{r-2m_0}{r^3} F_{\ell}(r), \qquad \text{on $[r_0,\infty)$} \lv\\
F_{\ell}'(r_0) = \frac{2(r_0-3m_0)}{r_0(r_0-2m_0)} C_{\ell} \\
F_{\ell}(r_0) = C_{\ell}
\end{cases}
\end{align}

If $C_{\ell} = 0$, then $F_{\ell} = 0$ implying that $f_{\ell} = 0$ as needed. We will assume that $C_{\ell} \neq 0$ and show that $r^{-1}F_{\ell}$ is unbounded, which leads to a contradiction as we know that $\lim_{r\to \infty} f_{\ell}(r) = 0$. 

\vv

By dividing by $C_{\ell}$ and replacing $F_{\ell}/C_{\ell}$ with $F_{\ell}$, we can assume without loss of generality that $C_{\ell} = 1$. We will prove that $F_{\ell}(r) > 0 $ for $r \in [r_0,\infty)$ and $\lim_{r\to \infty} r^{-1}F_{\ell}(r) = \infty$ by induction on $\ell$.

\vv

For the case $\ell = 1$, we find that the unique solution for the above initial value problem is

\begin{equation}
F_1(r) = \frac{r_0-2m_0}{r_0^3} \frac{r^3}{r-2m_0}
\end{equation}

which satisfies $F_{1}(r) > 0 $ for $r \in [r_0,\infty)$  and $\lim_{r\to \infty} r^{-1}F_1(r) = \infty$ as needed. 

\vv

Now suppose that $F_{\ell}(r) > 0 $ for $r \in [r_0,\infty)$ and $\lim_{r\to \infty} r^{-1}F_{\ell} (r) = \infty$ for some $\ell \geq 1$. We define the function $H_{\ell}:= F_{\ell+1} - F_{\ell}$ on $[r_0,\infty)$, which satisfies the following system: 

\begin{align} \label{sys2H}
\begin{cases}
\left[ \frac{(r-2m_0)^2}{r^2} H_{\ell}'(r)\right]' = (\ell+1) \frac{r-2m_0}{r^3} \Big( (\ell+2)H_{\ell}(r) + 2F_{\ell}(r)\Big) , \qquad \text{on $[r_0,\infty)$} \lv\\
H_{\ell}'(r_0) =0 \\
H_{\ell}(r_0) = 0
\end{cases}
\end{align}

Plugging in $r=r_0$ in the ODE and using the fact that $F_{\ell}(r_0) =1>0$, it follows that $\frac{(r-2m_0)^2}{r^2} H_{\ell}'(r)$ is increasing near $r_0$, implying that $H_{\ell}'$, and hence $H_{\ell}$, are positive near $r=r_0$. Using a standard bootstrap method (see Lemma 5.21 in \cite{ahmed}) and using the fact that $F_{\ell}(r) >0$ for $r\in [r_0, \infty)$, we have that $H_{\ell}(r) >0$ for $r\in [r_0,\infty)$. This implies that $F_{\ell+1}(r) = H_{\ell}(r) + F_{\ell}(r) >0$ for $r\in [r_0,\infty)$, and in particular, $\lim_{r\to \infty} r^{-1}F_{\ell+1} (r) = \infty$ as needed.

\end{proof}

We have then finally shown that $\Phi_{m\ell}^{\mathcal{G}}$ is an isomorphism from $H^2_{\delta}([r_0,\infty)) \cap C^2_{\delta}[r_0,\infty)$ to $L^2_{\delta-2}([r_0,\infty)) \cap C^0_{\delta-2}[r_0,\infty)$. This shows that there exists a unique solution $b_{m\ell}$ to the system in \eqref{sys2}.

\vv

We will now show that the function $b$ on $M$, with spherical harmonic coefficients $\sqrt{r(r-2m_0)} b_{m\ell}$ indeed lives in $\AHC{2}{k+2}{\delta+1}(M)$.

\vv
By the estimates in equations \eqref{estimate-f H N} and \eqref{estimate-f C N}, we estimate

\begin{align} 
&\normmmH{b''_{m\ell}}{0}{\delta-2}^2 + \left[ 1+\ell(\ell+1)\right] \normmmH{b'_{m\ell}(r)}{0}{\delta-1}^2 + \left[ 1+\ell(\ell+1)\right]^2 \normmmH{b_{m\ell}(r)}{0}{\delta}^2 \nonumber \\& \lesssim \normmmH{b'_{m\ell}(r)}{0}{\delta-2}^2 + \normmmH{b_{m\ell}(r)}{0}{\delta-1}^2\\  &\quad + \normmmH{\alpha'_{m\ell}}{0}{\delta-2}^2 + \normmmH{\alpha_{m\ell}}{0}{\delta-1}^2 + \normmmH{\zeta_{m\ell}}{0}{\delta-2}^2 + \left[ 1+\ell(\ell+1)\right]^{1/2}|\xi_{m\ell}|^2
\end{align}

\begin{align} 
&\normmmC{b''_{m\ell}}{0}{\delta-2}^2 + \left[ 1+\ell(\ell+1)\right] \normmmC{b'_{m\ell}(r)}{0}{\delta-1}^2 + \left[ 1+\ell(\ell+1)\right]^2 \normmmC{b_{m\ell}(r)}{0}{\delta}^2 \nonumber \\ &\lesssim  \normmmC{b'_{m\ell}(r)}{0}{\delta-2}^2 + \normmmC{b_{m\ell}(r)}{0}{\delta-1}^2\\ &\quad \normmmC{\alpha'_{m\ell}}{0}{\delta-2}^2 + \normmmC{\alpha_{m\ell}}{0}{\delta-1} + \normmmC{\zeta_{m\ell}}{0}{\delta-2} + \left[ 1+\ell(\ell+1)\right]|\xi_{m\ell}|^2
\end{align}

After absorbing the $\norm{b_{m\ell}}$, $\norm{b'_{m\ell}}$ terms to the left side, multiplying by $[1+\ell(\ell+1)]^k$ and summing over $\ell$ and $m$, we deduce using proposition \ref{sph-prop} that 
\begin{equation}
\norm{b}_{\AHC{2}{k+2}{\delta+1}}^2 \lesssim \norm{\alpha}_{\AHC{1}{k+1}{\delta}}^2 + \norm{\zeta}_{\AHC{1}{k}{\delta-1}}^2 + \norm{\Lambda}_{\Omega^k(S^2)}^2
\end{equation}
implying that $b \in \AHC{2}{k+2}{\delta+1}(M)$ as needed.

\section*{Step 2: Solving for $a'_{m\ell}(r_0)$, $a_{m\ell}(r_0)$, and $c_{m\ell}(r_0)$}

For $\ell = 0$, we compute $c_{00}(r_0)$ using the system in \eqref{sys1} to be 

\begin{equation}
c_{00}(r_0) = -\frac{1}{2(r_0-2m_0)} h_{00}
\end{equation}

satisfying the estimate 

\begin{equation}
|c_{00}(r_0)|^2 \lesssim |h_{00}|^2
\end{equation}

 \vv
 
 We now consider the case $\ell\geq 1$. The ODE in \eqref{sys3} evaluated at $r_0$ together with the two boundary conditions in \eqref{sys3} determine uniquely $a_{m\ell}'(r_0)$, $a_{m\ell}(r_0)$ and $c_{m\ell}(r_0)$ in terms of $\beta'_{m\ell}(r_0)$, $\lambda_{m\ell}$ and $h_{m\ell}$ to be

\begin{equation}
a_{m\ell}(r_0) = \frac{-h_{m\ell} + (r_0-2m_0) \beta'_{m\ell}(r_0) - (r_0-2m_0)\lambda_{m\ell}}{\underbrace{\ell(\ell+1) +2- \frac{6m_0}{r_0}}_{>0 \text{  for $\ell\geq 1, r_0>2m_0$}}}
\end{equation}

\begin{equation}
c_{m\ell}(r_0) = \frac{(r_0-3m_0)}{r_0(r_0-2m_0)}a_{m\ell}(r_0) + \frac{1}{2} (-\beta'_{m\ell}(r_0) + \lambda_{m\ell})
\end{equation}

\begin{equation}
a'_{m\ell}(r_0) = c_{m\ell}(r_0) + \beta'_{m\ell}(r_0)
\end{equation}

Note that we have used the fact that $\beta_{m\ell}(r_0) = 0$. We also have the following estimates:

\begin{align}
|a_{m\ell}(r_0)|^2 \lesssim \frac{1}{[1+\ell(\ell+1)]^2} (  |\lambda_{m\ell}|^2 + |\beta'_{m\ell}(r_0)|^2 + |h_{m\ell}|^2)
\end{align}

\begin{align} 
|c_{m\ell}(r_0)|^2  & \lesssim |a_{m\ell}(r_0)|^2 +  |\lambda_{m\ell}|^2 + |\beta'_{m\ell}(r_0)|^2\\
& \lesssim |\lambda_{m\ell}|^2 + |\beta'_{m\ell}(r_0)|^2 + \frac{1}{[1+\ell(\ell+1)]^2}|h_{m\ell}|^2
\end{align}

\begin{align}
|a'_{m\ell}(r_0)|^2 &\lesssim  |c_{m\ell}(r_0)|^2 + |\beta'_{m\ell}(r_0)|^2\\
&\lesssim |\lambda_{m\ell}|^2 + |\beta'_{m\ell}(r_0)|^2 + \frac{1}{[1+\ell(\ell+1)]^2}|h_{m\ell}|^2
\end{align}

\section*{Step 3: Solving for $c$}

In light of proposition \ref{ode-prop}, the ODE for $c_{00}$ in \eqref{sys1} and for $c_{m\ell}(r_0)$ in \eqref{sys3} is uniquely solvable in $H^2_{\delta}([r_0,\infty)) \cap C^2_{\delta}[r_0,\infty)$ with estimates: 

\begin{align} 
&\normmmH{c_{m\ell}''}{0}{\delta-2}^2 + \left[ 1+\ell(\ell+1)\right] \normmmH{c_{m\ell}'}{0}{\delta-1}^2 + \left[ 1+\ell(\ell+1)\right]^2 \normmmH{c_{m\ell}}{0}{\delta}^2 \\ &\lesssim  \left[ 1+\ell(\ell+1)\right]^{3/2}|c_{m\ell}(r_0)|^2 \\
&\lesssim \left[ 1+\ell(\ell+1)\right]^{3/2} \bigg( |\lambda_{m\ell}|^2 + |\beta'_{m\ell}(r_0)|^2 + \frac{1}{[1+\ell(\ell+1)]^2}|h_{m\ell}|^2 \bigg)
\end{align}
and 
\begin{align} 
&\normmmC{c_{m\ell}''}{0}{\delta-2}^2 + \left[ 1+\ell(\ell+1)\right] \normmmC{c_{m\ell}'}{0}{\delta-1}^2 + \left[ 1+\ell(\ell+1)\right]^2 \normmmC{c_{m\ell}}{0}{\delta}^2 \\  &\lesssim  \left[ 1+\ell(\ell+1)\right]^2|c_{m\ell}(r_0)|^2\\
&\lesssim   \left[ 1+\ell(\ell+1)\right]^2\bigg( |\lambda_{m\ell}|^2 + |\beta'_{m\ell}(r_0)|^2 + \frac{1}{[1+\ell(\ell+1)]^2}|h_{m\ell}|^2 \bigg)
\end{align}

Multiplying by $[1+\ell(\ell+1)]^{k-1}$ and summing over $\ell$ and $m$, we deduce using proposition \ref{sph-prop} that 
\begin{equation}
\norm{c}_{\AHC{2}{k+1}{\delta}}^2 \lesssim  \norm{\beta}_{\AHC{2}{k+2}{\delta+1}}^2+\norm{\Lambda}_{\Omega^k(S^2)}^2 + \norm{h}_{H^k(S^2)}^2
\end{equation}
implying that $c \in \AHC{2}{k+1}{\delta}(M)$ as needed.

\vv

\section*{Step 4: Solving for $a$}

For $\ell \geq 1$ and $-\ell\leq m \leq \ell$, define the function $\bar c_{m\ell}$ on $[r_0,\infty)$ by 

\begin{equation}
\bar c_{m\ell}(r) = \int_{r_0}^r c_{m\ell}(s) ds
\end{equation}

By integrating the ODE that $c_{m\ell}$ satisfies in \eqref{sys3}, we compute

\begin{equation}
\bar c_{m\ell}(r) = \frac{1}{\ell(\ell+1)} \Big( r(r-2m_0) c'_{m\ell}(r) - r_0(r_0-2m_0) c'_{m\ell}(r_0)\Big) 
\end{equation}

implying the following estimates:

\begin{align} 
& \normmmH{\bar c_{m\ell}}{0}{\delta+1}^2 \\
&\lesssim \frac{1}{\left[ 1+\ell(\ell+1)\right]^2} (\normmmH{ c_{m\ell}'}{0}{\delta-1}^2 + |c_{m\ell}'(r_0)|^2)\\
&\lesssim \frac{1}{\left[ 1+\ell(\ell+1)\right]^2} (\normmmH{ c_{m\ell}'}{0}{\delta-1}^2 + \ell(\ell+1)|c_{m\ell}(r_0)|^2)\\
\end{align}
and 
\begin{align} 
&\normmmC{\bar c_{m\ell}}{0}{\delta+1}^2 \\
&\lesssim \frac{1}{\left[ 1+\ell(\ell+1)\right]^2} (\normmmC{ c_{m\ell}'}{0}{\delta-1}^2 + |c_{m\ell}'(r_0)|^2)\\
&\lesssim \frac{1}{\left[ 1+\ell(\ell+1)\right]^2} (\normmmC{ c_{m\ell}'}{0}{\delta-1}^2 + \ell(\ell+1)|c_{m\ell}(r_0)|^2)\\
\end{align}

where we used the Rellich (Neumann-tangential) boundary inequality $|c'_{m\ell}(r_0)|\lesssim [\ell(\ell+1)]^{1/2} |c_{m\ell}(r_0)|$ (see also section 3, equation 3.26 and 3.27 in \cite{ahmed}). 
\vv

Now we solve for $a_{m\ell}$. The ODE for $a_{m\ell}$ in \eqref{sys3} is uniquely solvable in $H^2_{\delta+1}([r_0,\infty)) \cap C^2_{\delta+1}[r_0,\infty)$ with solution 

\begin{equation}
a_{m\ell}(r) = \bar c_{m\ell}(r) + \beta_{m\ell}(r) + a_{m\ell}(r_0)
\end{equation}

satisfying the estimates: 

\begin{align} 
&\normmmH{a_{m\ell}''}{0}{\delta-1}^2 + \left[ 1+\ell(\ell+1)\right] \normmmH{a_{m\ell}'}{0}{\delta}^2 + \left[ 1+\ell(\ell+1)\right]^2 \normmmH{a_{m\ell}}{0}{\delta+1}^2 \\ &\lesssim  \normmmH{c'_{m\ell}}{0}{\delta-1}^2 + \normmmH{\beta''_{m\ell}}{0}{\delta-1}^2 + [1+\ell(\ell+1)] (\normmmH{c_{m\ell}}{0}{\delta}^2 +  \normmmH{\beta'_{m\ell}}{0}{\delta}^2) \nonumber\\ &\qquad + [1+\ell(\ell+1)]^2( \normmmH{\bar c_{m\ell}}{0}{\delta+1}^2 + \normmmH{\beta_{m\ell}}{0}{\delta+1}^2) + [1+\ell(\ell+1)]^2 |a_{m\ell}(r_0)|^2   \\
&\lesssim \normmmH{\beta''_{m\ell}}{0}{\delta-1}^2 + [1+\ell(\ell+1)]  \normmmH{\beta'_{m\ell}}{0}{\delta}^2 + [1+\ell(\ell+1)]^2 \normmmH{\beta_{m\ell}}{0}{\delta+1}^2 \nonumber \\ 
&\qquad + [1+\ell(\ell+1)]^{1/2} (|\lambda_{m\ell}|^2 + |\beta'_{m\ell}(r_0)|^2) + |h_{m\ell}|^2 
\end{align}
and 
\begin{align} 
&\normmmC{a_{m\ell}''}{0}{\delta-1}^2 + \left[ 1+\ell(\ell+1)\right] \normmmC{a_{m\ell}'}{0}{\delta}^2 + \left[ 1+\ell(\ell+1)\right]^2 \normmmC{a_{m\ell}}{0}{\delta+1}^2 \\ &\lesssim  \normmmC{c'_{m\ell}}{0}{\delta-1}^2 + \normmmC{\beta''_{m\ell}}{0}{\delta-1}^2 + [1+\ell(\ell+1)] (\normmmC{c_{m\ell}}{0}{\delta}^2 +  \normmmC{\beta'_{m\ell}}{0}{\delta}^2) \nonumber\\ &\qquad + [1+\ell(\ell+1)]^2( \normmmC{\bar c_{m\ell}}{0}{\delta+1}^2 + \normmmC{\beta_{m\ell}}{0}{\delta+1}^2) + [1+\ell(\ell+1)]^2 |a_{m\ell}(r_0)|^2   \\
&\lesssim \normmmC{\beta''_{m\ell}}{0}{\delta-1}^2 + [1+\ell(\ell+1)]  \normmmC{\beta'_{m\ell}}{0}{\delta}^2 + [1+\ell(\ell+1)]^2 \normmmC{\beta_{m\ell}}{0}{\delta+1}^2 \nonumber \\ 
&\qquad + [1+\ell(\ell+1)] (|\lambda_{m\ell}|^2 + |\beta'_{m\ell}(r_0)|^2) + |h_{m\ell}|^2 
\end{align}

where we have repeatedly used the estimates for $c_{m\ell}$, $\bar c_{m\ell}$, $a_{m\ell}(r_0)$ that have already been established. 

\vv

Multiplying by $[1+\ell(\ell+1)]^k$ and summing over $\ell$ and $m$, we deduce using proposition \ref{sph-prop} that 
\begin{equation}
\norm{a}_{\AHC{2}{k+2}{\delta+1}}^2 \lesssim \norm{\beta}_{\AHC{2}{k+2}{\delta+1}}^2 + \norm{\Lambda}_{\Omega^k(S^2)}^2 + \norm{h}_{H^k(S^2)}^2
\end{equation}
implying that $a \in \AHC{2}{k+2}{\delta+1}(M)$ as needed.

\bibliographystyle{amsplain}
\bibliography{refs}

\section*{Data Availability Statement}
Data sharing is not applicable to this article as no datasets were generated or analyzed in this paper.

\section*{Declaration} 

\textbf{Conflict of Interest:} The authors have no Conflict of interest to declare that are relevant to the content of this
article.

\end{document}